\setlist{nolistsep}
\DeclareMathOperator{\rank}{rank}
\DeclareMathOperator{\maxrank}{maxrank}
\DeclareMathOperator{\mrank}{\maxrank}
\DeclareMathOperator{\argmax}{argmax}
\newtheorem{theorem}{Theorem}[section]
\newtheorem{lemma}[theorem]{Lemma}
\newtheorem{claim}[theorem]{Claim}
\declaretheorem[sibling=theorem, style=definition]{definition}
\newenvironment{proofof}[1]{\begin{proof}[Proof of #1]}{\end{proof}}
\newcommand{\ignore}[1]{}
\renewcommand{\cal}[1]{\mathcal #1}
\newcommand{\R}{\mathbb R}
\newcommand{\Z}{{\mathbb Z}}
\newcommand{\eps}{\varepsilon}
\newcommand{\cI}{{\cal I}}
\newcommand{\cL}{{\cal L}}
\newcommand{\cM}{{\cal M}}
\newcommand{\cN}{{\cal N}}
\newcommand{\cP}{{\cal P}}
\newcommand{\cQ}{{\cal Q}}
\newcommand{\bp}{\mathbf{p}}
\newcommand{\bq}{\mathbf{q}}
\newcommand{\ceil}[1]{\ensuremath{\left\lceil {#1} \right\rceil}}
\newcommand{\floor}[1]{\ensuremath{\left\lfloor {#1} \right\rfloor}}
\DeclareMathOperator{\Exp}{\mathbf{E}}
\renewcommand{\Pr}{\mathbf{Pr}}
\newcommand{\mySec}[1]{\hyperref[#1]{Section~\ref*{#1}}} 
\newcommand{\twoSecs}[2]{\hyperref[#1]{Sections~\ref*{#1}} and \hyperref[#2]{~\ref*{#2}}} 
\newcommand{\myApp}[1]{\hyperref[#1]{Appendix~\ref*{#1}}} 
\newcommand{\Sec}[1]{\hyperref[sec:#1]{Section~\ref*{sec:#1}}} 
\newcommand{\Eqn}[1]{\hyperref[eq:#1]{(\ref*{eq:#1})}} 
\newcommand{\Fig}[1]{\hyperref[fig:#1]{Fig.\,\ref*{fig:#1}}} 
\newcommand{\Tab}[1]{\hyperref[tab:#1]{Tab.\,\ref*{tab:#1}}} 
\newcommand{\Thm}[1]{\hyperref[thm:#1]{Theorem\,\ref*{thm:#1}}} 
\newcommand{\Lem}[1]{\hyperref[lem:#1]{Lemma\,\ref*{lem:#1}}} 
\newcommand{\Prop}[1]{\hyperref[prop:#1]{Prop.~\ref*{prop:#1}}} 
\newcommand{\Cor}[1]{\hyperref[cor:#1]{Corollary~\ref*{cor:#1}}} 
\newcommand{\Def}[1]{\hyperref[def:#1]{Definition~\ref*{def:#1}}} 
\newcommand{\Alg}[1]{\hyperref[alg:#1]{Alg.~\ref*{alg:#1}}} 
\newcommand{\Ex}[1]{\hyperref[ex:#1]{Ex.~\ref*{ex:#1}}} 
\newcommand{\Clm}[1]{\hyperref[clm:#1]{Claim~\ref*{clm:#1}}} 
\colorlet{shadecolor}{blue!20}
\def\bsucc{\succ}
\def\bsucceq{\succeq}
\newcommand{\sm}{\ensuremath{\setminus}}
\newcommand{\es}{\ensuremath{\emptyset}}
\newcommand{\sse}{\subseteq}
\newcommand{\al}{\ensuremath{\alpha}}
\newcommand{\sg}{\ensuremath{\sigma}}
\newcommand{\Sg}{\ensuremath{\Sigma}}
\newcommand{\Gm}{\ensuremath{\Gamma}}
\newcommand{\e}{\ensuremath{\epsilon}}
\newcommand{\Dt}{\ensuremath{\Delta}}
\newcommand{\ve}{\ensuremath{\varepsilon}}
\newcommand{\tx}{\ensuremath{\tilde x}}
\newcommand{\hi}{\ensuremath{\hat i}}
\newcommand{\ho}{\ensuremath{\hat o}}
\newcommand{\mat}{\ensuremath{M}}
\newcommand{\hist}{\ensuremath{\mathrm{hist}}}
\newcommand{\pos}[2]{\ensuremath{\mathrm{pos}({#1},{#2})}}
\newcommand{\alt}[2]{\ensuremath{\mathrm{alt}({#1},{#2})}}
\newcommand{\univt}{\ensuremath{\mathsf{UnivT}}\xspace}
\newcommand{\strongt}{\ensuremath{\mathsf{StrongT}}\xspace}
\newcommand{\weakt}{\ensuremath{\mathsf{WeakT}}\xspace}
\newcommand{\lext}{\ensuremath{\mathsf{LexT}}\xspace}
\newcommand{\mma}{\ensuremath{\mathsf{MaxMatch}}\xspace}
\newcommand{\pl}{\ensuremath{\mathsf{Pl}}}
\def\1{\mathbf 1}
\title{Welfare Maximization and Truthfulness in Mechanism Design
with Ordinal Preferences%
\footnote{Appeared in the Proceedings of the 5th Innovations in Theoretical Computer
  Science conference, 2014~\cite{ChakrabartyS14}.}}
\author{
         Deeparnab Chakrabarty\thanks{{\tt dechakr@microsoft.com.}
         Microsoft Research, India.} 
\and
         Chaitanya Swamy\thanks{{\tt cswamy@math.uwaterloo.ca}.
         Dept. of Combinatorics and Optimization, Univ. Waterloo, Waterloo, ON N2L 3G1.
         Supported in part by NSERC grant 327620-09, an NSERC Discovery Accelerator
         Supplement Award, and an Ontario Early Researcher Award.} 
}
\date{}
\begin{document}

\maketitle

\begin{abstract}
We study mechanism design problems in the {\em ordinal setting} wherein the
preferences of agents are described by orderings over outcomes, as opposed to specific
numerical values associated with them. This setting is relevant when agents can compare
outcomes, but aren't able to evaluate precise utilities for them. Such a situation arises
in diverse contexts including voting and matching markets. 

Our paper addresses two issues that arise in ordinal mechanism design. To design social
welfare maximizing mechanisms, one needs to be able to quantitatively measure the welfare
of an outcome which is not clear in the ordinal setting. Second, since the impossibility
results of Gibbard and Satterthwaite~\cite{Gibbard73,Satterthwaite75} force one to move to
randomized mechanisms, one needs a more nuanced notion of truthfulness. 

We propose {\em rank approximation} as a metric for measuring the quality of an
outcome, which allows us to evaluate mechanisms based on worst-case performance, 
and {\em lex-truthfulness} as a notion of truthfulness for randomized ordinal
mechanisms. Lex-truthfulness is stronger than notions studied in the literature, and yet
flexible enough to admit a rich class of mechanisms {\em circumventing classical
impossibility results}. 
We demonstrate the usefulness of the above notions by devising lex-truthful
mechanisms achieving good rank-approximation factors, both in the general ordinal
setting, as well as structured settings such as {\em (one-sided) matching markets}, and
its generalizations, {\em matroid} and {\em scheduling} markets.
\end{abstract}

\section{Introduction}\label{sec:intro} \label{intro}

A central problem in social choice theory and mechanism design is that of choosing a
``good'' outcome by aggregating individuals' private preferences over outcomes, where 
individuals are rational agents. 
A {\em mechanism} implementing a {\em social choice function} (SCF) needs to elicit the 
preferences of agents in a truthful fashion, that is, in a way such that no agent may
benefit by misreporting his preferences. 

In this paper, we study mechanism-design problems in {\em ordinal} settings, wherein the
preferences are described by orderings over the set of outcomes. 
This is in contrast with the {\em cardinal} setting, wherein an agent specifies a 
{\em value} to each outcome (which determines his preferences). 
Ordinal settings reduce the ``informational burden'' on an agent in the sense that he only
needs to be able to compare outcomes rather than assign values to outcomes
justifying his preferences. It is not hard to imagine settings where the former
comparison task is easier, and more aptly describes the situation: 
examples span the spectrum between electoral settings and the 
setting of allocating dormitory rooms to students. 

Two immediate issues arise in ordinal mechanism design. A typical mechanism-design goal is
to maximize {\em social welfare}, but in order to approach this goal in ordinal settings,
one needs to first be able to {\em quantitatively} measure the social-welfare value of an
outcome. Second, 
since the Gibbard-Satterthwaite (GS) impossibility result~\cite{Gibbard73,Satterthwaite75} 
precludes non-trivial deterministic truthful mechanisms, one is forced to move to
{\em randomized mechanisms} for which one needs a more nuanced notion of truthfulness.  

\subsection{Our contributions}
We propose a novel framework for welfare-maximization and truthfulness for randomized
ordinal mechanisms, and devise various near-optimal mechanisms in this framework. 
Our contributions are threefold.
\begin{list}{\arabic{enumi})}{\usecounter{enumi} \topsep=0.25ex \itemsep=0ex
    \addtolength{\leftmargin}{-3ex}}

\item 
We introduce a metric called {\em rank approximation} for measuring the quality of
an outcome, which in turn allows us to evaluate mechanisms in terms of their worst-case
performance. We show that rank approximation is a robust notion that is appealing,
and can be motivated, from various perspectives. 

\item  
We propose a truthfulness notion called {\em lex-truthfulness} for randomized ordinal  
mechanisms. This is stronger than a notion studied in the literature, and yet 
flexible enough that it admits a rich class of mechanisms 
{\em bypassing classical impossibility results}.  
We provide a characterization result for lex-truthfulness, which we leverage to obtain 
lex-truthful mechanisms for various ordinal settings.   
We believe that this characterization will find application beyond the specific
applications that we consider in this paper.

\item 
We demonstrate the usefulness of the above two notions by devising lex-truthful mechanisms
achieving good rank-approximation factors both in the general ordinal setting, as well as 
structured settings such as {\em (one-sided) matching markets}, and its generalizations,
{\em matroid} and {\em scheduling} markets.
\end{list}
We now elaborate on our contributions. 
Let $n$ and $m$ denote the number of agents and number of outcomes respectively,
and $\succ_j$ denote agent $j$'s ordering over outcomes, which we assume is strict
and complete (i.e., for any two outcomes $o, o'$, either $o\succ_j o'$ or vice versa).

\paragraph{Rank approximation (\mySec{rankapx})} 
We say that an outcome $o$ has {\em rank approximation} $\al$ for preference profile
$\bsucc$, if  for {\em every} position $r$, 
the number of agents having $o$ as one of their top-$r$ outcomes 
is at least $\frac{1}{\al}\cdot\mrank_r(\bsucc)$, where $\mrank_r(\bsucc)$ denotes 
$\max_{\ho}$(number of agents having $\ho$ as one of their top-$r$ outcomes). 
An {\em $\al$-rank-approximation mechanism} is one that always returns an
$\al$-rank-approximate outcome. 
While the requirement
of simultaneously approximating $\mrank_r(\bsucc)$ for all $r$ seems too
stringent, and even the {\em existence} of an $\al$-rank-approximate outcome $o$, for
non-trivial $\al$, 
seems doubtful,  
promisingly (as we elaborate later), we can achieve a 
$2$-rank-approximation for matching and matroid markets, and a randomized 
$O(\log n)$-rank-approximation for general ordinal settings.

Rank approximation is a natural, purely ordinal notion with various desirable properties. 
Consider any cardinal-utility profile $\vec{U}=(U_1,\ldots,U_n)$, where each $U_j$ is
consistent with $\succ_j$, that is, $U_j(o)>U_j(o')$ iff $o\succ_j o'$. Call such a
utility profile {\em homogeneous}, if for all $r=1,\ldots,m$, all $U_j$s assign the same
value to their $r$-th ranked outcome. 
An $\al$-rank-approximation outcome $o$ for $\bsucc$ is such that for 
{\em any} consistent homogeneous utility profile $\vec{U}$, 
its social welfare, $\sum_{j=1}^n U_j(o)$, for $\vec{U}$ 
is at least a $\frac{1}{\al}$-fraction of the optimum social welfare 
for $\vec{U}$.  
Thus, an $\al$-rank-approximation mechanism {\em simultaneously yields an
$\al$-approximation to the optimum social-welfare for all consistent homogeneous utility   
profiles} (Theorem~\ref{homutil}). 

Consistent homogeneous utilities are also known as {\em scoring rules}~\cite{Young75}
(also sometimes called positional scoring roles). 
A scoring rule assigns a score to each position and returns the outcome with highest total 
score; 
a prominent example is the {\em Borda rule}, 
which gives a score of $m-k$ to the $k$-th position. 
An outcome is $\alpha$-approximate with respect to a scoring rule, if its score is at
least a $\frac{1}{\alpha}$-fraction of the score of any other outcome. 
Translated to this setting, we obtain 
that an $\al$-rank-approximation mechanism 
{\em simultaneously achieves an $\al$-approximation to all scoring rules}. 
In other words, given an $\al$-rank-approximation mechanism $\cM$, one need not be overly
concerned about which scoring rule is most suited to the problem, since $\cM$ guarantees
an $\al$-approximation to all scoring rules!

To place these simultaneous-approximation bounds 
in perspective, it is useful to consider
an even stronger notion: 
say that a mechanism has ``strong
welfare factor'' $\alpha$, if for every consistent (even non-homogeneous) cardinal-utility
profile $\vec{U}$, the mechanism returns an $\al$-approximation to the optimum social
welfare for $\vec{U}$. 
Not surprisingly, 
this notion is too strong: it is easy to show that no mechanism (deterministic or
randomized) can have any non-trivial strong welfare factor, even for matching markets. 

\paragraph{Lex-truthfulness (\mySec{lextruth})} 
The classic impossibility results of~\cite{Gibbard73,Satterthwaite75}
show that the space of deterministic truthful mechanisms in general ordinal settings is
extremely limited, forcing the move to randomized mechanisms.
When seeking to define a notion of truthfulness for ordinal randomized mechanisms, one
immediately encounters the following issue: {\em how should one extend an agent's
preferences over outcomes to preferences over distributions of outcomes?} 
The usual approach in the economics literature is to use the 
{\em stochastic dominance} relation. Since this 
does not induce a total order over distributions, one obtains two notions of
truthfulness: (i) {\em strong truthfulness}~\cite{Gibbard78}, where the truth-telling
distribution stochastically dominates any distribution obtained via a misreport; and 
(ii) {\em weak truthfulness}~\cite{PS86,BM01}, where the truth-telling distribution is not
stochastically dominated by any distribution obtained via a misreport.
Gibbard~\cite{Gibbard78} generalized~\cite{Gibbard73,Satterthwaite75} 
to show that the space of strongly-truthful mechanisms 
in general ordinal settings is also limited, leaving weak-truthfulness as the only viable
notion of truthfulness for randomized mechanisms. 

We propose a new notion of truthfulness sandwiched (strictly) between the above two
notions. A distribution $\bp$ lex-dominates a distribution $\bq$ with
respect to ordering $\succ$, if, when considering outcomes in decreasing order of their
ranking in $\succ$,
at the first outcome $o$ where $\bp$ and $\bq$ differ, $\bp$ assigns a
higher probability to $o$ than $\bq$. 
Note that lex-dominance induces a {\em total order} on distributions.  
We say that a mechanism is {\em lex-truthful} (LT) if
no distribution obtained by a misreport lex-dominates the truth-telling distribution.%
\footnote{We have recently learned that this notion was independently proposed by
  Cho~\cite{Cho12}, who called it DL-strategyproofness.}  

We show that lex-truthfulness provides us with ample flexibility in mechanism design  
and allows us to {\em circumvent Gibbard's impossibility theorem}. 
Call a social choice function (SCF) $f$ 
{\em fully lex-truthfully (LT) implementable} if for all $\eps>0$, there exists a
lex-truthful mechanism that agrees with $f$ with probability at least $(1-\eps)$ on every
preference profile. 
We isolate a property of an SCF, that we call {\em pseudomonotonicity}, that 
{\em completely characterizes} LT-implementability of the SCF (Theorem~\ref{pseudo}).  
Roughly speaking, an SCF is pseudomonotone if for any preference profile, if an agent
$j$ changes his ordering without altering his top $k$ choices, then the new outcome cannot 
both be a better outcome for $j$ and a top-$(k+1)$ outcome for $j$ (see
Definition~\ref{pseudodefn}).  

This characterization turns out to be instrumental in making lex-truthfulness an amenable
notion to work with, 
and opens up a host of SCFs to full LT-implementation.
We show that various rank-approximation SCFs that we devise for matching, matroid, and
scheduling markets---including the 2-rank-approximation mechanism for matching markets
mentioned earlier---are pseudomonotone. 
For general ordinal settings, we identify a rich class of pseudomonotone SCFs which
includes the {\em plurality scoring rule}. Thus, {\em all} of these SCFs are fully
LT-implementable. We view the characterization of lex-truthfulness via pseudomonotonicity
as one of our main contributions, which we believe will find further applications. 

\paragraph*{Matching, matroid, and scheduling markets (\twoSecs{matching}{sched})} 
In addition to general ordinal mechanism-design settings, we also consider various
structured settings, and obtain lex-truthful mechanisms with good
rank-approximation factors.

Our most-compelling results are for {\em matching markets} (\mySec{matching}), which
are one of the most well-studied ordinal settings (see, e.g., the
surveys~\cite{SU08,AS10}). 
Here, we have $n$ agents and $m$ items, and outcomes are matchings of agents to
items. Each agent has a strict preference over items, which induces preferences
over matchings based on the item the agent is assigned in a matching. 
Observe that agents are indifferent over outcomes that give them the same item. 
The room allocation problem is an instance of this market.  

We devise a simple deterministic 2-rank-approximation
pseudomonotone algorithm \mma (Theorem~\ref{matching-ub}), which is therefore fully
LT-implementable. In contrast, we show in \myApp{known-mat} that various common
algorithms proposed for matching markets, such as the top-trading-cycle algorithm,
randomized serial dictatorship, probabilistic serial,
{\em all have rank approximation at least $\Omega(\sqrt{n})$}. We 
prove a matching lower bound of 2 on the rank-approximation factor of deterministic 
SCFs (Theorem~\ref{matching-lb}), and obtain super-constant lower bounds on the
rank-approximation factor achievable by deterministic truthful mechanisms.

The 2-rank-approximation for matching markets extends to {\em matroid markets}
(Theorem~\ref{matroid-thm}), which is the generalization where we have a matroid on the
agent-set for every item, and the (possibly multiple) agents assigned to an item
are required to form an independent set in that item's matroid. 
Besides the increased modeling power of matroids, this turns out to be a key
\nolinebreak
component of our algorithms for scheduling markets.

In \mySec{sched}, we consider scheduling markets. Here the agents are jobs that need
to be assigned to machines. Each job has a {\em private} ordering over the machines, and a
public processing time on each machine, and there is makespan bound $T$ that limits the
amount of time available 
on each machine. An outcome is a partial assignment of some jobs to machines satisfying
the makespan bound. 
This can be viewed as the matching problem with a {\em knapsack constraint}. 
For parallel machines, we obtain an LT-mechanism that always returns an 
$O(\log n)$-rank-approximation schedule with $O(T)$ makespan, and we show that this bound
is tight (Theorems~\ref{parallel-ub2} and~\ref{parallel-lb}). 
We also obtain an $O(\log n)$-rank approximation for unrelated machines
(Theorem~\ref{unrelated}), albeit not via an LT mechanism.

\subsection{Other related work}\label{sec:related} \label{related}
The conundrum of social welfare in ordinal mechanisms, which probably has its origins in the 
Condorcet paradox~\cite{Condorcet1785} that states that 
it may so happen that a majority of agents prefer outcome $a$ to $b$, outcome $b$ to $c$, and outcome $c$ to
$a$, was cemented by Arrow's impossibility theorem~\cite{Arrow51}.  
Subsequent to Arrow's result, much of the work in social choice theory has focused
on Pareto optimality as the sole notion of efficiency for ordinal mechanisms. 

Recent work, mostly in the CS literature, has led to a more nuanced notion of efficiency.
Procaccia and Rosenchein~\cite{PR06} studied the strong welfare factor notion (that they call
distortion), and noticed that deterministic mechanisms have unbounded distortion.   
Boutilier et al.~\cite{BCH+12} proposed randomized mechanisms and showed that the strong
welfare factor is at most $O(\sqrt{m}\log^*m)$, if the consistent cardinal-utility profile
is normalized. 
In contrast, our rank approximation results imply $O(\log n)$-approximate outcomes, but
under a stronger restriction on the consistent cardinal utilities.
The notion of approximations to scoring rules was studied by
Procaccia~\cite{Procaccia10} where he described strongly truthful mechanisms 
which $2$-approximate Borda, but $O(\sqrt{m})$-approximate the plurality rule.
In contrast, our (non-truthful) mechanism $O(\log n)$-rank approximates {\em any} scoring
rule, and plurality can be arbitrarily well approximated by a lex-truthful mechanism.

Another notion of social welfare in ordinal mechanisms, called ordinal welfare factor
(OWF), was recently proposed by Bhalgat et al~\cite{BCK11}. 
A mechanism has OWF $\beta\in [0,1]$ if for any outcome $o$, at least $\beta n$
agents prefer the outcome returned by the mechanism to $o$. This is in
fact a {\em quantification} of the notion of  {\em popular} outcomes; an outcome is
popular if a majority prefer it to any other fixed outcome. Note that popular outcomes
have OWF of at least $0.5$. 
A popular outcome may not exist, but a popular distribution over outcomes always does.
Popular outcomes were studied by economists in the matching setting~\cite{Gardenfors75},
and as {\em  strict maximal lotteries} in the general
setting~\cite{Fishburn84,Kreweras65}; subsequently, a 
large body of literature has been developed by computer scientists on popular
 matchings~\cite{AIK+07, KMN10, HKM+08, Mahdian06}. 
The notions of rank approximation and OWF (and therefore the notion of
popularity) are incomparable. That is, there are outcomes with ``good'' OWF and ``bad'' 
rank approximation, and vice-versa. 

Subsequent to the Gibbard-Satterthwaite result, researchers focused on design of
randomized mechanisms. As mentioned above, this led
to differing notions of truthfulness. Strong truthfulness was proposed by
Gibbard~\cite{Gibbard78}. Postlewaite and
Schmeidler~\cite{PS86} proposed 
weak truthfulness and proved that no weakly
truthful mechanism on $4$ or more outcomes, can be (ex ante) Pareto
optimal if agents are allowed to have priors on
their (own) preferences. 
Subsequently, Aziz et al~\cite{ABB13} removed the prior condition, but prove
impossibility of only certain {\em kinds} of mechanism. We remark that our
lex-truthful mechanisms, which are also weakly truthful, do not contradict 
these results, since our mechanisms are not Pareto optimal.
However, our mechanisms are $\eps$-implementations of Pareto-optimal SCFs,
so they satisfy Pareto optimality with probability at least $1-\eps$. Thus, we bypass
the above impossibility results while sacrificing a modicum of Pareto-optimality.

Matching markets are one of the most widely studied examples of the ordinal setting. There
is a vast amount of literature, and we point to excellent surveys~\cite{RS90, SU08, AS10}. 
In \myApp{known-mat}, we describe three well known mechanisms in this setting. These are the
random serial dictatorship, Gale's top trading cycle algorithm~\cite{SS74}, and the
probabilistic serial (PS) mechanism~\cite{BM01}. The first two mechanisms are at least
strongly truthful. PS is weakly truthful, and we show that it is lex-truthful as well;
this was also independently shown by~\cite{Cho12,SchulmanV12}.
However, we show that all these three mechanisms have rank approximation as bad as 
$\Omega(\sqrt{n})$. In contrast, we obtain a fully LT-implementable 2-rank-approximation
mechanism using our pseudomonotone 2-rank-approximation algorithm \mma.

\section{Preliminaries} \label{sec:prelim} \label{prelim}
In the general {\em ordinal mechanism design} setting, we have a set $N$ of $n$ agents, 
and a set $O$ of $m$ outcomes (or alternatives). We use the terms agent and player
interchangeably. Each agent $j\in N$ has a {\em private} complete preference list or
ordering $\succeq_j$ over outcomes, that is, $o\succeq_j o'$ or $o'\succeq_j o$ for every  
$o,o'\in O$. This is typically referred to as {\em ordinal} utilities/preferences, to
distinguish them from {\em cardinal} utilities wherein the utility function assigns a
value to each outcome.  
Let $\Sg_j$ denote the publicly-known set of allowed preference lists for agent $j$, and 
$\Sg:=\prod_{j=1}^n\Sg_j$. 
A preference profile is a combination $\bsucceq=(\succeq_1,\ldots,\succeq_n)$ of agents'
preference lists. 
For $k\in\Z_+$, we use $[k]$ to denote the set $\{1,\ldots,k\}$.
A preference list is called {\em strict}, and denoted $\succ$, if there are no 
indifferences: 
exactly one of $o\succ o'$ and $o'\succ o$ holds for every two distinct outcomes 
$o,o'\in O$. 
Given a strict preference $\succ$, we will sometimes say $o\succeq o'$ to denote that
$o\succ o'$ or $o=o'$.
Given a preference list $\succ$, let $\alt{\succ}{r}\in O$ denote the $r$-th ranked outcome in
$\succ$, and $\pos{\succ}{o}\in[m]$ denote the rank of outcome $o$ in $\succ$.
For a tuple $x=(x_1,\ldots,x_n)$, we use $x_{-j}$ to denote
$(x_1,\ldots,x_{j-1},x_{j+1},\ldots,x_n)$. Similarly, let $\Sg_{-j}:=\prod_{k\neq j}\Sg_k$.

In addition to the general setting mentioned above, we consider three specific
mechanism-design settings in this paper: one-sided {\em matching markets}, 
which have been studied extensively in the literature (see, e.g.,~\cite{SU08, AS10}) and two
generalizations of these, {\em matroid markets} and {\em scheduling markets}, that we
introduce. 

\paragraph{Matching markets (\mySec{matching})}
We nave $n$ agents and $m$ items. Each agent $j$ has a strict preference $\succ_j$ over the
$m$ items. The outcomes are matchings of agents to items.
We say that an outcome $M$ assigns an agent $j$ the ``null'' item $\es$ to denote that he
is not assigned an item in $M$; we set $i\succ_j\es$ for every item $i$.
An agent is indifferent between matchings $M$ and $M'$ if they allot him the same item
(counting $\es$ as an item), and otherwise, prefers $M$ to $M'$ if he prefers the
item allotted to him in $M$ to the item allotted to him in $M'$.  

\paragraph{Matroid markets (\mySec{matroid})}
We again have $n$ agents who have a strict preference over $m$ items. 
We also have a matroid $\mat_i=(N,\cI_i)$ on the set $N$ of agents, for each item $i\in[m]$. 
An outcome is an allocation that assigns at most one item to each agent $j$ such that, for
each item $i$, the set of agents allotted item $i$ is an independent set of $\mat_i$. 
Note that multiple agents may be allocated the same item. 
An agent's ordering over outcomes is induced by his ordering of the items as in the
setting of matching-markets.
It is easy to see that a matching market is the special case where $\mat_i$ encodes that
at most one agent may be assigned to item $i$. 

\paragraph{Scheduling markets (\mySec{sched})}
The agents are $n$ jobs that need to be scheduled on $m$ machines, where
the machines are in general {\em unrelated}. Each job $j$ possesses a {\em private}
strict complete preference order $\succ_j$ over the machines, and has a 
{\em publicly-known} processing time $p_{ij}$ on machine $i$. Furthermore, there is a
bound $T$ on the maximum load allowed on any machine (i.e., makespan).  
An outcome is an (partial) assignment of some jobs to machines that respects the makespan 
bound. 
The ordering over outcomes is induced by the ordering over machines as in the above two
cases. The parallel machines setting is the case where $p_{ij} = p_j$ for every machine
$i$ and job $j$.

\smallskip
Note that in the above three markets, agents' preferences over outcomes are 
{\em not} strict; however, for each agent $j$, the outcome-set may be partitioned into
{\em indifference classes} such that $j$ is indifferent between the outcomes in an
indifference class, and has a strict ordering over the indifference classes. 
Our framework 
and results apply to such settings with cosmetic notational changes 
(see \mySec{indiff}),
but we stick for the most part to the setting
of strict preferences for notational ease. 

A {\em social choice function} (SCF) is a function $f:\Sg\mapsto O$. 
In settings with no monetary transfers, there is no formal distinction between
an SCF and a {\em deterministic} algorithm or {\em direct-revelation} mechanism, which
maps the preference profile given by the agents' reported preference lists to an outcome. 
An SCF $f$ is said to be {\em implementable} or {\em truthful} 
if for every player $j$, every $\succ_j, \succ'_j\in\Sg_j$, and every
$\bsucc_{-j}\in\Sg_{-j}$, 
we have  $f(\succ_j,\bsucc_{-j})\succeq_j f(\succ'_j,\bsucc_{-j})$; that is, no agent
benefits by misreporting his preference list.

A {\em randomized mechanism}  
is allowed to output a distribution (also called a {\em lottery})
over outcomes. Let $\cL(O)$ denote the collection of distributions over the outcome-set
$O$. A randomized mechanism is formally then a function 
mapping preference profiles to distributions in $\cL(O)$. 
We sometimes refer to a mechanism 
that works with ordinal preferences as an ordinal mechanism.

\begin{definition}
A randomized mechanism $\cM$ is said to {\em $\eps$-implement} an SCF $f$ (or that $f$ is
$\ve$-implementable by $\cM$), if $\Pr[\cM(\bsucc)=f(\bsucc)]\geq 1-\eps$ for all
$\bsucc\in\Sg$, where the probability is over the random choices of $\cM$.
We say that a family $\{\cM^\ve\}$ of mechanisms {\em fully implements} $f$ if for all
$\ve>0$, $\cM^\ve$ $\ve$-implements $f$.
(This is in the same spirit as the notion of virtual implementation in Nash
equilibrium~\cite{Matsushima88,AbreuS91}.)    
\end{definition}

Truthfulness for randomized mechanisms may be defined in various ways. The strongest
notion is {\em universal truthfulness}, wherein
a randomized truthful mechanism is a randomization (or mixture) over deterministic
truthful mechanisms, where the mixture weights are input-independent. 
A somewhat weaker notion is obtained by considering the stochastic dominance relation.
Given an ordering $\succ$ over $O$, and two lotteries $\bp,\bq\in \cL(O)$, we say that
$\bp$ (first-order) {\em stochastically dominates} $\bq$ with respect to $\succ$, if
$\sum_{\ell\leq i}\bp(\alt{\succ}{\ell}) \geq \sum_{\ell\leq i} \bq(\alt{\succ}{\ell})$
for all $i=1,\ldots,m$. 
Since stochastic dominance does not induce a total ordering on $\cL(O)$, this yields two
notions of truthfulness that have been studied in the literature.  

\begin{definition} 
A randomized mechanism $\cM$ is said to be:
\begin{list}{$\bullet$}{\itemsep=0ex \topsep=0.5ex \addtolength{\leftmargin}{0ex}} 
\item {\em strongly truthful}~\cite{Gibbard78}: if 
$\cM(\succ_j,\bsucc_{-j})$ {\em stochastically dominates} $\cM(\succ'_j,\bsucc_{-j})$ with
respect to $\succ_j$,  for all $j$, all $\succ_j,\succ'_j\in\Sg_j$, and all
$\succ_{-j}\in\Sg_{-j}$. 

\item {\em weakly truthful}~\cite{PS86,BM01}: if 
$\cM(\succ_j,\succ_{-j})$ {\em is not stochastically dominated} by
$\cM(\succ'_j,\succ_{-j})$ with respect to $\succ_j$, for all $j$, all
  $\succ_j,\succ'_j\in\Sg_j$, and all $\succ_{-j}\in\Sg_{-j}$. 
\end{list}
\end{definition}

A universally truthful mechanism is also strongly truthful, and in fact, this
inclusion is strict (Theorem~\ref{hier}). 
Gibbard~\cite{Gibbard78} extended the impossibility result of~\cite{Gibbard73,Satterthwaite75}
to show that the space of strongly truthful mechanisms is also rather limited. 
A deterministic mechanism is: 
(i) {\em dictatorial} if there exists $j\in N$ such that the mechanism's output is always
$j$'s top choice; and (ii) {\em duple} if the mechanism's range $f(\Sg)$ consists of at
most two outcomes.
A (deterministic or randomized) mechanism is {\em unilateral} if there exists some fixed
$j\in N$ such that the mechanism's output depends only on $j$'s (reported) preference list.

\begin{theorem}(Gibbard-Satterthwaite and Gibbard impossibility
    results)\label{thm:imposs} \label{imposs}
(i) If $m\geq 3$ and $f(\Sg)=O$, then $f$ is truthful iff it is dictatorial.
(ii) Any strongly truthful mechanism is a mixture of truthful unilateral and deterministic
truthful duple mechanisms with input-independent mixture weights.
\end{theorem}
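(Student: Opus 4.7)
The plan is to prove the two parts separately, treating Part (i) as the Gibbard-Satterthwaite theorem and Part (ii) as Gibbard's extension to randomized mechanisms.

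For Part (i), I would proceed by reduction to Arrow's impossibility theorem. Given the SCF $f$, I construct an associated social welfare function as follows: for each ordered pair $(a,b)$ of distinct outcomes and each profile $\bsucc$, define $a\bsucc^f b$ if, after modifying $\bsucc$ to a profile $\bsucc^{ab}$ in which every agent moves $a$ and $b$ to the top two positions while preserving their relative order and the relative order among the remaining outcomes, we have $f(\bsucc^{ab})=a$. The surjectivity assumption $f(\Sg)=O$ together with truthfulness can be used to establish two preliminary properties: (a) \emph{strong monotonicity}---if $f(\bsucc)=o$ and $\bsucc'$ weakly improves $o$'s position in every agent's order (i.e., nothing ranked below $o$ in $\succ_j$ moves above it in $\succ'_j$), then $f(\bsucc')=o$; and (b) \emph{Pareto efficiency}---if every agent strictly prefers $a$ to $b$, then $f$ never outputs $b$ when $a$ is available (derived by combining surjectivity with strong monotonicity). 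From (a) and (b) I would verify that $\bsucc^f$ satisfies Arrow's axioms: unanimity follows from Pareto efficiency, and independence of irrelevant alternatives (IIA) follows from strong monotonicity applied to moves that do not alter the relative ranking of $a$ and $b$. Transitivity---deducing $a\bsucc^f c$ from $a\bsucc^f b$ and $b\bsucc^f c$---will be the main technical step; the standard trick is to construct a profile in which $a,b,c$ occupy the top three positions in every agent's order and repeatedly apply strong monotonicity to swap between the three reduced profiles. Once Arrow's axioms are in hand, Arrow's theorem yields a dictator for $\bsucc^f$, and a final application of strong monotonicity lifts this to a dictator for $f$ itself.

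For Part (ii), I would proceed by showing that the probability function $p(o,\bsucc):=\Pr[\cM(\bsucc)=o]$ decomposes into a convex combination supported on unilateral-truthful and deterministic-duple pieces, via local structural constraints implied by strong truthfulness. The key observation is that when agent $j$ swaps two adjacent outcomes $a$ and $b$ in his reported order (so $a,b$ are adjacent in both $\succ_j$ and $\succ'_j$), stochastic dominance in both directions $\cM(\succ_j,\bsucc_{-j})$ vs.\ $\cM(\succ'_j,\bsucc_{-j})$ forces the \emph{total} probability mass $p(a,\cdot)+p(b,\cdot)$ and the mass on every outcome $o\notin\{a,b\}$ to be unaffected by the swap. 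This ``localization'' property means $p$ behaves, pairwise, like a deterministic SCF on the two-outcome world $\{a,b\}$ parameterized by the profile's comparisons between $a$ and $b$. I would then invoke Part (i) restricted to the two-outcome world (where Gibbard-Satterthwaite degenerates): any truthful SCF on two outcomes is a mixture of unilateral rules (depending on a single agent's ranking of $\{a,b\}$) and constants. Applied globally across all adjacent swaps, these constraints produce the claimed convex decomposition: the pieces depending on a single agent are the unilateral components, while pieces supported on pairs of outcomes and independent of all agents except through $\{a,b\}$-comparisons become the deterministic duples.

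The hard step will be going from the pairwise localization constraints to the global decomposition in Part (ii). The constraints I derived are local (two outcomes, one agent at a time), but the conclusion is a \emph{single} convex decomposition that is simultaneously consistent across all profiles. The standard way to do this is to view $p$ as a point in a polytope cut out by the localization equalities and probability constraints, identify the unilateral-truthful mechanisms and duple mechanisms as extreme points of this polytope, and invoke Carath\'eodory. Verifying that these are indeed the only extreme points---equivalently, that every vertex of the polytope is either supported on at most two outcomes or depends on at most one agent---is the genuinely delicate combinatorial step and is where Gibbard's original argument expends most of its effort.
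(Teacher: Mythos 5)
The paper does not prove this theorem; it is stated as a citation of the classical results of Gibbard, Satterthwaite, and Gibbard (1977), so there is no in-paper proof to compare your sketch against. That said, a few remarks on the sketch itself.

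For part (i), your reduction to Arrow's theorem is one of the standard proofs, and the skeleton you give---establish strong (Maskin) monotonicity and Pareto efficiency from truthfulness plus surjectivity, then build a social welfare function by lifting $a,b$ to the top, and verify IIA and transitivity via repeated applications of strong monotonicity---is essentially correct as an outline. The transitivity check you flag as ``the main technical step'' is indeed where care is needed, and what you describe (working inside profiles where three fixed outcomes occupy the top positions) is the usual device.

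For part (ii), the localization step is correct: an adjacent swap of $a,b$ in one agent's report leaves $p(o,\cdot)$ unchanged for $o\notin\{a,b\}$ and leaves $p(a,\cdot)+p(b,\cdot)$ unchanged, which follows from applying stochastic dominance in both directions. But the next step contains a genuine error. You claim that ``any truthful SCF on two outcomes is a mixture of unilateral rules (depending on a single agent's ranking of $\{a,b\}$) and constants.'' This is false: with two outcomes, \emph{every} monotone voting rule is truthful, and most such rules (e.g., majority) are neither unilateral nor constant. Gibbard-Satterthwaite has no bite when $m=2$; that is precisely why the duple components in Gibbard's decomposition are a nontrivial class. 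The duples in the theorem statement are exactly deterministic truthful rules whose range has two elements---i.e., the full (large) set of monotone two-outcome rules---not unilaterals plus constants. With the false intermediate claim removed, the path from the pairwise localization equalities to the single global convex decomposition is no longer supplied, and the Carath\'eodory/extreme-point suggestion does not on its own fill it: one must actually characterize the vertices of the resulting polytope, which is the bulk of Gibbard's argument.
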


Theorem~\ref{imposs} leaves weak truthfulness as the only notion that
potentially allows for some sophisticated mechanisms. In
\mySec{lextruth}, we propose a stronger notion of truthfulness 
and show that this is flexible enough that one can 
{\em bypass Gibbard's impossibility result} and obtain various interesting mechanisms
including, in particular, mechanisms that yield ``good'' social welfare under the metric
we introduce in \mySec{rankapx}.

\section{Rank approximation and Lex-truthfulness} \label{ranklex}

\subsection{Welfare in ordinal settings: rank approximation} \label{rankapx} 
We introduce a notion of social welfare that we call {\em rank approximation}.
Given a preference profile $\bsucc=(\succ_1,\ldots,\succ_n)$, the $i$-rank of an outcome
$o\in O$ in $\bsucc$, denoted $\rank_i(o;\bsucc)$, is the number of agents having $o$ in
their top $i$ choices:
$\rank_i(o;\bsucc) = \bigl|\{j: \pos{\succ_j}{o} \leq i\}\bigr|$.
Define {$\maxrank_i(\bsucc):=\max_{o\in O}\rank_i(o;\bsucc)$.}

\begin{definition} \label{rankapxdef} 
A randomized mechanism $\cM$ is an {\em $\al$-rank-approximation} mechanism, if for
every preference profile $\bsucc$, we have
$\Exp\bigl[\rank_i(\cM(\bsucc);\bsucc)\bigr]\geq\mrank_i(\bsucc)/\al$ for {\em all}
$i=1,\ldots,m$, where the expectation is taken over the random choices of $\cM$.  
We say that $\al$ is the rank-approximation factor of $\cM$. 
\end{definition}

As mentioned in the Introduction, rank approximation is an appealingly robust notion from
various perspectives.
A utility function $U$ is consistent with a preference ordering $\succ$ if $U(o) > U(o')$ whenever $o\succ o'$.
A collection of cardinal utility functions $(U_1,\ldots,U_n)$ consistent with a
preference profile $\bsucc$ is called {\em homogeneous} if for all $i\in[m]$, the value
that an agent assigns to his $i$-th choice is the same across all agents, that is, 
$U_j(\alt{\succ_j}{i})=U_{j'}(\alt{\succ_{j'}}{i})$ for all $i\in[m], j,j'\in N$. 

An $\alpha$-rank-approximation mechanism yields an $\alpha$ approximation to social
welfare for {\em any} homogeneous cardinal-utility profile consistent with the
agents' preferences.  

\begin{theorem} \label{thm:homutil} \label{homutil}
Let $\cM$ be an $\al$-rank-approximation randomized mechanism. Then, for every preference  
profile $\bsucc$, we have $\Exp\bigl[\sum_{j\in N} U_j\bigl(\cM(\bsucc)\bigr)\bigr]\geq
\frac{1}{\al}\cdot\max_{o\in O}\sum_{j\in N} U_j(o)$ 
for any homogeneous utility profile $(U_1,\ldots,U_n)$ consistent with $\bsucc$. 
\end{theorem}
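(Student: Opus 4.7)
The plan is to write the social welfare of any outcome as a weighted, nonnegative combination of the quantities $\rank_i(o;\bsucc)$ via summation by parts, so that the $\al$-rank-approximation guarantee---which bounds $\E[\rank_i(\cM(\bsucc);\bsucc)]$ against $\mrank_i(\bsucc)$ for every $i$ simultaneously---can be applied termwise.

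Concretely, let $v_i$ denote the common value that every agent assigns to its $i$-th ranked outcome under the homogeneous profile (so $v_1>v_2>\cdots>v_m$ by consistency with strict preferences). For any outcome $o$, writing $n_i(o):=|\{j:\pos{\succ_j}{o}=i\}|$, we have $\sum_{j}U_j(o)=\sum_{i=1}^{m}v_i\,n_i(o)$. Since $n_i(o)=\rank_i(o;\bsucc)-\rank_{i-1}(o;\bsucc)$ (with $\rank_0\equiv 0$) and $\rank_m(o;\bsucc)=n$, Abel summation gives
\[
\sum_{j}U_j(o) \;=\; \sum_{i=1}^{m-1}(v_i-v_{i+1})\,\rank_i(o;\bsucc) \;+\; v_m\,n,
\]
with all coefficients $v_i-v_{i+1}$ strictly positive. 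I would then instantiate this identity twice: first with $o$ replaced by $\cM(\bsucc)$ and take expectations (using linearity and the $\al$-rank-approximation bound on each $\E[\rank_i(\cM(\bsucc);\bsucc)]$) to lower-bound the mechanism's expected welfare by $\frac{1}{\al}\sum_{i=1}^{m-1}(v_i-v_{i+1})\mrank_i(\bsucc)+v_m n$; and second with $o$ chosen to be a welfare-maximizing outcome $o^*$, using $\rank_i(o^*;\bsucc)\le\mrank_i(\bsucc)$ to upper-bound $\max_o\sum_j U_j(o)$ by $\sum_{i=1}^{m-1}(v_i-v_{i+1})\mrank_i(\bsucc)+v_m n$.

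Combining the two bounds yields the desired $\frac{1}{\al}$ ratio after the additive $v_m n$ term is absorbed---this is where the only subtlety lies, since the mechanism's bound pays $v_m n$ in full while the optimum's bound pays $v_m n/\al$. Under the standard convention that utilities are nonnegative (so $v_m\ge 0$), we have $v_m n\ge v_m n/\al$ and the conclusion follows; if needed, one can normalize $v_m:=0$ without loss of generality by shifting and rescaling (noting that rank-approximation is a purely ordinal notion invariant under such transformations of a homogeneous profile). I expect this accounting of the ``floor'' term $v_m n$ to be the only point requiring care; the rest is a clean Abel-summation argument that converts the pointwise rank-approximation guarantee into a welfare guarantee.
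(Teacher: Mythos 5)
Your argument is correct and is essentially the paper's proof: the same Abel-summation identity, followed by termwise application of the rank-approximation guarantee with $\rank_i(o^*;\bsucc)\le\mrank_i(\bsucc)$. The paper writes it slightly more compactly by setting $U(m+1)=0$ and summing to $i=m$, so your additive floor term $v_m n$ appears there as the $i=m$ term $\bigl(U(m)-U(m+1)\bigr)\cdot n=U(m)\cdot n$, and the same implicit requirement $U(m)\ge 0$ is needed for the termwise inequality; note, though, that your parenthetical justification for normalizing $v_m:=0$ is not quite right --- rank-approximation is shift-invariant, but the multiplicative welfare guarantee is not --- the correct reason the normalization is harmless is that (for $v_m>0$ and $\al\ge 1$) the shifted inequality is a stronger statement that implies the original.
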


\begin{proof}
Let $\bp=\cM(\bsucc)$. Let $U(i)$ be the common value of $U_j(\alt{\succ_j}{i})$. 
Define $\rank_0(o;\bsucc)=0$ for all $o\in O$, and $U(m+1)=0$.
Let $o^*=\argmax_{o\in O}\sum_{j\in N}U_j(o)$.
Then $\Exp\bigl[\sum_{j\in N} U_j\bigl(\cM(\bsucc)\bigr)\bigr]$ is
\begin{equation*}
\begin{split}
& \sum_{o\in O}\bp(o)\sum_{i=1}^m\bigl(\rank_i(o;\bsucc)-\rank_{i-1}(o;\bsucc)\bigr)U(i)
=\sum_{o\in O}\bp(o)\sum_{i=1}^m\rank_i(o;\bsucc)\bigl(U(i)-U(i+1)\bigr) \\
& = \sum_{i=1}^m\bigl(U(i)-U(i+1)\bigr)\Exp\bigl[\rank_i(\cM(\bsucc);\bsucc)\bigr]
\geq \frac{1}{\al}\cdot\sum_{i=1}^m\bigl(U(i)-U(i+1)\bigr)\rank_i(o^*;\bsucc)
=\frac{1}{\al}\cdot\sum_{j\in N}U_j(o^*). \qedhere 
\end{split}
\end{equation*} 
\end{proof}

\smallskip
Consistent homogeneous utilities may be equivalently viewed as a 
{\em scoring rule}; 
Viewed from this perspective, 
Theorem~\ref{homutil} shows that 
an $\al$-rank-approximation mechanism 
{\em simultaneously achieves an $\al$-approximation to all 
 scoring rules}. 

In fact, rank approximation satisfies an even more general robustness property.
Associate with each outcome $o$ an $m$-vector called its {\em histogram}, given by 
$\hist(o;\bsucc)=\{\rank_i(o;\bsucc)\}_{i\in[m]}$. 
Then the rank-approximation factor of an outcome $o$ is
$g\bigl(\hist(o;\bsucc);\bsucc\bigr)$, 
where $g(x;\bsucc):=\min_{i\in[m]}\frac{x_i}{\mrank_i(\bsucc)}$. 
It is not hard to see that $g$ is a concave function of $x$ and non-decreasing in each
coordinate. A deterministic $\al$-rank-approximation mechanism outputs an
outcome $o$ whose $g$-value, $g\bigl(\hist(o;\bsucc);\bsucc\bigr)$, is 
at least $\frac{1}{\al}$ for every input $\bsucc$. 

Now suppose $h(x;\bsucc)$ is {\em any} concave non-decreasing function and we measure 
the value of an outcome $o$ by $h\bigl(\hist(o;\bsucc);\bsucc\bigr)$. This yields a
natural SCF $f^h$, where $f^h(\bsucc)$ is \linebreak 
$\argmax_{o'\in O}h\bigl(\hist(o';\bsucc);\bsucc\bigr)$. 
Note that scoring rules correspond to the special case where $h(\cdot)$ is linear with all
coefficients non-negative. 
Analogous to $\al$-rank-approximation, we can define an SCF $f'$ to be an
$\al$-approximation for $f^h$ if 
$(\text{$h$-value of $f'(\bsucc)$})\geq
\frac{1}{\al}\cdot(\text{$h$-value of $f^h(\bsucc)$})$ for all $\bsucc$.

An deterministic $\al$-rank-approximation mechanism {\em simultaneously} achieves an
$\al$-approximation mechanism for {\em all} such histogram-based concave SCFs: 
if $o$ is the outcome returned, we 
have
$\hist(o;\bsucc)\geq\frac{1}{\al}\cdot\hist(o';\bsucc)$ coordinate-wise for any $o'\in O$.
Since $h$ is non-decreasing and concave, this implies that
$h\bigl(\hist(o;\bsucc);\bsucc\bigr)\geq
\frac{1}{\al}\cdot \max_{o'\in O} h\bigl(\hist(o';\bsucc);\bsucc\bigr) = \frac{1}{\al}\cdot f^h(\bsucc)$.

\subsection{Truthfulness for randomized ordinal mechanisms: lex-truthfulness} \label{lextruth} 
We propose a new notion for truthfulness relying on lexicographic ordering.
Given an ordering $\succ$ over $O$, and two lotteries $\bp\neq \bq\in \cL(O)$, $\bp$ 
{\em lexicographically dominates} $\bq$ with respect to $\succ$,  
if there exists $i\in[m]$ such that $\bp(\alt{\succ}{i})>\bq(\alt{\succ}{i})$ and 
$\bp(\alt{\succ}{\ell})=\bp(\alt{\succ}{\ell})$ for all $\ell=1,\ldots,i-1$. 
Note that 
lex-dominance imposes a 
{\em total order} on $\cL(O)$. This motivates the following definition of truthfulness.   

\begin{definition} \label{lext}
A randomized mechanism $\cM$ is called {\em lex-truthful} (LT) if for all $j\in N$, all 
$\succ_j,\succ'_j\in\Sg_j$, and all $\bsucc_{-j}$, we have that either
$\cM(\succ_j,\succ_{-j})=\cM(\succ'_j,\succ_{-j})$, or  
$\cM(\succ_j,\succ_{-j})$ {\em lexicographically dominates}
$\cM(\succ'_j,\succ_{-j})$ with respect to $\succ_j$. 
\end{definition}

Observe that if $\bp$ stochastically dominates $\bq$, then
$\bp$ lex-dominates $\bq$ as well. 
Since lex-dominance is a total order, this implies that if $\bp$ lex-dominates $\bq$, then
$\bq$ cannot stochastically dominate $\bp$. 
We obtain the following hierarchy between the various notions of truthfulness for
randomized ordinal mechanisms.

\begin{theorem}\label{thm:hier} \label{hier}
Let $\univt,\ \strongt,\ \weakt,\ \lext$ denote the classes of universally-, strongly-, 
weakly-, and lex- truthful mechanisms respectively.
Then $\univt\subsetneq\strongt\subsetneq\lext\subsetneq\weakt$.
\end{theorem}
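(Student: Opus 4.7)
My plan is to prove the three inclusions $\univt\subseteq\strongt\subseteq\lext\subseteq\weakt$ by direct arguments, and then exhibit small explicit mechanisms to witness the three strict containments.

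For $\univt\subseteq\strongt$, I would write a universally truthful $\cM$ as a fixed mixture $\sum_k\lambda_k f_k$ of deterministic truthful mechanisms $f_k$; truthfulness of each $f_k$ gives $f_k(\succ_j,\bsucc_{-j})\succeq_j f_k(\succ_j',\bsucc_{-j})$, and the common coupling on $k$ (with its input-independent weights) turns this pointwise inequality into stochastic dominance of $\cM(\succ_j,\bsucc_{-j})$ over $\cM(\succ_j',\bsucc_{-j})$. For $\strongt\subseteq\lext$, suppose $\bp:=\cM(\succ_j,\bsucc_{-j})$ stochastically dominates $\bq:=\cM(\succ_j',\bsucc_{-j})$ with respect to $\succ_j$ and $\bp\neq\bq$; let $i$ be the smallest index where $\bp$ and $\bq$ differ on $\alt{\succ_j}{i}$. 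Equality of CDFs at position $i-1$ together with the stochastic-dominance inequality at position $i$ force $\bp(\alt{\succ_j}{i})>\bq(\alt{\succ_j}{i})$, so $\bp$ lex-dominates $\bq$. The containment $\lext\subseteq\weakt$ is then immediate: lex-dominance is a total order on $\cL(O)$ and contains stochastic dominance, so if a misreport's output stochastically dominated the truth-telling output it would also lex-dominate it, contradicting lex-truthfulness.

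For the strict separations: for $\univt\subsetneq\strongt$ I would use the randomized unilateral mechanism on $n=1$, $m=3$ that outputs the agent's top-ranked outcome with probability $1/2$ and the second-ranked outcome with probability $1/2$. By Theorem~\ref{imposs}(ii) this mechanism is strongly truthful, but it is not universally truthful: for $n=1$ every deterministic truthful mechanism has the form ``output the top of some fixed range $R\subseteq O$'', and a short case analysis over the six strict preferences shows that no input-independent mixture of such mechanisms reproduces the $(1/2,1/2)$ distribution on all six preferences simultaneously. For $\strongt\subsetneq\lext$ I would appeal to the probabilistic-serial mechanism (shown later in this paper to be lex-truthful, but known in the literature not to be strongly truthful), or equivalently construct a small mechanism where truth and misreport agree on the probability of the top outcome and then have CDFs that cross below. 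For $\lext\subsetneq\weakt$ I would construct a mechanism in which, under some misreport, the probability on the truly top-ranked outcome is strictly larger than under truth-telling, while the remaining probability mass is redistributed so that the two CDFs cross further down and neither distribution stochastically dominates the other, yielding a weakly truthful mechanism that violates lex-truthfulness at that input pair.

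The logical content of the inclusions is light, so the main obstacle lies in the separating constructions. In particular, verifying that the $(1/2,1/2)$ unilateral mechanism is not universally truthful requires ruling out \emph{all} non-negative input-independent mixtures of the max-over-range deterministic mechanisms, which I would handle by writing the marginal-probability equations induced by each of the six strict preferences and deriving an explicit contradiction (e.g., showing that the implied weights on two-element ranges must be both positive and zero). The remaining two separators are similar in spirit but require one to check global truthfulness in the weaker notion while exhibiting a single input pair that breaks the stronger notion; both can be verified on three-outcome, one-agent instances.
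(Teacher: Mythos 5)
Your three inclusions are correct and match the paper's argument, and your $\univt\subsetneq\strongt$ witness is essentially the paper's: a mechanism that returns a fixed agent's top or second outcome with probability $1/2$ each, together with a counting argument ruling out any input-independent mixture of deterministic truthful mechanisms. One small slip there: Theorem~\ref{imposs}(ii) is only a \emph{necessary} condition on strongly truthful mechanisms (it decomposes them into mixtures), so it does not by itself certify that your unilateral $(1/2,1/2)$ mechanism is strongly truthful --- but that verification is an easy direct CDF calculation, so the separator goes through.

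For $\strongt\subsetneq\lext$ your fallback to probabilistic serial is valid but heavy: PS lives in matching markets with indifference classes (so you need the extended definitions of Section~3.3), and its lex-truthfulness is itself a nontrivial result proved only in the appendix of the paper. The paper instead uses a self-contained one-agent, three-outcome top-choice SCF $g(a)=a,\ g(b)=g(c)=c$ and the mechanism that $\tfrac13$-implements it; this is cleaner and avoids importing anything downstream.

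The real gap is in $\lext\subsetneq\weakt$. Your sketch --- ``have the misreport put more mass on the true top outcome while the CDFs cross lower down'' --- correctly identifies what must happen \emph{at the offending input pair}, but a mechanism is weakly truthful only if the truth-telling lottery is undominated by \emph{every} misreport at \emph{every} true profile. That global verification is exactly where the work lies, and nothing in your sketch addresses it: picking a distribution pair that crosses as you describe can easily break weak truthfulness for some other report. The paper resolves this by a concrete one-agent, four-outcome mechanism that treats one distinguished list $\succ^*=(a,b,c,d)$ specially (returning $a$ with probability $\tfrac12$ and $b,c,d$ with probability $\tfrac16$ each) and every other list uniformly over its top three; it then checks all cases to confirm weak truthfulness, and exhibits $\succ=(a,b,d,c)$ with misreport $\succ^*$ as the lex-truthfulness violation. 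You should either produce such an explicit construction and carry out the case analysis, or explain why your redistribution cannot create a new stochastic-dominance violation elsewhere.
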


We defer the proof of Theorem~\ref{hier} to \myApp{append-ranklex}.
We shorten ``implementable by a lex-truthful mechanism'' to 
``lex-truthfully (LT) implementable'' in the sequel.
We show that lex-truthful implementability is equivalent to a property of the 
social-choice function that we call {\em pseudomonotonicity}. This characterization
immediately opens up a host of SCFs that are fully LT-implementable. 
We heavily exploit this in \twoSecs{matching}{sched} to show that the
rank-approximation SCFs that we devise for various problems are fully
LT-implementable. In \mySec{general}, we leverage this to show that an interesting 
class of SCFs in general ordinal settings are fully LT-implementable.

\begin{definition} \label{pseudodefn}\label{def:pseudodefn} 
A social choice function $f$ is {\em pseudomonotone} (or satisfies 
{\em pseudomonotonicity}) if the following holds. Consider any player $j$, 
$\bsucc_{-j}\in\Sg_{-j}$, and $\succ_j, \succ'_j\in\Sg_j$. 
Let $o = f(\bsucc)$ and $o' = f(\bsucc')$. Then, 
either (i)  $o \succeq_j o'$, or 
(ii) there is an outcome $o''$ such that $o''\succ_j o'$ and $\pos{\succ_j}{o''}<\pos{\succ'_j}{o''}$.
\end{definition}

A useful way to view pseudomonotonicity is as follows: if a player's deviation leaves his
first $k$ preferences unaltered, then the deviation cannot both yield him a better outcome 
{\em and} a top-$(k+1)$ outcome.

\begin{theorem}\label{thm:pmtolt}\label{pseudo}\label{thm:pseudo}
(i) Let $f$ be a pseudomonotone SCF. Then $f$ is $\eps$-implementable by a lex-truthful
mechanism for any $\eps>0$; that is, $f$ is fully lex-truthfully implementable.

\noindent
(ii) Conversely, if $f$ is $\ve$-LT implementable for some $\ve<\frac{1}{2}$,
then $f$ is pseudomonotone. 
\end{theorem}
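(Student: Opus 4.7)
The plan is to prove both directions by examining, for each pair of profiles, the \emph{first position in $\succ_j$'s order at which the two output distributions differ}.

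For part (i), I would construct $\cM^\eps$ explicitly: with probability $1-\eps$ output $f(\bsucc)$; with the remaining probability $\eps$ pick an agent $k\in[n]$ uniformly at random and sample an outcome with probability $\pi_{\succ_k}(o)\propto\gamma^{\pos{\succ_k}{o}}$ for some fixed $\gamma\in(0,1)$. The $\eps$-implementation of $f$ is immediate. For lex-truthfulness, fix any deviation of agent $j$ and set $\bp=\cM^\eps(\succ_j,\bsucc_{-j})$, $\bq=\cM^\eps(\succ'_j,\bsucc_{-j})$. The difference $\bp-\bq$ decomposes as (a) an ``$f$-term'' $(1-\eps)\bigl(\1[f(\bsucc)=\cdot]-\1[f(\bsucc')=\cdot]\bigr)$, nonzero only at $o=f(\bsucc)$ and $o'=f(\bsucc')$, and (b) a ``$\pi$-term'' $(\eps/n)(\pi_{\succ_j}-\pi_{\succ'_j})$, nonzero only where $\succ_j$ and $\succ'_j$ assign different positions. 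For $\eps<\tfrac{n}{n+1}$ the $f$-term strictly dominates the $\pi$-term in magnitude, and a useful structural observation is that if $d^*$ denotes the first position where $\succ_j$ and $\succ'_j$ disagree, the outcome at that position has a strictly higher $\succ_j$-rank than $\succ'_j$-rank, so the $\pi$-term is strictly positive there. The core argument is then a case analysis of the first position $r^*$ (in $\succ_j$'s order) at which $\bp$ and $\bq$ differ. One checks that $r^*\le d^*$ (else the $\pi$-term at $d^*$ would already have made $\bp\ne\bq$ there), and in every case except one the sign at $r^*$ is $+$. The ``bad'' case is $r^*=\pos{\succ_j}{o'}$ with $o\ne o'$ and $f$-term $-(1-\eps)$ at $r^*$; one shows this forces $o'\succ_j o$ and forces $\succ_j,\succ'_j$ to agree at every position strictly above $o'$ in $\succ_j$. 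But pseudomonotonicity, in its witness clause, demands an $o''\succ_j o'$ with $\pos{\succ_j}{o''}<\pos{\succ'_j}{o''}$, directly contradicting this agreement.

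For part (ii), I would argue contrapositively. Suppose pseudomonotonicity fails at some $j,\bsucc_{-j},\succ_j,\succ'_j$ with $o=f(\bsucc),\,o'=f(\bsucc')$, so $o'\succ_j o$ and $\pos{\succ_j}{o''}\ge\pos{\succ'_j}{o''}$ for every $o''\succ_j o'$. Let $k=\pos{\succ_j}{o'}-1$ and $S=\{y:y\succ_j o'\}$. A summation argument, noting that $\sum_{y\in S}\pos{\succ_j}{y}=\sum_{y\in S}\pos{\succ'_j}{y}=1+2+\cdots+k$, forces $\pos{\succ_j}{y}=\pos{\succ'_j}{y}$ for every $y\in S$, so $\succ_j$ and $\succ'_j$ have identical top-$k$ prefixes. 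Now let $\bp=\cM(\bsucc)$, $\bq=\cM(\bsucc')$ and invoke lex-truthfulness \emph{in both directions}: with true preference $\succ_j$ and deviation $\succ'_j$, LT forces $\bp$ to lex-dominate $\bq$ under $\succ_j$; with true $\succ'_j$ and deviation $\succ_j$, LT forces $\bq$ to lex-dominate $\bp$ under $\succ'_j$. Because the two orderings agree on the top-$k$ prefix, these two dominance requirements, restricted to $S$, refer to the same linear order and are mutually exclusive unless $\bp(y)=\bq(y)$ for every $y\in S$. But then the first coordinate in $\succ_j$'s order where $\bp,\bq$ disagree is at most $k+1$, which is $o'$, and the $\ve$-implementation bounds give $\bp(o')\le\ve<1-\ve\le\bq(o')$, contradicting the $\bp(o')>\bq(o')$ demanded by the $\succ_j$-direction.

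The main obstacle I anticipate is that lex-dominance is \emph{strictly weaker} than stochastic dominance, so one cannot use cumulative-mass comparisons to reason about two distributions. For part (ii) this is precisely why the two-sided-LT trick is needed: the failure of pseudomonotonicity rigidly forces $\succ_j$ and $\succ'_j$ to agree on their top-$k$ prefix, and then playing the two LT constraints against each other on this common prefix---essentially exploiting the antisymmetry of lex-dominance under a fixed order---pins down $\bp=\bq$ on the prefix and isolates the contradiction at position $k+1$. For part (i), the delicate step is choosing a perturbation whose mass is small enough to preserve $\eps$-implementation while its sign at every position faithfully tracks the $\succ_j$-order; the exponential weighting $\pi_\succ\propto\gamma^{\pos{\succ}{\cdot}}$ is designed to achieve both properties with a single stroke.
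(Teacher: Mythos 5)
Your proposal is correct and follows essentially the same route as the paper. For part (i), your exponential weighting $\pi_{\succ}(o)\propto\gamma^{\pos{\succ}{o}}$ is a particular instantiation of the paper's perturbation (it needs only a strictly decreasing sequence $\eps_1>\cdots>\eps_m$ summing to $\eps$), and your case analysis on the first position where $\bp$ and $\bq$ differ is an equivalent reorganization of the paper's case analysis on where $o$, $o'$, and the first $\succ_j$/$\succ'_j$-disagreement fall; for part (ii), your two-sided-LT argument forcing $\bp(y)=\bq(y)$ on the common prefix is exactly the paper's inductive argument, with the helpful addition of making explicit the summation step that upgrades $\pos{\succ_j}{o''}\geq\pos{\succ'_j}{o''}$ to equality.
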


\begin{proof}
First consider part (i). 
Given $\eps>0$, one can find $\eps_1 > \eps_2 > \cdots > \eps_m > 0$ such that
$\sum_i\eps_i = \eps$. Consider the randomized mechanism $\cM$ that on input $\bsucc$,
returns $f(\bsucc)$ with probability $(1-\eps)$, and with probability $\eps$ 
it chooses a random agent $a$ and returns his $i$-th preference with probability $\eps_i/\eps$.  

It is clear by definition that $\cM$ $\eps$-implements $f$. To prove lex-truthfulness, fix
an agent $j$ and consider any $\bsucc'=(\succ'_j,\bsucc_{-j})$, where 
$\succ'_j \neq\succ_j$. 
Let $o=f(\bsucc)$ and let $o'=f(\bsucc')$. 
Also let $\bp=\cM(\bsucc),\ \bq=\cM(\bsucc')$. 
Let $\1(A)$ be 1 if $A$ is true, and 0 otherwise.
For any outcome $\ho$, we have 
\begin{equation*}
\bp(\ho) - \bq(\ho) = \frac{1}{n}\left(\eps_{\pos{\succ_j}{\ho}} -
\eps_{\pos{\succ'_j}{\ho}}\right) 
+ \1(\ho=o)\cdot(1-\eps)-\1(\ho=o')\cdot(1-\eps).
\end{equation*} 
Considering outcomes in the preference order of $\succ_j$,
let $o''$ be the first outcome such that $\pos{\succ_j}{o''}\neq\pos{\succ'_j}{o''}$. Then
$\pos{\succ_j}{o''}<\pos{\succ'_j}{o''}$. 
By pseudomonotonicity of $f$, we know that $o\succeq_j o'$ or $o''\succ_j o'$. In the
latter case, we have $\bp(\ho)-\bq(\ho)\geq 0$ for all $\ho\succeq_j o''$ and
$\bp(o'')-\bq(o'')>0$, so we are done. In the former case, if $o=o'$ or $o''\succeq_j o$,
then the same argument holds. So suppose $o\succ_j o''$ and $o\succ_j o'$. 
Then $\bp(\ho)-\bq(\ho)\geq 0$ for all $\ho\succeq_j o$ and $\bp(o)-\bq(o)>0$, so again we
are done. 
(Note that mechanism $\cM$ maps distinct inputs to distinct lotteries, and therefore it
satisfies a slightly stronger version of lex-truthfulness: the truth-telling lottery
lex-dominates (i.e., is {\em strictly} superior to) a lottery obtained via a misreport.) 

\medskip
We now prove part (ii).
Let $\cM$ be an LT mechanism that $\ve$-LT implements $f$, where $\ve<\frac{1}{2}$. 
Suppose for a contradiction, there is some agent $j$, and $\bsucc=(\succ_j,\bsucc_{-j})$ and
$\bsucc'=(\succ'_j,\bsucc_{-j})$ such that $o=f(\bsucc)$ and $o'=f(\bsucc')$ violate the
conditions for pseudomonotonicity. That is, we have $o'\succ_j o$ and for every outcome
$o''\succ_j o'$, we have $\pos{\succ_j}{o''}\geq\pos{\succ'_j}{o''}$. This means that
$\pos{\succ_j}{o''}=\pos{\succ'_j}{o''}$ for all $o''\succ_j o'$.
Let $\bp=\cM(\bsucc)$ and $\bq=\cM(\bsucc')$.

Since $\cM$ $\ve$-LT implements $f$, we have $\bp(o')\leq\ve$ and
$\bq(o')\geq 1-\ve$, so $\bp(o')<\bq(o')$. 
Let $o_1,\ldots,o_r$ be the outcomes $o''$ such that $o''\succ_j o'$ listed in decreasing
preference order according to $\succ_j$. 
Since $\pos{\succ_j}{o''}=\pos{\succ'_j}{o''}$ for all $o''\succ_j o'$, we have
$o_\ell\succ'_j o'$ for all $\ell\in[r]$, and the ordering of the $o_\ell$s is the same in
$\succ_j$ and $\succ'_j$. We claim that $\bp(o_\ell)=\bq(o_\ell)$ for all
$\ell=1,\ldots,r$, which contradicts the fact that $\cM$ is lex-truthful.

We prove the claim by induction on $\ell$. Considering $\succ_j$ to be $j$'s true preference
list, we must have $\bp(o_1)\geq \bq(o_1)$, and considering $\succ'_j$ to be $j$'s true
preference list, we must have $\bq(o_1)\geq\bp(o_1)$. Suppose that $\bp(o_k)=\bq(o_k)$ for
$k=1,\ldots,\ell-1$. Again considering $\succ_j$ and $\succ'_j$ be $j$'s true preference
lists in turn, we obtain that $\bp(o_\ell)=\bq(o_\ell)$.
\end{proof}

\subsection{Settings with indifferences} \label{indiff}
As noted earlier, many of the settings we consider involve 
non-strict preferences. In these settings, the outcome set is partitioned into 
{\em indifference classes} $O^j_1,\ldots,O^j_{m_j}$ for each agent $j$. Agent $j$ is
indifferent between any two outcomes in the same indifference class, and has a strict
complete ordering over his indifference classes that specifies his ordering between two
outcomes in different classes.   
Formally, given $\succeq_j\in\Sg_j$, we define
$\pos{\succeq_j}{o}=r\in[m_j]$ if $o$ lies in the indifference class of $j$ ranked $r$
under $\succeq_j$, and 
$\alt{\succeq_j}{r}\sse O$ is now the indifference class of $j$ ranked $r$ under
$\succeq_j$ (that is, $\{o: \pos{\succeq}{o}=r\}$). 
The preferences induced over outcomes are then: $o\succeq_j o'$ iff
$\pos{\succeq_j}{o}\leq\pos{\succeq_j}{o'}$, and $o\succ_j o'$ iff
$\pos{\succeq_j}{o}<\pos{\succeq_j}{o'}$. 
Say that $o\sim_j o'$ if $o$ and $o'$ belong to the same indifference class of $j$. 

One requires mostly notational changes to extend our framework and results to this
more-general setting. With the above notation in place, the definitions of
$\rank_i(o;\bsucceq)$ (as $|\{j:\pos{\succeq_j}{o}\leq i\}|$, $\mrank_i(\bsucc)$,
rank-approximation (Definition~\ref{rankapxdef}) and pseudomonotonicity
(Definition~\ref{pseudodefn}) remain unchanged.  

We extend lex-dominance and lex-truthfulness as follows. 
Since players have indifference classes 
it is not meaningful to consider probabilities assigned to individual outcomes. 
Instead, we aggregate the probability assigned to an indifference class and define
lex-dominance and lex-truthfulness by considering these aggregate probability vectors. 
Given a lottery $\bp\in\cL(O)$ and $S\sse O$, define $\bp(S):=\sum_{o\in S}\bp(o)$. 
and let $\bp'$ be the aggregated probability vector 
$\bigl(\bp'(j,r):=\bp(O^j_r)\bigr)_{j\in N,r\in[m_j]}$.
Given $\succeq_j\in\Sg_j$, and lotteries $\bp\neq\bq\in\cL(O)$, we say
that $\bp$ lex-dominates $\bq$ with respect to $\succeq_j$ if there exists $r\in[m_j]$
such that $\bp'(j,r)>\bq'(j,r)$ 
and $\bp'(j,\ell)=\bq'(j,\ell)$
for all $\ell=1,\ldots,r-1$. 
Then, as in Definition~\ref{lext}, a mechanism $\cM$ is lex-truthful if for all $\bsucceq$,
all $j\in N$ and all $\succeq'_j\in\Sg_j$, either 
the aggregated probability vectors of $\cM(\bsucc)$ and $\cM(\succeq'_j,\succeq_{-j})$ are
equal, or $\cM(\bsucc)$ lex-dominates $\cM(\succeq'_j,\succeq_{-j})$ with respect to
$\succeq_j$. 

We can now mimic the proof of Theorem~\ref{pseudo} to prove the following analogue for the
above setting, showing that pseudomonotonicity is necessary and sufficient for full
LT-implementability. The proof appears in \myApp{append-ranklex}.

\begin{theorem} \label{newpseudo}
(i) Let $f$ be a pseudomonotone SCF. Then 
$f$ is fully lex-truthfully implementable.

\noindent
(ii) Conversely, if $f$ is $\ve$-lex-truthfully implementable for some $\ve<\frac{1}{2}$,
then $f$ is pseudomonotone. 
\end{theorem}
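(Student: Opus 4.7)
The plan is to mimic the proof of Theorem~\ref{pseudo} closely, replacing individual outcome probabilities $\bp(o)$ with the aggregated probability vectors $\bp'(j,r) = \bp(O^j_r)$ and using the induced strict preference $o \succ_j o'$ iff $\pos{\succeq_j}{o} < \pos{\succeq_j}{o'}$ throughout. The construction for the sufficiency direction and the double-LT bootstrap for the necessity direction carry over, modulo some care in how the noise mass is distributed within indifference classes.

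For part (i), I would define $\cM$ as follows: with probability $1-\ve$ return $f(\bsucceq)$; with probability $\ve$, pick an agent $a\in N$ uniformly at random, pick an indifference-class index $r$ with probability $\delta^a_r$, and return a uniformly random element of $O^a_r$. Here $(\delta^a_r)_{r\in[m_a]}$ is a strictly decreasing distribution derived from a single global sequence $\delta_1 > \delta_2 > \cdots > \delta_M$ with $M = \max_a m_a$; choosing the $\delta_r$ to be geometric with ratio greater than $2$ (so $\delta_r > \sum_{r'>r}\delta_{r'}$) ensures that folding the unused tail $\sum_{r'\ge m_a}\delta_{r'}$ into an agent's last class $O^a_{m_a}$ preserves strict monotonicity in $r$. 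The key observation, exactly as in Theorem~\ref{pseudo}(i), is that when agent $j$ switches from $\succeq_j$ to $\succeq'_j$, each other agent $a\neq j$ contributes the same outcome distribution to both lotteries and hence the same aggregate mass to every $O^j_r$, so the differences $\bp'(j,r) - \bq'(j,r)$ reduce to the $f$-term and $j$'s own noise terms. Applying pseudomonotonicity then identifies the first position (in $\succeq_j$) at which $\bp'(j,\cdot)$ strictly exceeds $\bq'(j,\cdot)$, yielding the required lex-dominance.

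For part (ii), I would argue by contradiction, echoing the second half of Theorem~\ref{pseudo}. Assume $\cM$ $\ve$-LT-implements $f$ with $\ve<\frac{1}{2}$, and suppose some PM-violating profile exists: $o = f(\bsucceq)$, $o' = f(\bsucceq')$ with $o'\succ_j o$ and $\pos{\succeq_j}{o''} \geq \pos{\succeq'_j}{o''}$ for every $o''\succ_j o'$. Setting $k = \pos{\succeq_j}{o'}$, the implementation guarantee yields $\bp(O^j_k) \leq \ve$ (since $o\notin O^j_k$) and $\bq(O^j_k) \geq 1-\ve$, so the aggregated vectors under $\succeq_j$ already differ at position $k$ with $\bq$ strictly larger. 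The desired contradiction would follow by inductively showing $\bp'(j,r) = \bq'(j,r)$ for $r=1,\ldots,k-1$: LT with truth $\succeq_j$ forces $\bp'(j,r) \geq \bq'(j,r)$ at the first disagreement, while LT with truth $\succeq'_j$ forces the reverse inequality on the corresponding bucket of $\succeq'_j$'s partition; PM violation should make these two constraints talk about the same aggregate quantities and thereby pin them down to equality.

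The hard part is precisely this last translation step in part (ii). In the strict case, PM violation immediately gives $\pos{\succ_j}{o''} = \pos{\succ'_j}{o''}$ for every $o''\succ_j o'$, so the top-$(k-1)$ outcomes and their ordering agree exactly in $\succ_j$ and $\succ'_j$, and the two LT constraints are about identical per-outcome probabilities. In the indifference setting, the aggregated vectors under $\succeq_j$ and under $\succeq'_j$ partition $O$ differently, so one must verify that PM violation in fact forces $O^j_r = O'^j_r$ (or otherwise yields matching aggregated masses) on the top $k-1$ buckets; this is the only step where the ``cosmetic notational changes'' have genuine content, and it is where the inductive bootstrap either goes through as in the strict case or requires strengthening the definitions to recover the characterization.
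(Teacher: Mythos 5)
Your construction and proof plan follow the paper's own argument for Theorem~\ref{newpseudo}. The only mechanical difference is in part (i): you return a uniformly random element of $O^a_r$, whereas the paper fixes one representative $o^a_r\in O^a_r$ per class and always returns it; both work, since the analysis only depends on the aggregate noise mass placed on each of $j$'s indifference classes, and that mass is identical under either choice. Your global geometric $\delta$-sequence with tail-folding is also fine, though unnecessary: the paper simply allows each agent $a$ its own strictly-decreasing positive sequence $\eps^a_1>\cdots>\eps^a_{m_a}$ summing to $\eps$, which achieves the same thing with less fuss.

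The hedge you leave open at the end of part (ii) does close, and no strengthening of the definitions is needed. The key fact is that the indifference partition $\{O^j_r\}_{r\in[m_j]}$ of $O$ is a structural property of agent $j$, independent of his report: $\succeq_j$ and $\succeq'_j$ merely permute the rank labels on the same underlying classes. Hence the map $\sigma$ sending the $\succeq_j$-rank of a class to its $\succeq'_j$-rank is a permutation of $[m_j]$. A PM violation gives $\pos{\succeq_j}{o''}\geq\pos{\succeq'_j}{o''}$ for all $o''\succ_j o'$, which is exactly $\sigma(r)\leq r$ for every $r<k$, where $k:=\pos{\succeq_j}{o'}$. Injectivity of $\sigma$ then forces $\sigma(r)=r$ for $r=1,\ldots,k-1$ by a trivial induction. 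So the top $k-1$ buckets under $\succeq_j$ and under $\succeq'_j$ are the \emph{same} classes in the \emph{same} order, the two lex-truthfulness constraints (truth $\succeq_j$ and truth $\succeq'_j$) therefore speak about the same aggregate quantities $\bp'(j,r),\bq'(j,r)$ on those positions, and the inductive bootstrap pins them to equality for $r<k$ exactly as in the strict case. Combined with $\bp'(j,k)\leq\eps<1-\eps\leq\bq'(j,k)$ from the $\eps$-implementation guarantee, this makes $\bq$ lex-dominate $\bp$ with respect to $\succeq_j$, contradicting lex-truthfulness and completing part (ii). This is precisely what the paper asserts when it passes from $\pos{\succ_j}{o''}\geq\pos{\succ'_j}{o''}$ to $\pos{\succ_j}{o''}=\pos{\succ'_j}{o''}$; you correctly identified it as the one spot where the indifference structure matters, but the step does go through.
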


\section{Matching markets} \label{sec:matching} \label{matching}
Recall that in a matching market there are $n$ agents and $m$ items, and outcomes are
matchings of agents to items. Each agent $j$ has a strict total ordering over items, which
induces his preferences over outcomes: $j$ prefers outcome $o$ to $o'$ if he prefers his
allotted item in $o$ to the one in $o'$.

We show in \myApp{known-mat} that various common mechanisms
proposed in the literature for matching markets 
{all have bad rank approximation}. 
In contrast, we devise a simple deterministic algorithm, \mma, that is a pseudomonotone,    
$2$-rank-approximation algorithm, and hence, yields an LT mechanism
(Theorem~\ref{matching-ub}).   
We complement this by showing two lower bounds. Theorem~\ref{matching-lb}
shows that $2$ is the best rank approximation achievable by {\em any}
deterministic algorithm, proving the tightness of our positive result. 
Next, Theorems~\ref{nobossy} and~\ref{truthlb} demonstrate limitations of 
{\em deterministic} truthful mechanisms for matching markets by showing that such
mechanisms cannot achieve any constant rank approximation.

\vspace{-1.5ex}
\paragraph{Algorithm \boldmath \mma}
Fix a tie-breaking rule over agents. 
On input $\bsucc$, \mma allocates items to agents in $m$ stages.
In stage $r$, we consider the bipartite graph $G_r$ with agents and items as vertices, and
an edge from agent $j$ to item $i$, if $i$ is a top-$r$ item of agent $j$.  
Note that $\maxrank_r(\bsucc)$ is precisely the size of the maximum matching in $G_r$. 
Let $M$ denote the current matching of agents to items (which is $\es$ when $r=1$), which
is a matching in $G_r$. We maintain that at the beginning of stage $r$, $M$ is a 
{\em maximal matching} in $G_{r-1}$; observe this is true when $r=1$.
Since $M$ is a {\em maximal matching} in $G_{r-1}$, an agent has an edge to at most one
item in $G_r\sm M$, where $G_r\sm M$ is the graph obtained from $G_r$ by deleting the
nodes matched by $M$. 
For every unmatched item $i$ that has non-zero degree in $G_r\sm M$ (i.e., $i$ is an
unmatched item that is a top-$r$ item of some unmatched agent)  
we use our tie-breaking rule to pick an agent $j\in G_r\sm M$; we assign item $i$ to $j$
and update $M$. 
Thus, $M$ is updated to a maximal matching in $G_r$. We output the matching at the end of $m$ stages.

\begin{theorem} \label{thm:matching-ub} \label{matching-ub}
\mma a pseudomonotone, $2$-rank approximation algorithm for matching markets, and hence
is fully LT-implementable.
\end{theorem}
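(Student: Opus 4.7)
The plan is to decouple the theorem into its two claims---$2$-rank-approximation and pseudomonotonicity---and then invoke Theorem~\ref{pseudo} for full LT-implementability. The rank-approximation bound follows essentially from the algorithm's own maintained invariant that $M$ at the end of stage $r$ is a maximal matching in $G_r$. Combined with the standard fact that in any bipartite graph a maximal matching has at least half the size of a maximum matching, and the observation that the maximum matching in $G_r$ has size exactly $\maxrank_r(\bsucc)$, this yields $|M|\geq\maxrank_r(\bsucc)/2$ at end of stage $r$. Since every agent matched by stage $r$ has received one of their top-$r$ items and the algorithm never unmatches, the final output satisfies $\rank_r(\mma(\bsucc);\bsucc)\geq\maxrank_r(\bsucc)/2$ for every $r$. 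This part is routine.

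The main work is in verifying pseudomonotonicity. Fix an agent $j$ with truthful preference $\succ_j$ and misreport $\succ'_j$, common tail $\bsucc_{-j}$, and let $o=\mma(\succ_j,\bsucc_{-j})$ and $o'=\mma(\succ'_j,\bsucc_{-j})$. Let $k$ be the largest integer for which the top-$k$ of $\succ_j$ and $\succ'_j$ coincide. The graphs $G_1,\ldots,G_k$ then agree on both inputs, so the algorithm's execution, and in particular $j$'s matching status, is identical through stage $k$. If $j$ is matched by stage $k$, he receives the same item in $o$ and $o'$, so condition~(i) of pseudomonotonicity follows. Otherwise, let $r,r'$ denote the $\succ_j$-ranks of $j$'s items in $o,o'$. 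If $r\le r'$, condition~(i) again holds; if $r>r'$, I will take $i^*:=\alt{\succ_j}{k+1}$ as the witness for condition~(ii). By maximality of $k$, $i^*$ sits at position $k+1$ in $\succ_j$ but at position $\ge k+2$ in $\succ'_j$, giving $\pos{\succ_j}{i^*}<\pos{\succ'_j}{i^*}$ immediately; it remains to verify $i^*\succ_j i'$, i.e., $r'\ge k+2$.

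The last step is the delicate one and I expect it to be the main obstacle. The case $r'\le k$ is impossible since stages $1,\ldots,k$ agree on both inputs and would force $r=r'$. To rule out $r'=k+1$ (which would mean $i'=i^*$), I will exploit the maximal-matching invariant: since $M$ is maximal in $G_k$, any unmatched agent's only edge to an unmatched item in $G_{k+1}$ goes to his $(k+1)$-th choice. If $r'=k+1$, then $j$ receives $i^*$ in $\bsucc'$ at some stage $s\ge k+2$, so $i^*$ remains unmatched at end of stage $k+1$ in $\bsucc'$; consequently any agent $j''\ne j$ with $i^*$ as his $(k+1)$-th choice that is unmatched at the start of stage $k+1$ would stay unmatched until $i^*$ is processed and therefore be a live candidate when $i^*$ is processed, contradicting $i^*$'s non-matching. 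Hence $i^*$ has no candidate other than $j$ at stage $k+1$ in $\bsucc$ either, and by the same reasoning $j$ stays unmatched until $i^*$ is processed, so $j$ receives $i^*$ and $r=k+1=r'$, contradicting $r>r'$. Thus $r'\ge k+2$, pseudomonotonicity is established, and Theorem~\ref{pseudo} yields full LT-implementability.
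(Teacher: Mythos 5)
Your proof is correct and follows the same overall strategy as the paper: the $2$-rank-approximation bound via the maximal-matching invariant, and, for pseudomonotonicity, taking $k$ to be the longest common prefix of $\succ_j$ and $\succ'_j$, noting the two runs coincide through stage $k$, and exhibiting $i^*=\alt{\succ_j}{k+1}$ as the condition-(ii) witness.

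The one place where you and the paper genuinely diverge is the ``delicate step'' you flagged. After noting that $(j,i^*)\notin G_{k+1}$ under $\succ'_j$, the paper simply concludes that ``$j$ does not obtain $i^*$ or a top-$k$ item under input $\bsucc'$.'' Read literally, that inference is not sound: the missing edge only rules out $j$ receiving $i^*$ at stage $k+1$, not at the later stage $\pos{\succ'_j}{i^*}>k+1$, and one can construct small instances where $i^*$ goes unclaimed during stage $k+1$ and $j$ picks it up later under $\bsucc'$. Your additional argument is exactly what closes this gap: if $j$ does receive $i^*$ under $\bsucc'$ at some stage past $k+1$, then $i^*$ was unmatched (and had no candidate) at stage $k+1$ of $\mma(\bsucc')$; since the two runs are identical through stage $k$, in $\mma(\bsucc)$ the item $i^*$ is also unmatched at the start of stage $k+1$ with $j$ as its sole candidate, and $i^*$ is also $j$'s only remaining option in $G_{k+1}$ by maximality, so $j$ receives $i^*$ there too, giving $o=o'$ and condition~(i). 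So this is not a different route so much as a careful completion of the paper's own route; the extra case you worked through is necessary, and the paper's terser phrasing elides it.
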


\begin{proof}
The 2-rank-approximation guarantee of \mma follows immediately from the fact that \mma
maintains a maximal matching in the ``top-$r$'' graph $G_r$ for all $r$, and the size of
any maximal matching is at least half the size of a maximum matching, and thus at least 
$\mrank_r(\bsucc)/2$.      

Fix an agent $j$. 
Suppose that $j$ deviates from
$\succ_j$ to $\succ'_j$ without altering his top-$r$ items and their ordering, that is, 
$\alt{\succ_j}{\ell}=\alt{\succ'_j}{\ell}$ for all $\ell=1,\ldots,r$, and
$\pos{\succ'_j}{i}>r+1$ for $i=\alt{\succ_j}{r+1}$.
Let $\bsucc=(\succ_j,\bsucc_{-j})$ and $\bsucc'=(\succ'_j,\bsucc_{-j})$.
Since the other agents' inputs have not changed, $\mma(\bsucc)$ and $\mma(\bsucc')$ proceed
identically up to the end of stage $r$. So if $j$ has been assigned an item by this time
(which happens in both runs) we are done. Otherwise, in $\mma(\bsucc')$, all of $j$'s
top-$r$ items are unavailable, and since $j$ demotes $i$ in $\succ'$, edge $(j,i)$ does
not belong to the graph $G_{r+1}$ constructed in stage $r+1$; so $j$ does not obtain $i$
or a top-$r$ item under input $\bsucc'$. This proves pseudomonotonicity.
\end{proof}

\begin{theorem} \label{matching-lb} 
For every $\e>0$, there is a matching market on which every deterministic algorithm has
rank-approximation factor at least $2-\e$.   
\end{theorem}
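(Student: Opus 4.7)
The plan is to exhibit, for every $\e>0$, a matching market instance on which every matching---and therefore every deterministic algorithm's output---achieves rank-approximation factor at least $2-\e$. Since a deterministic algorithm produces a fixed matching on a fixed input, this reduces to finding an instance $I$ for which $\min_M \max_r \mrank_r(I)/\rank_r(M;I) \geq 2-\e$: every matching $M$ has some rank level $r$ at which $\rank_r(M;I)$ falls short of $\mrank_r(I)$ by nearly a factor of $2$.

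I would use a parametrized family of instances with $k = \Theta(1/\e)$ agents and items, engineered so that the maximum matchings in the top-$1$ and top-$2$ graphs $G_1$ and $G_2$ have sizes roughly $\mrank_1 \approx k$ and $\mrank_2 \approx 2k$, and these maximum matchings are combinatorially ``orthogonal'' in the sense that achieving one forces matching choices incompatible with the other. A candidate construction partitions agents into two classes: a ``primary'' class $A=\{a_1,\ldots,a_k\}$ whose members have distinct top-$1$ items $x_1,\ldots,x_k$, and a ``secondary'' class $B=\{b_1,\ldots,b_k\}$ whose members share a common top-$1$ item $x_0$ but whose distinct top-$2$ items coincide with the primary class's top-$1$ items (so $a_i$ and $b_i$ compete for $x_i$). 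This directly pits primary agents' top-$1$ choices against secondary agents' top-$2$ choices.

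The verification would proceed by parametrizing any matching $M$ by the counts $(|T|, |V|, |W|)$ of primary agents matched to their top-$1$ item, secondary agents matched to the shared top-$1$ item $x_0$, and secondary agents matched to their top-$2$ item, subject to the conflict constraint $T \cap W = \emptyset$ (since item $x_i$ cannot simultaneously be allocated to both $a_i$ and $b_i$). I would then show by case analysis on these counts that $\max\bigl(\mrank_1/\rank_1(M;I),\, \mrank_2/\rank_2(M;I)\bigr) \geq 2-\e$ for all valid parameter settings, using the combinatorial inequalities imposed by the conflict structure and accounting for matches to items outside the top-$2$ (which contribute only to $\rank_r$ for larger $r$ and thus do not help approximate $\mrank_2$).

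The main obstacle is designing the gadget so that the trade-off slope is sharp enough to yield a factor-$2$ bound rather than a weaker $3/2$ or $4/3$. A naive ``one-shared-item'' conflict gadget yields only a $3/2$ lower bound, because the algorithm can always find a ``compromise'' matching that is simultaneously $2/3$-optimal for both $\rank_1$ and $\rank_2$. Achieving the full factor-$2$ bound will likely require either chaining together many smaller conflict gadgets in parallel, or using auxiliary items carefully so that each unit gain in $\rank_2$ costs exactly one unit of $\rank_1$, thereby forcing the Pareto frontier of achievable $(\rank_1, \rank_2)$ vectors to have slope $-1$ up to lower-order corrections as $k \to \infty$.
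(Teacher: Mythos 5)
Your high-level framing is correct: a deterministic algorithm on a fixed instance outputs one fixed matching, so the theorem reduces to exhibiting an instance $I$ on which \emph{every} matching $M$ has some rank $r$ with $\rank_r(M;I) \leq \tfrac{1}{2-\eps}\mrank_r(I)$. You also correctly diagnose that a single primary/secondary conflict between rank $1$ and rank $2$ tops out at a $3/2$ lower bound, because a ``compromise'' matching can sacrifice a bit at both levels. Where the proposal breaks down is that it stops at the diagnosis: you never actually construct the instance, and the two specific fixes you gesture at would not suffice as stated. Putting disjoint copies of the $3/2$ gadget in parallel does not sharpen the bound---a matching can be simultaneously $2/3$-optimal in every copy, so the worst-over-ranks ratio stays at $3/2$. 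And tuning the $(\rank_1,\rank_2)$ Pareto frontier to slope $-1$ does not by itself yield factor $2$ either, since the achievable ratio also depends on the relative magnitudes of $\mrank_1$ and $\mrank_2$; with only two ranks in play one has too much slack.

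The missing idea is to make the conflict span $K$ ranks rather than two, via a \emph{sequential forcing chain} rather than a single (or parallel) gadget. The paper's instance has $n = 2K-1$ agents/items (with $K = \Theta(1/\eps)$) arranged so that $\mrank_r \geq 2r$ for $r < K$ and $\mrank_K = n$. The preferences are engineered so that, for each $r = 1,\ldots,K-1$ in turn, the \emph{only} way to achieve $\rank_r(M) > \mrank_r/2$---given that players $1,\ldots,r-1$ are already forced to items $1,\ldots,r-1$---is to additionally match player $r$ to item $r$; the auxiliary items $K+1,\ldots,2K-2$ exist only so that a \emph{different} matching can hit $\mrank_r \geq 2r$, giving the gap to exploit. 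Inductively, being better than a $2$-approximation at every rank $r < K$ forces players $1,\ldots,K-1$ onto items $1,\ldots,K-1$; but then at rank $K$ the remaining $K$ agents all compete for the single item $n$, so $\rank_K(M) \leq K$ while $\mrank_K = 2K-1$, which is a ratio $\to 2$. This cascaded forcing is what you would need to supply: a proof that near-optimality at ranks $1$ through $K-1$ uniquely pins down the matching on a prefix of agents, so that the failure is deferred to rank $K$ where the ratio is maximally bad. Without constructing and verifying such a chain, the proposal does not establish the claimed $2-\eps$ bound.
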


\begin{proof}
Let $K\geq 1$ be an integer such that $\frac{K}{2K-1}\leq\frac{1}{2-\e}$.
We create an instance with $n=2K-1$ players and items.
We specify the first $K$ preferences of the players; the remaining preferences may
be set arbitrarily. 
Let $\bsucc$ denote the resulting input (with arbitrary remaining preferences). 
Since this is the only input we consider, we drop the $\bsucc$ in $\rank_r(o;\bsucc)$ and 
$\mrank_r(\bsucc)$ in the sequel.
\begin{list}{--}{\itemsep=0ex \topsep=0ex \addtolength{\leftmargin}{-3.5ex}}
\item For $r=2,\ldots,K-1$, the $r$-th preference of a player $j$ is item:
$\left\{
\begin{tabular}{l@{\ ;\qquad}l} 
$r$ & if $j=r$, \\
$K+r-1$ & if $j=r-1$, \\
$r-1$ & otherwise. 
\end{tabular}\right.$
\item The first preference of a player $j$ is: item 1 if $j=1$, and item $n$
otherwise. 
\item The $K$-th preference of a player $j$ is: item $K$ if $j=K-1$, and item
$K-1$ otherwise.
\end{list}

\smallskip
First, we claim that $\mrank_r\geq 2r$ for all $r\in[K-1]$. 
For $r=1$, this is achieved by matching player 1 to item 1, and an arbitrary other player
to item $n$.  
For $r=2,\ldots,K-1$, this is achieved by matching player $r$ to item $r$, each player
$j\in[r-1]$ to item $K+j$, matching one player from $\{r+1,\ldots,n\}$ to item $n$
and $r-1$ other arbitrary players from $\{r+1,\ldots,n\}$ arbitrarily to items in
$[r-1]$. Note that each player is matched to a top-$r$ item in this matching. 
Also, $\mrank_{K}=n$. This is achieved by matching player $K-1$ to
item $K$, each player $j\in[K-2]$ to item $K+j$, matching one player from
$\{K,\ldots,n\}$ to item $n$, and the remaining $K-1$ players from
$\{K,\ldots,n\}$ arbitrarily to items in $[K-1]$.    

Now fix a matching $o$. We show that if $\rank_r(o)>\mrank_r/2$ for
$r=1,\ldots,K-1$, then we must have $\rank_{K}(o)\leq K\leq\mrank_{K}/(2-\e)$. 
Thus, we cannot have $\rank_r(o)>\mrank_r/(2-\e)$ for all $r\in[K]$. 

We show by induction on $r$ that if $\rank_\ell(o)>\mrank_\ell/2$ for all
$\ell\in[r]$, where $r<K$, then $o$ must match player $\ell$ to item $\ell$ for all 
$\ell\in[r]$. 
For the base case, if $\rank_1(o)>\mrank_1/2\geq 1$, then $o$ must match player 1 to item
1 since all other players have item $n$ as their top item. 
For the induction step, suppose that $\rank_\ell(o)>\mrank_\ell/2$ for all
$\ell\in[r]$, where $1<r<K$. Then, by the induction hypothesis, we know that $o$ 
matches player $\ell$ to item $\ell$ for all $\ell\in[r-1]$. We require that
$\rank_r(o)\geq r+1$. Examining the preferences of the players in $\{r,\ldots,n\}$, we
see that for player $r$, items $r$ and $n$ are the only unmatched items in his top-$r$
list, and for a player $j\in\{r+1,\ldots,n\}$, item $n$ is the only unmatched item in
$j$'s top-$r$ list. Therefore, $\rank_r(o)\geq r+1$ is only possible if $o$ matches player
$r$ to item $r$. 

Given the above claim, for players $j=K,\ldots,n$, item $n$ is the only unmatched item in
their top-$K$ list, so $\rank_{K}(o)\leq K$.
\end{proof}

We now show that randomization is necessary to achieve good rank approximation via
truthful mechanisms. 
As a warm up, we first prove a lower bound of $n-1$ on the rank-approximation factor
achievable by 
truthful {\em no-bossy} mechanisms~\cite{Satterthwaite75}.
A {\em no-bossy mechanism} for matching markets is one where no agent can change his
preference and modify the outcome without also modifying his own allocation.

Suppose there are $n$ items. 
Let $\succ^{*} := (1,2,\ldots,n)$ denote the ordering where item $i$ is
the $i$-th ranked item, for all $i\in[n]$.
Let $\succ^{*} \circ (k-1,1)$, denote the preference list that is
identical to $\succ^*$ except that items $(k-1)$ and $1$ are swapped. That is, $(k-1)$ is
the top-item, $1$ is the $k$th-choice, and item $i$ is the $i$-th choice for all 
$i\neq 1, k-1$. 
Given $n$ agents and any set $S\subseteq\{2,3,\ldots,n\}$, let $\bsucc^{S}$ be the
preference profile where each agent $k\in S$ has preference $\succ^{*}\circ (k-1,1)$,
while each $k\notin S$ has preference $\succ^{*}$. 
Thus, $\bsucc^{\es}$ is the preference profile where every agent has the same preference
order $\succ^{*}$ over items. 
For notational convenience, we think of a player who is not assigned an item as being
assigned item $n+1$, which is lower ranked than any (true) item in $[n]$. 

\begin{theorem} \label{thm:nobossy} \label{nobossy}
No deterministic truthful no-bossy mechanism for matching markets can have
rank-approximation smaller than $(n-1)$.
\end{theorem}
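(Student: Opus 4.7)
The plan is to exhibit a single profile on which any deterministic truthful no-bossy mechanism $\cM$ must produce a matching with $\rank_1 = 1$ while $\mrank_1 = n - 1$; the target profile is $\bsucc^{\{2,\ldots,n\}}$. I would first reduce to the case where $\cM(\bsucc^\es)$ is the identity matching (agent $k$ receives item $k$). Because $\bsucc^\es$ is symmetric in agents, one can relabel agents so that the original recipient of item $k$ under $\cM(\bsucc^\es)$ becomes agent $k$; this leaves $\bsucc^\es$ unchanged as a profile and preserves truthfulness, no-bossiness, and the rank-approximation factor of $\cM$.

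The key step is an induction on $|S|$ showing $\cM(\bsucc^S)$ equals the identity matching for every $S \sse \{3,\ldots,n\}$ (note $\succ^* \circ (1,1) = \succ^*$, so this already covers $\bsucc^{\{2,\ldots,n\}}$). For the inductive step, pick $k \in \{3,\ldots,n\} \sm S$ and compare $\cM(\bsucc^S)$, under which agent $k$ reports $\succ^*$ and receives item $k$ by the inductive hypothesis, with $\cM(\bsucc^{S \cup \{k\}})$, under which only agent $k$'s report has changed to $\succ^* \circ (k-1,1)$. Let $a$ denote the item agent $k$ receives in $\cM(\bsucc^{S \cup \{k\}})$. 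Truthfulness with true preference $\succ^*$ forces $k \succeq^* a$, so $a \geq k$. Truthfulness in the reverse direction (true preference $\succ^* \circ (k-1,1)$, misreport $\succ^*$) forces $a$ to be ranked at least as high as item $k$ in $\succ^* \circ (k-1,1)$; since item $k$ sits at position $k$ in this list, $a$ lies among the top-$k$ items, namely $\{1,2,\ldots,k\}$. Combining gives $a = k$, so agent $k$'s allocation is unchanged, and by no-bossiness the entire matching remains the identity.

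Given that $\cM(\bsucc^{\{2,\ldots,n\}})$ is the identity, the lower bound is immediate. In this profile agents $1$ and $2$ both have top choice item $1$, and each agent $k \geq 3$ has top choice item $k-1$; matching agent $1$ to item $1$ and each $k \geq 3$ to item $k-1$ witnesses $\mrank_1 \geq n - 1$. Under the identity matching only agent $1$ obtains their top choice, so $\rank_1 = 1$, yielding a rank-approximation factor of at least $n-1$. The most delicate point I anticipate is formalizing the initial WLOG relabeling, specifically verifying carefully that the relabeled mechanism inherits no-bossiness and the same worst-case rank-approximation factor; the remainder of the argument is driven by the rigid two-sided truthfulness constraint forcing $a = k$ in the inductive step.
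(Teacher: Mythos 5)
Your argument is essentially the paper's own proof, just presented more cleanly as an explicit induction on $|S|$: the two-sided truthfulness squeeze that pins agent $k$'s allocation, followed by no-bossiness to freeze the rest of the matching, is identical in content, and the final rank count on $\bsucc^{\{2,\ldots,n\}}$ is the same. Your observation that $\succ^*\circ(1,1)=\succ^*$, so the induction need only reach $S=\{3,\ldots,n\}$, is a tidy bookkeeping simplification the paper leaves implicit.

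The one point that needs repair is the reduction to the identity matching. Agent relabeling alone cannot in general force $\cM(\bsucc^{\es})$ to be the full identity: the mechanism may leave agents unmatched, and even among matched agents the set of assigned items need not be $\{1,\ldots,\cN\}$; neither truthfulness nor no-bossiness forces perfection or Pareto efficiency. (The paper's own phrasing ``assigns item $i$ to player $i$ for all $i\in[\cN]$'' is slightly loose in the same way.) The normalization that is actually available is weaker but sufficient: sort the matched agents so that $\sigma(1)<\sigma(2)<\cdots<\sigma(\cN)$, where $\sigma(i)$ is the item agent $i$ receives under $\cM(\bsucc^{\es})$. This forces $\sigma(k)\geq k$, hence $\sigma(k)\notin\{1,k-1\}$ for $k\geq 2$, so $\sigma(k)$ keeps its position under $\succ^*\circ(k-1,1)$; your squeeze then gives $a=\sigma(k)$ rather than $a=k$, and unmatched agents stay unmatched by the one-sided truthfulness constraint alone. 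Since $\sigma(k)\geq k>k-1$, no agent $k\geq 2$ can obtain its top choice under $\bsucc^{\{2,\ldots,n\}}$, so $\rank_1\leq 1$ and the bound follows exactly as you conclude. Note that the worry you actually flag about the WLOG---whether relabeling preserves no-bossiness and the rank-approximation factor---is not where the subtlety lies (any fixed agent permutation preserves both); the real issue is that the target normalization ``identity matching'' is not achievable as stated.
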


\begin{proof}
We consider a matching market with $n$ (agents and) items. 
Let $\cM$ be any deterministic truthful no-bossy mechanism.
Suppose that $\cM(\bsucc^{\es})$ assigns items to $\cN$ players.
By renaming players if necessary, we may assume that $\cM(\bsucc^{\es})$ assigns item $i$
to player $i$ for all $i\in[\cN]$, and item $n+1$ to the remaining players. 

Consider the input $\bsucc^{\{k\}}$.
We claim that $\cM(\bsucc^{\{k\}}) = \cM(\bsucc^{\es})$. 
{\em Due to no-bossiness}, it suffices to show that agent $k$'s allocation is the same in  
$\cM(\bsucc^{\{k\}})$ and $\cM(\bsucc^{\es})$. 
Suppose agent $k$ obtains item $i$ in $\cM(\bsucc^{\{k\}})$. Invoking truthfulness when 
$k$'s true preference list is $\succ^*$ (and the other players' preference lists are
$\succ^*$), we obtain that $k\succeq^* i$, that is, $k\leq i$. Similarly, if $k$'s true
preference list were $\succ^*\circ(k-1,1)$, then truthfulness dictates that 
$i\leq k$. Hence, we have $i=k$.

The above argument can be generalized to show that for any $S\sse[n]$, we have 
$\cM(\bsucc^{S})=\cM(\bsucc^{S\setminus k})=\cM(\bsucc^{\es})$ for all $k\in S$. In 
particular, $\cM(\bsucc^{\{2,\ldots,n\}})$ assigns item $i$ to player $i$ for all
$i\in[\cN]$ and leaves the other players unassigned. 
So under the preference profile $\bsucc^{\{2,\ldots,n\}}$, at most one agent, agent $1$,
gets his top choice; however, assigning every player $j>1$ item $j-1$ yields an outcome
where $n-1$ agents get their top choice.
\end{proof}

While no-bossiness was crucial above, we show via a more sophisticated argument that
no deterministic truthful mechanism can obtain constant rank approximation.   

\begin{theorem} \label{truthlb}
Every deterministic truthful mechanism has rank approximation 
$\Omega\bigl(\frac{\log\log n}{\log\log\log n}\bigr)$. 
\end{theorem}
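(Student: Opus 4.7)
The plan is to extend the adversarial construction used in Theorem~\ref{nobossy}, which crucially relied on no-bossiness to compose single-agent preference swaps cleanly. Without no-bossiness, a single top-item swap may ripple out and reassign items among other agents, so we cannot directly glue $n-1$ such swaps into the profile $\bsucc^{\{2,\ldots,n\}}$. However, truthfulness still forces a Maskin-monotonicity-style rigidity on $\cM$: whenever $\cM(\bsucc)=\mu$ and $\bsucc'$ agrees with $\bsucc$ on all items weakly preferred (by each agent) to her own allocation in $\mu$, we must have $\cM(\bsucc')=\mu$. I would leverage this rigidity together with an iterative pigeonhole argument to force $\cM$ into a profile with bad rank approximation.

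Fix a deterministic truthful mechanism $\cM$ with rank-approximation factor $\alpha$. I would carry out a recursive construction over $t=1,\ldots,T$ rounds, where $T=\Theta(\log\log n/\log\log\log n)$. At round $t$, I maintain a subset $N_t\sse N$ of ``surviving'' agents with a fixed baseline profile $\bsucc^{(t)}$ yielding allocation $\mu_t=\cM(\bsucc^{(t)})$. I then build a family of candidate profiles obtained from $\bsucc^{(t)}$ by patterns of top-item swaps restricted to $N_t$ at depth $K_t$, where $K_t$ grows with $t$. Truthfulness together with the assumed rank-approximation bound restrict how $\cM$ may respond to each perturbation: either $\cM$ outputs an allocation that violates $\alpha$-rank-approximation at some rank on the perturbed profile (which immediately yields the desired contradiction), or $\cM$ essentially preserves $\mu_t$ on a large subset. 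By pigeonhole over the family, I extract a sub-family of perturbations on which $\cM$ agrees with $\mu_t$ on a common subset $N_{t+1}$ of size at least roughly $|N_t|^{1-1/\alpha}$, which I use to seed round $t+1$.

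After $T$ rounds, the accumulated perturbations produce a final profile on which the optimal assignment gives every one of the $|N_T|$ surviving agents a top choice, while $\cM$'s output is pinned down (by the accumulated rigidity) to still agree with the original baseline allocation on $N_T$ and therefore leaves most surviving agents poorly matched. The requirement $|N_T|\ge 2$ translates into an inequality of the form $\alpha^{\alpha^{\cdots^{\alpha}}}\le n$ (a tower of height $T$), which rearranges to $\alpha=\Omega(\log\log n/\log\log\log n)$.

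The main obstacle is the quantitative pigeonhole step in the recursion: for the chosen swap families, one must prove a lemma of the form ``a positive fraction of swap patterns must leave the allocation unchanged on a common large subset, else rank approximation is violated on one of them.'' Controlling this fraction determines the shrinkage rate $|N_{t+1}|/|N_t|$, and balancing it against the growth of the depth $K_t$ so that the recursion survives $\Theta(\log\log n/\log\log\log n)$ rounds---neither collapsing $N_T$ below $2$ nor letting $K_t$ exceed $n$---is the technical heart of the argument. The extension of Maskin monotonicity from single-agent to multi-agent simultaneous perturbations (needed to certify that the ``benign'' sub-family preserves $\mu_t$ simultaneously on $N_{t+1}$) is the other delicate ingredient.
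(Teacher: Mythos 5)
Your high-level plan --- a recursive pruning argument with a tower-type shrinkage that forces $\alpha = \Omega(\log\log n / \log\log\log n)$ --- is the right shape, and it matches the structure of the paper's proof. But the key lemma you lean on is false, and it is false in exactly the way that matters here.

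You claim that truthfulness alone forces a Maskin-monotonicity-style rigidity: ``whenever $\cM(\bsucc) = \mu$ and $\bsucc'$ agrees with $\bsucc$ on all items weakly preferred (by each agent) to her own allocation in $\mu$, we must have $\cM(\bsucc')=\mu$.'' For deterministic mechanisms in matching markets, strategy-proofness does \emph{not} imply this: even for a single-agent deviation that leaves agent $j$'s upper-contour set of $\mu(j)$ unchanged, truthfulness only pins down $j$'s allocation, not the allocations of the other agents. Deducing that the whole matching is unchanged requires \emph{non-bossiness}, which is precisely the hypothesis dropped when moving from Theorem~\ref{nobossy} to Theorem~\ref{truthlb}. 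So the rigidity you would ``leverage'' is unavailable, and the subsequent pigeonhole step (extracting a sub-family on which $\cM$ agrees with $\mu_t$ on a common subset $N_{t+1}$) has no foundation. Relatedly, your proposed shrinkage $|N_{t+1}| \approx |N_t|^{1-1/\alpha}$ would only yield $\alpha \lesssim \log\log n$ after $T = \alpha$ rounds, not the $\log\log n / \log\log\log n$ bound.

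The paper circumvents the lack of rigidity by never trying to pin down the full matching. It maintains a weaker invariant: a set $S_\ell$ of agents such that for \emph{every} choice of fewer than $\ell$ agents from $S_\ell$ and distinct top-item swaps into positions in $[K]$, the mechanism gives \emph{all} of $S_\ell$ items indexed larger than $K$. To shrink $S_\ell$ to $S_{\ell+1}$, the paper builds a hypergraph whose hyperedges record the ``bad'' swap patterns (those assigning some agent a good item). Truthfulness enters only once and locally: it rules out hyperedges of size $\ell$ (an agent $a_1$ whose swap helps himself would have a profitable misreport under the baseline), so all hyperedges have size exactly $\ell+1$; a probabilistic deletion argument then gives an independent set of size about $|S_\ell|^{1/(\ell+1)}/K$, whose factorial-type decay is what produces the $\log\log n / \log\log\log n$ threshold. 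If you want to salvage your plan, you should replace the (false) Maskin-monotonicity step with an invariant of this weaker ``all survivors get bad items'' form and an explicit combinatorial mechanism (such as the hypergraph independent set) to propagate it.
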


\begin{proof} 
Let $n$ be large enough so that 
$K:=\floor{\frac{\log\log n}{\log\log\log n}}-2\geq 1$.
We show that on instances with $n$ (agents and) items, no deterministic truthful mechanism
can have rank approximation better than $K$.  

As before, if $\cM(\bsucc^{\es})$ assigns items to $\cN$ players, we may assume that it
matches agent $i$ to item $i$ for $i\in[\cN]$, and the remaining players are unassigned
(i.e., assigned item $n+1$).
Given agents $\{a_1,\ldots,a_k\}$ and integers $r_1,\ldots,r_k\geq 1$, we let
$\bsucc^{(a_1,r_1),(a_2,r_2),\ldots,(a_k,r_k)}$ denote the preference profile where all
agents other than these $a_\ell$'s have preference order $\succ^{*}$, while each $a_\ell$
has preference order $\succ^{*} \circ (r_\ell,1))$. That is, $a_\ell$'s top choice is item
$r_\ell$, his $r_\ell$-th choice is item $1$, and his $i$-th choice is item $i$ for all
$i\neq 1,r_\ell$. 
We show that there exist agents $a_1,\ldots,a_K$ and {\em distinct} integers
$r_1,\ldots,r_K\in[K]$, 
such that, in $\cM(\bsucc^{(a_1,r_1),\ldots,(a_K,r_K)})$, every agent $a_1,\ldots,a_K$ gets
an item whose index is larger than $K$. 
Since all other agents have the same top item, the
number of agents getting their top item is at most $1$.   
This proves that the rank approximation is at least $K$, since assigning item $r_\ell$ to
agent $a_\ell$ for all $\ell\in[K]$, yields an outcome where $K$ agents obtain their
top-choice item.

To find these $K$ agents, we proceed in $K$ stages. In stage $\ell$, we will have a subset  
$S_\ell$ of agents with $|S_\ell|\geq\ell$ having the following property. 
For any $t < \ell$, any $t$ agents $\{a_1,\ldots,a_t\}\subseteq S_\ell$, and for any $t$
distinct integers $r_1,\ldots,r_t\in[K]$, $\cM(\bsucc^{(a_1,r_1),\ldots,(a_t,r_t)})$
allocates all agents in $S_\ell$ an item indexed larger than $K$.  

Note that if we reach stage $K$, then we are done due to the following
reason. Consider any $K$ agents $a_1,\ldots,a_{K}\in S_K$ and any $K$ distinct 
integers $r_1,\ldots,r_{K}\in[K]$. Consider any index $\ell\in[K]$.
Let
$\bsucc'=\bsucc^{(a_1,r_1),\ldots,(a_{\ell-1},r_{\ell-1}),(a_{\ell+1},r_{\ell+1}),(a_{K-1},r_{K-1})}$
and $\bsucc=\bsucc^{(a_1,r_1),\ldots,(a_{K},r_{K})}$.
We know that $\cM(\bsucc')$
allocates all agents in $S_K$ an item indexed larger than $K$. This also implies that 
$o:=\cM(\bsucc)$ allocates $a_\ell$ an item indexed larger
than $K$, otherwise given the preference profile $\bsucc'$,
player $\ell$ has an incentive to deviate from his preference list $\succ^*$ and report
$\succ^*\circ (r_\ell,1)$. Since this holds for all $\ell$, it follows that $o$ allocates
every agent $a_1,\ldots,a_K$ an item indexed larger than $K$.

We now show how to obtain the $S_\ell$ sets. For $\ell<K$, the set $S_\ell$ will satisfy
the stronger property that $|S_\ell|^{\frac{1}{\ell+1}}\geq 2K$ (the reason for this will
become clear later). 
The base case is $S_1 = \{K+1,\ldots,n\}$, which satisfies the stated property.
Given a set $S_\ell$ at the end of stage $\ell<K$ we now show how to construct the set 
$S_{\ell+1} \subseteq S_\ell$. 
We construct the following hypergraph $H_\ell$. The vertices are the agents in
$S_\ell$. The hyperedges are subsets of vertices of size at most $(\ell+1)$ constructed as
follows. For every $\ell$-size subset $\{a_1,\ldots,a_\ell\}$ of $S_\ell$, 
and every $a\in S_\ell$ (which could be the same as one of the $a_t$s), we add the hyperedge
$\{a_1,\ldots,a_\ell,a\}$ if there exist $\ell$ distinct integers
$r_1,\ldots,r_\ell\in[K]$ such
that $\cM(\bsucc^{(a_1,r_1),\ldots,(a_\ell,r_\ell)})$ allocates agent $a$ an
item with index at most $K$.  
Note that the number of hyperedges is at most $|S_\ell|^\ell\cdot K^{\ell+1}$ since 
there are $|S_\ell|$ choices for each $a_t$, and $K$ choices for each $r_t$, and once these
are fixed, there at most $K$ choices for $a$. 

Call a subset $U \subseteq S_\ell$ {\em independent} if no hyperedge is completely contained
in it. Observe that $U$ is a valid input to stage $(\ell+1)$ if $|U|\geq\ell+1$: consider
any $t<\ell+1$ agents $a_1,\ldots,a_t\in U$ and any distinct integers
$r_1,\ldots,r_t\in[K]$. Suppose that $\cM(\bsucc^{(a_1,r_1),\ldots,(a_t,r_t)})$ allocates
some agent $a\in U$ an item with index at most $K$. Then we must have $t=\ell$, otherwise
this would contradict the property assumed of $S_\ell$, and then $\{a_1,\ldots,a_t,a\}$
would be a hyperedge, contradicting independence of $U$.

Lemma~\ref{largeind} shows that there is an independent set $S_{\ell+1}\sse S_\ell$ such
that 
$|S_{\ell+1}|\geq\frac{|S_\ell|^{\frac{1}{\ell+1}}}{K}-1\geq\frac{|S_\ell|^{\frac{1}{\ell+1}}}{2K}$, 
where the last inequality follows since $|S_\ell|^{\frac{1}{\ell+1}}\geq 2K$.
Since $|S_t|^{\frac{1}{t+1}}\geq 2K$ for all $t\leq\ell$, we have 
$$
|S_{\ell+1}|\geq|S_\ell|^{\frac{1}{\ell+1}}/2K
\geq |S_1|^{\frac{1}{(\ell+1)!}}/(2K)^\ell
\geq\Bigl(\frac{n}{2}\Bigr)^{\frac{1}{(\ell+1)!}}/(2K)^\ell
\geq\Bigl(\frac{n}{2}\Bigr)^{\frac{1}{K!}}/(2K)^{K-1}.
$$ 
Moreover, if $\ell+1<K$, then 
$|S_{\ell+1}|^{\frac{1}{\ell+2}}
\geq\bigl(\frac{n}{2}\bigr)^{\frac{1}{(\ell+2)!}}/(2K)^{\ell}
\geq\bigl(\frac{n}{2}\bigr)^{\frac{1}{K!}}/(2K)^{K-2}$.
For $K \leq \frac{\log\log n}{\log\log\log n}-2$, 
we have $\bigl(\frac{n}{2}\bigr)^{\frac{1}{K!}}/(2K)^{K-1}\geq 2K$. 
Hence, $|S_{\ell+1}|\geq 2K$, and if $\ell+1<K$, we have 
$|S_{\ell+1}|^{\frac{1}{\ell+2}}\geq 2K$. 
Thus, we obtain that $|S_K|\geq 2K$. 
\end{proof}

\begin{lemma} \label{largeind}
There exists an independent set $S_{\ell+1} \subseteq S_\ell$ of size 
$|S_{\ell+1}| \geq \frac{|S_\ell|^{\frac{1}{\ell+1}}}{K}-1$. 
\end{lemma}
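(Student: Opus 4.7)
The plan is to prove the lemma by the probabilistic deletion method applied to the hypergraph $H_\ell$. I would include each vertex of $S_\ell$ independently in a random set $T$ with probability $p:=1/\bigl(K\,|S_\ell|^{\ell/(\ell+1)}\bigr)$, and then, for every hyperedge $e\in H_\ell$ satisfying $e\sse T$, remove a single vertex of $e$ from $T$. The resulting subset $S_{\ell+1}\sse T$ contains no hyperedge of $H_\ell$, so it is independent by construction.

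By linearity of expectation, $\E[|T|]=p\,|S_\ell|=|S_\ell|^{1/(\ell+1)}/K$, while $\E\bigl[|\{e\in H_\ell:e\sse T\}|\bigr]=\sum_{e} p^{|e|}$. The preceding proof bounds the number of hyperedges of $H_\ell$ by $|S_\ell|^\ell K^{\ell+1}$, and every hyperedge has size at most $\ell+1$; the crux of the argument is therefore to establish
\[
\sum_e p^{|e|}\ \leq\ |S_\ell|^\ell K^{\ell+1}\cdot p^{\ell+1}\ =\ 1.
\]
Combining the two estimates yields $\E[|S_{\ell+1}|]\geq |S_\ell|^{1/(\ell+1)}/K - 1$, so some realization of $T$ (and the corresponding $S_{\ell+1}$) witnesses the lemma.

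The main obstacle will be handling hyperedges of size $\ell$, which arise precisely when the ``other'' vertex $a$ coincides with some $a_t$; for such edges $p^{|e|}=p^\ell$ rather than $p^{\ell+1}$, so the uniform bound above is not immediate. I would address this by a finer accounting than the crude $|S_\ell|^\ell K^{\ell+1}$ counting used for the total: for each ordered tuple $(a_1,\ldots,a_\ell,r_1,\ldots,r_\ell)$, at most $\ell$ of the (at most $K$) values of $a$ with $\cM$ allocating $a$ an item of index $\leq K$ can coincide with some $a_t$, so size-$\ell$ hyperedges are comparatively sparse among the total. Alternatively, one could proceed in two stages, first thinning $S_\ell$ to eliminate offending size-$\ell$ hyperedges and then applying the random-subset step to the residual size-$(\ell+1)$ structure; either way, the slack afforded by the inductive hypothesis $|S_\ell|^{1/(\ell+1)}\geq 2K$ maintained throughout the proof of Theorem~\ref{truthlb} is what ultimately makes the bookkeeping go through.
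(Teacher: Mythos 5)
The core obstacle you identify---that hyperedges of size $\ell$ would contribute probability $p^\ell$ rather than $p^{\ell+1}$---is exactly the right thing to worry about, but neither of the two fixes you sketch resolves it, and you miss the key structural fact the paper exploits. The paper's proof first shows that \emph{no} size-$\ell$ hyperedge can exist, by an appeal to truthfulness of $\cM$: a size-$\ell$ hyperedge $\{a_1,\ldots,a_\ell\}$ would mean that for some distinct $r_1,\ldots,r_\ell\in[K]$ and some $a_t$ (say $a_1$), $\cM(\bsucc^{(a_1,r_1),\ldots,(a_\ell,r_\ell)})$ gives $a_1$ an item of index at most $K$. But the inductive property of $S_\ell$ applied to the $(\ell-1)$-tuple $(a_2,r_2),\ldots,(a_\ell,r_\ell)$ forces $\cM(\bsucc^{(a_2,r_2),\ldots,(a_\ell,r_\ell)})$ to give $a_1$ an item of index greater than $K$. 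Since in the latter profile $a_1$'s true preference is $\succ^*$, misreporting $\succ^*\circ(r_1,1)$ would strictly improve $a_1$'s allocation, contradicting truthfulness. With all hyperedges of size exactly $\ell+1$, the clean bound $\sum_e p^{|e|}\leq |S_\ell|^\ell K^{\ell+1}p^{\ell+1}=1$ holds and the computation you set up goes through verbatim.

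Your ``finer accounting'' alternative does not rescue the argument: even with the tighter count of at most $|S_\ell|^\ell K^\ell$ size-$\ell$ hyperedges (one per choice of $(a_1,\ldots,a_\ell,r_1,\ldots,r_\ell)$), the expected number of such edges inside $T$ is on the order of $|S_\ell|^{\ell/(\ell+1)}$, which for $\ell\geq 2$ dwarfs $\E[|T|]=|S_\ell|^{1/(\ell+1)}/K$, so the deletion step would wipe out the set. The ``two-stage thinning'' suggestion is vague but is in spirit what happens---except that the first stage is free, because truthfulness already guarantees the absence of size-$\ell$ edges; there is nothing to thin. So the missing idea is not a refinement of the counting but the observation that the hypergraph is $(\ell+1)$-uniform to begin with.
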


\begin{proof}
Let $\cN=|S_{\ell}|$. Recall the number of hyperedges is at most $\cN^\ell K^{\ell+1}$.
We first argue that all hyperedges are of size $\ell+1$.
Every hyperedge is of size at least $\ell$.
A size-$\ell$ hyperedge $\{a_1,\ldots,a_\ell\}$ can only arise, if there are $\ell$
distinct integers $r_1,\ldots,r_\ell\in[K]$ and some $a\in\{a_1,\ldots,a_\ell\}$, say
$a_1$ for notational convenience such that
$\cM(\bsucc^{(a_1,r_1),\ldots,(a_\ell,r_\ell)})$ allots $a$ an item indexed less than $K$. 
But the definition of $S_\ell$ implies that 
$\cM(\bsucc^{(a_2,r_2),\ldots,(a_\ell,r_\ell)})$ allots $a_1$ an item with index larger
than $K$. 
This violates truthfulness, since agent $a_1$ has an incentive to misreport 
$\succ^*\circ(r_1,1)$ when his true preference is $\succ^*$ and obtain a better item. 

Consider sampling each vertex of $H_\ell$ with probability $p = K^{-1}\cdot
\cN^{-\left(\frac{\ell}{\ell+1}\right)}$ to get a random subset $X$. If $X$ contains a
hyperedge, then we remove all its vertices from $X$. 
The probability that a hyperedge is present in $X$ is at most $p^{\ell+1}$, since all
hyperedges are of size $\ell+1$. Therefore, in expectation, the size of $X$ after removal
is at least $p\cN-p^{\ell+1}\cN^\ell K^{\ell+1} = \frac{\cN^{1/\ell+1}}{K} - 1$.
\end{proof}

\subsection{A generalization: matroid markets} \label{sec:matroid} \label{matroid}
In this generalization of matching markets, there is a matroid $\mat_i=(N,\cI_i)$ 
on the agent-set $N$ for each item $i$, and multiple agents may be assigned to item $i$
provided they form an independent set of $\mat_i$. 
Here $\cI_i$ is a collection of subsets of $N$ with the
following properties: (i) $\es\in\cI_i$; for all $A, B\sse N$ 
(ii) if $A\in\cI_i$ and $B\sse A$, then $B\in\cI_i$; 
(iii) if $A,B\in\cI_i$ and $|A|>|B|$, then there exists some $j\in A\sm B$ such that
$B\cup\{j\}\in\cI_i$. 
Clearly, the lower bounds obtained for matching markets also
hold in this setting. Complementing this, we extend \mma to obtain a pseudomonotone
2-rank-approximation algorithm for matroid markets. Let $L$ be the set of all items.

\begin{theorem} \label{matroid-thm}
There is a pseudomonotone $2$-rank approximation algorithm for matroid markets, and a
mechanism that fully LT-implements it.
\end{theorem}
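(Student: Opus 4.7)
The plan is to extend the matching-market algorithm \mma to matroid markets by replacing its ``maximal matching in $G_r$'' invariant with the analogous ``maximal common independent set in $E_r$'' invariant for the two-matroid intersection formed by the partition matroid on agents (each agent picks at most one item) and the direct sum $\bigoplus_i \mat_i$. Fix a tie-breaking rule on agents and an order on items. In stage $r$, iterate over items $i$ in the fixed order and, within each item, over unmatched agents $j$ with $\alt{\succ_j}{r}=i$ in tie-breaking order, adding $(j,i)$ to the current allocation $M$ whenever $M_i\cup\{j\}$ stays independent in $\mat_i$, where $M_i:=\{k:(k,i)\in M\}$. The invariant at the end of stage $r$ will be: for every $(j,i)\in E_r:=\{(k,i'):i'$ is a top-$r$ item of $k\}$ with $j$ unmatched, $M_i\cup\{j\}$ is dependent in $\mat_i$. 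This is preserved since an unmatched agent entering stage $r$ has every top-$(r-1)$ item saturated by induction, so the only potentially addable edge is $(j,\alt{\succ_j}{r})$, which the stage-$r$ loop considers.

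To establish the 2-rank-approximation I would argue directly that any maximal common independent set in this intersection is at least half the maximum. Let $M^*$ achieve $\mrank_r(\bsucc)$ on $E_r$ and fix an item $i$; partition $M^*_i=\{k:(k,i)\in M^*\}$ into $M^*_i\cap M_i$, the set $A_i$ of $k\in M^*_i\setminus M_i$ matched elsewhere in $M$, and the set $B_i$ of $k\in M^*_i\setminus M_i$ unmatched in $M$. Maximality forces each $k\in B_i$ into $\mathrm{span}_{\mat_i}(M_i)$, so $(M^*_i\cap M_i)\cup B_i$ is an $\mat_i$-independent subset of $\mathrm{span}_{\mat_i}(M_i)$ of size at most $|M_i|$, yielding $|M^*_i|\le|M_i|+|A_i|$. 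Summing over $i$ and using that each $M$-matched agent has at most one $M^*$-edge---and therefore contributes to at most one $A_i$---gives $|M^*|\le 2|M|$, so $|M|\ge\mrank_r(\bsucc)/2$.

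For pseudomonotonicity I would mimic the \mma proof. Fix a deviation $\succ'_j$ preserving $j$'s top-$r$ items and demoting $i:=\alt{\succ_j}{r+1}$. The runs are identical through stage $r$, so either $j$ is matched by then (giving $o\sim_j o'$, case (i)), or $j$ is unmatched in both and, by maximality, every top-$r$ item $i^*$ of $j$ satisfies $M_{i^*}\cup\{j\}$ dependent in $\mat_{i^*}$; this persists as $M_{i^*}$ only grows, so $j$ never later obtains a top-$r$ item in either run. The key sub-claim is that when $j$ is skipped in the truthful processing of item $i$ at stage $r+1$ (because $M_i\cup\{j\}$ is dependent at the moment of consideration), omitting $j$ entirely from the deviated run's candidate list yields the same $M_i$ at the end of stage $r+1$; combined with dependency stability under enlarging $M_i$, this guarantees $j$ never obtains $i$ in the deviated run either. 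Hence $j$'s item in $o'$ (if any) lies at position $\ge r+2$ in $\succ_j$, unless $j$ was already added to $i$ in the truthful run, in which case $o\succeq_j o'$ directly. Taking $o''$ to be any outcome assigning $j$ the item $i$ then verifies case (ii) of pseudomonotonicity, with $\pos{\succ_j}{o''}=r+1<\pos{\succ'_j}{o''}$ and $o''\succ_j o'$. Full LT-implementability follows from Theorem~\ref{newpseudo}. The main obstacle is the matroid-theoretic 2-approximation argument above, while the ``$j$-obliviousness'' sub-claim needed in the pseudomonotonicity verification requires the per-item processing to iterate in a specific consistent order but is otherwise routine.
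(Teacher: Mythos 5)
Your proof follows the paper's strategy closely: build a maximal common independent set of the agent-side partition matroid and $\bigoplus_i\mat_i$ over the top-$r$ edge set at each stage, get the factor $2$ from half-optimality of maximal common independent sets, and reduce pseudomonotonicity to the \mma argument. Two localized differences are worth noting. For the factor $2$, the paper states and proves an abstract two-matroid lemma (Claim~\ref{matint}) via the exchange axiom: if $S$ is maximal and $A$ optimal with $|S|<|A|/2$, each $T_\ell=\{e\in A:S\cup\{e\}\in\cI_\ell\}$ has size greater than $|A|/2$, so $T_1\cap T_2\neq\es$, contradicting maximality. Your per-item span-plus-charging argument ($|M^*_i|\le|M_i|+|A_i|$, then $\sum_i|A_i|\le|M|$ because each $M$-matched agent carries at most one $M^*$-edge) uses the partition-matroid structure directly; it is equally correct, a bit more concrete, a bit less general. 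For pseudomonotonicity, the paper simply appeals to ``exactly the same arguments as in Theorem~\ref{matching-ub}''; you spell out the two facts that actually make that argument port to matroids --- the $j$-obliviousness of the stage-$(r+1)$ processing of item $i$ when $j$ would have been skipped, and the persistence of dependence in $\mat_i$ as $M_i$ grows --- and you also separate out the case where $j$ does obtain $i$ in the truthful run, which lands in case (i). That is a careful fleshing-out of the deferred argument rather than a different approach.
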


\begin{proof} 
The algorithm is similar to \mma. 
Again fix an agent-ordering and an item-ordering. Consider some input $\bsucc$.

We again proceed in $m$ stages. In stage $r$, we consider the ``top-$r$'' graph
$G_r=(N\cup L,E_r)$, where each agent $j$ has edges to his top-$r$ items. 
Note that every outcome induces a feasible solution to the {\em matroid-intersection}
problem defined by the following two matroids on the universe $E_r$.
One is $\mat_A$, which is the direct sum of the $\mat_i$ matroids for all $i\in L$, i.e., 
a set $I\sse E_r$ is independent if $\{j: (j,i)\in I\}\in\cI_i$ for all $i\in L$. 
The second is the partition matroid $\mat_B(r)$ encoding that at most one edge of $E_r$ is
incident to each agent $j$.
Then every outcome induces a set that is independent in both $\mat_A$ and $\mat_B(r)$, and
$\maxrank_r(\bsucc)$ is the size of the largest common independent set.

Let $M$ consist of the edges denoting the current (i.e., at the start of stage $r$)
assignment of items to agents. 
Our algorithm will maintain the invariant that at the end of stage $r$, $M$ is a 
{\em maximal} set that is independent in both $\mat_A$ and $\mat_B(r)$.
The rank-approximation factor of 2 follows then from the well-known fact that every
maximal common independent set of two matroids has size at least half the size of 
maximum-cardinality common independent set; Claim~\ref{matint} gives a self-contained
proof.  

Let $\Gm^r(u)$ denote the neighbors of node $u$ in $G_r$, 
and $\Gm^r_M(u):=\{v: (u,v)\in M\}$.
Note that if $M$ is a maximal common independent set in $G_{r-1}$, then for every agent
$j$ that is not assigned an item in $M$, among $j$'s top-$r$ items his $r$-th ranked item
is the {\em only} item to which $j$ can be possible assigned while preserving independence 
in the item's matroid.

We consider each item $i$ and augment $\Gm^r_M(i)$, the current set of agents assigned
to item $i$, to a maximal subset $J_i\sse\Gm^r(i)$ that is independent in $\mat_i$: we
initialize $J_i$ to $\Gm^r_M(i)$. Next, we consider agents in 
$\Gm^r(i)\sm\Gm^r_M(i)$ according to the fixed agent-ordering and add agent $j$ to $J_i$
if this maintains independence in $\mat_i$. Maximality of $J_i$ follows from the matroid
property. (In fact $J_i$ is a maximum-size independent subset of $\Gm^r(i)$.)
Finally, we update $M$ to reflect the new assignments in stage $r$.

The fact that $M$ is a maximal common independent set of $\mat_A$ and $\mat_B(r)$ is
immediate: if some edge $(j,i)$ can be added to $M$ while preserving independence in
$\mat_A$ and $\mat_B(r)$, then $j$ was unassigned at the start of stage $r$ and when we
considered item $i$, $j$ could (and would) have been added to $J_i$ in the iteration when
$j$ was considered.

We have already argued that the above algorithm is a 2-rank-approximation. 
Pseudomonotonicity of the above algorithm follows from exactly the same arguments as in 
Theorem~\ref{matching-ub}.  
\end{proof}

\begin{claim} \label{matint}
Let $\mat_1(U,\cI_1)$, $\mat_2=(U,\cI_2)$ be two matroids. 
Let $S\sse U$ be an inclusion-wise maximal set that is independent in both $\mat_1$ and
$\mat_2$. Let $A$ be a maximum-cardinality set that is independent in both $\mat_1$ and
$\mat_2$. Then $|S|\geq |A|/2$
\end{claim}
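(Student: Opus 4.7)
The plan is to bound $|A|$ by decomposing $A \setminus S$ according to which matroid witnesses the maximality of $S$, and then use independence in each matroid separately. Concretely, let $T = A \setminus S$. For each $e \in T$, the maximality of $S$ as a \emph{common} independent set says that $S+e$ cannot be independent in both matroids, so $S + e$ is dependent in $\mat_k$ for some $k \in \{1,2\}$. This lets me partition $T$ into $T_1 \cup T_2$, where $T_k = \{e \in T : S+e \text{ is dependent in } \mat_k\}$.

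The main step is to bound $|T_k|$ for each $k$ using the rank function $r_k$ of $\mat_k$. Two observations: first, since $S + e$ is dependent in $\mat_k$ for every $e \in T_k$, every element of $T_k$ lies in the $\mat_k$-span of $S$, so $r_k(S \cup T_k) = r_k(S) = |S|$. Second, since $A$ is independent in $\mat_k$, so is its subset $(A \cap S) \cup T_k$, which lies inside $S \cup T_k$. Combining these two facts gives
\[
|A \cap S| + |T_k| \;=\; |(A \cap S) \cup T_k| \;\leq\; r_k(S \cup T_k) \;=\; |S|,
\]
so $|T_k| \leq |S| - |A \cap S|$ for each $k \in \{1,2\}$.

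Finally, I would add these two bounds and combine with $|A| = |A \cap S| + |T_1| + |T_2|$ to obtain
\[
|A| \;\leq\; |A \cap S| + 2\bigl(|S| - |A \cap S|\bigr) \;=\; 2|S| - |A \cap S| \;\leq\; 2|S|,
\]
which is exactly the claim. The only subtle point, and the one place I would be careful, is the step identifying the right set to which to apply the rank inequality: one might first try to bound $|T_k|$ directly by $|S|$, which gives only the weaker bound $|A| \leq 3|S|$. Getting the tight factor of $2$ requires noting that the independent set $(A \cap S) \cup T_k$, and not just $T_k$, is contained in $S \cup T_k$, so the $|A \cap S|$ term is subtracted once in each of $|T_1|$ and $|T_2|$ and cancels the $|A \cap S|$ contribution from $A$ itself.
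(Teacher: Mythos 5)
Your proof is correct, and it is essentially the dual of the paper's argument. The paper works with the sets $T_k^{\mathrm{pap}} = \{e \in A : S \cup \{e\} \in \cI_k\}$ of elements of $A$ that \emph{can} be added to $S$ while staying independent in $\mat_k$, uses the exchange property to show $|T_k^{\mathrm{pap}}| \geq |A| - |S|$, and then argues by contradiction: if $|S| < |A|/2$, both $T_1^{\mathrm{pap}}$ and $T_2^{\mathrm{pap}}$ are subsets of $A$ of size exceeding $|A|/2$, hence intersect, and any common element could be added to $S$ in both matroids, contradicting maximality. You instead work with the complementary ``blocked'' sets inside $A \setminus S$, bound each via the rank function, and count directly. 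The two are the same counting read in opposite directions, but the paper's pigeonhole framing quietly does the bookkeeping you flag as the subtle step: by intersecting subsets of $A$ it never needs the sharper $|T_k| \leq |S| - |A \cap S|$; the coarser $|T_k^{\mathrm{pap}}| \geq |A| - |S|$ suffices. One small imprecision in your write-up: as you define them, $T_1$ and $T_2$ need not be disjoint (an element may be blocked in both matroids), so they form a cover of $T$ rather than a partition; the line $|A| = |A \cap S| + |T_1| + |T_2|$ should read $|A| \leq |A \cap S| + |T_1| + |T_2|$, which is all the final step requires.
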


\begin{proof}
Suppose $|S|<|A|/2$. Let $T_1=\{e\in A: S\cup\{e\}\in\cI_1\}$. Since $A\in\cI_1$, by the
matroid exchange property, we have $|T_1|\geq |A|-|S|>|A|/2$.
Similarly, if $T_2=\{e\in A: S\cup\{e\}\in\cI_2\}$, then we have $|T_2|>|A|/2$. But since
$T_1, T_2\sse A$, this means that $T_1\cap T_2\neq\es$, and so if $e\in T_1\cap T_2$, then
$e$ can be added to $S$ while maintaining independence in both $\mat_1$ and $\mat_2$. This
contradicts the maximality of $S$.
\end{proof}

\section{Scheduling markets} \label{sec:sched} \label{sched}
Recall that here the agents are $n$ jobs that need to be assigned on $m$ machines. Each
job $j$ has a private strict total ordering over the machines, and a publicly-known
processing time $p_{ij}$ on machine $i$. 
An outcome is a partial assignment of jobs to machines, also called a schedule, that has
makespan at most a given value $T$. An agent prefers outcome $o$ to outcome $o'$ if he
prefers his assigned machine in $o$ to that in $o'$.

We obtain nearly tight results for scheduling markets.
Say that an algorithm is an $(\al,\beta)$-approximation if it always returns a
schedule with rank-approximation factor $\al$ and makespan at most $\beta T$.  
For parallel machines ($p_{ij}=p_j$ for all $i,j$), we give an 
$\bigl(O(\log n),O(1)\bigr)$-approximation, 
fully lex-truthfully (LT) implementable algorithm (Theorem~\ref{parallel-ub2}). 
We show that this bound is {\em tight} by proving an {\em algorithmic lower bound} showing
that every $(\al,\beta)$-approximation algorithm for parallel machines must have\\
$\al=\Omega(\max\{\log m,\log n\}/\beta)$ (Theorem~\ref{parallel-lb}). 
For the setting of general unrelated machines, we devise an 
$\bigl(O(\log n), O(1)\bigr)$-approximation algorithm (Theorem~\ref{unrelated}), 
however we do not know how to achieve this via a fully LT-implementable algorithm. We
leave this as an intriguing open question.

Let $N$ denote the set of jobs.
For $S\sse N$, let $\bsucc_S$ denote the restriction of $\bsucc$ to jobs in $S$, and
$\mrank_r(\bsucc_S)$ denote the maximum number of jobs from $S$ that can be assigned to
one of their top-$r$ machines with makespan at most $T$. 
Observe that
$\mrank_r(\bsucc_{S\cup T})\leq\mrank_r(\bsucc_S)+\mrank_r(\bsucc_T)$. 

\vspace{-1ex}
\paragraph{Parallel machines}
Our results rely on a bucketing argument coupled with Theorem~\ref{matroid-thm} for
matroid-markets and some insights from the matroid-intersection problem.
We divide the set $N$ of jobs into $k=O(\log n)$ disjoint
classes $N_0,N_1,\ldots,N_k$ such that jobs in each class have roughly the same processing
time. Set $N_0:=\{j: p_j\leq\frac{T}{n}\}$, and 
$N_\ell:=\{j: 2^{\ell-1}\cdot\frac{T}{n}<p_j\leq 2^\ell\cdot\frac{T}{n}\}$ for
$\ell=1,\ldots,k:=\ceil{\log_2 n}$. 
Note that if $j\notin\bigcup_{\ell=0}^k N_\ell$, then
$p_j>T$, so $j$ cannot be assigned to any machine in any outcome and is not counted in 
$\mrank_r(\bsucc)$ for any position $r$. We assume for notational convenience that $N$
does not contain any such job in the sequel.
It will be convenient to ensure that $|N_0|\geq 1$. So we remove some fixed job $a$ from 
the $N_\ell$ set containing it and add it to $N_0$. 

Obtaining a good rank-approximation for a class $N_\ell$, where $\ell\geq 1$, with
makespan $O(T)$ amounts to a matroid-market problem (in fact, a {\em $b$-matching}
problem) since the makespan bound can be encoded by 
the constraint that at most $\frac{n}{2^{\ell-1}}$ jobs are assigned to each  machine.  
Any feasible schedule for $N_\ell$ yields a feasible allocation for the
corresponding matroid-market problem. So Theorem~\ref{matroid-thm} 
yields a pseudomonotone
$(2,2)$-approximation algorithm $f_\ell$ for class $N_\ell$, and a mechanism $\cM_\ell^\ve$ 
that $\ve$-implements it, for all $\ve>0$. 

\begin{theorem} \label{parallel-ub}
One can obtain a deterministic fully LT-implementable 
$\bigl(O(1), O(\log n)\bigr)$-approximation algorithm for parallel-machine markets.  
\end{theorem}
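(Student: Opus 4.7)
The plan is to obtain the $\bigl(O(1),O(\log n)\bigr)$-approximation by running the class-wise algorithms $f_0,f_1,\ldots,f_k$ \emph{independently} on the partition $N_0\cup N_1\cup\cdots\cup N_k$ and then taking the union of the resulting schedules. For the tiny-job class $N_0$ (with $p_j\le T/n$), I would simply assign every job to its own top-choice machine; since $|N_0|\le n$, the load contributed by $N_0$ to any single machine is at most $n\cdot(T/n)=T$, and every job in $N_0$ achieves rank-$1$, so this step is trivially a $1$-rank-approximation with makespan $\le T$, and is pseudomonotone because each job's allocation depends only on its own reported top item. For each $\ell\ge 1$, I would invoke the pseudomonotone $(2,2)$-approximation $f_\ell$ provided by the matroid-market reduction described just before the theorem.

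For the \textbf{makespan}, each of the $k=\lceil\log_2 n\rceil$ classes with $\ell\ge 1$ contributes load at most $2T$ on every machine, and $N_0$ contributes at most $T$, so the combined schedule has makespan at most $T+2kT=O(T\log n)$. For the \textbf{rank-approximation}, let $o_\ell$ denote the schedule produced for class $N_\ell$ and let $o$ be the combined schedule. Since jobs in different classes are disjoint,
\[
\rank_r(o;\bsucc)=\sum_{\ell=0}^{k}\rank_r(o_\ell;\bsucc_{N_\ell})
\ge \tfrac{1}{2}\sum_{\ell=0}^{k}\mrank_r(\bsucc_{N_\ell})
\ge \tfrac{1}{2}\mrank_r(\bsucc),
\]
where the first inequality uses that each $f_\ell$ (and the trivial $N_0$ rule) is a $2$-rank-approximation for its class, and the second uses the subadditivity observation $\mrank_r(\bsucc_{S\cup T})\le \mrank_r(\bsucc_S)+\mrank_r(\bsucc_T)$ stated just above the theorem.

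For \textbf{pseudomonotonicity and LT-implementability}: the class $N_\ell$ that contains a job $j$ is determined by $p_j$, which is public, so a job cannot migrate between classes by misreporting $\succ_j$. Moreover, in every parallel-machines outcome a job's preference depends only on the machine it is assigned, and this machine is a function only of $f_\ell(\bsucc_{N_\ell})$ (for $\ell\ge 1$) or of $\succ_j$ alone (for $N_0$). Thus, pseudomonotonicity of each $f_\ell$ transfers directly to the combined algorithm: the witness $o''$ from the pseudomonotonicity of $f_\ell$ (viewed as a machine) supplies the required witness for the combined outcome. Full LT-implementability then follows from \Thm{pseudo}(i).

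The main subtlety I would watch for is the treatment of $N_0$: without isolating the tiny jobs and assigning each one to its top choice, a direct application of $f_\ell$-style bucketing would lose a factor of $n$ in the makespan for jobs with $p_j\ll T/n$. Ensuring $|N_0|\ge 1$ (by moving one fixed job into $N_0$ as the paper already indicates) avoids edge cases in the sum $\sum_{\ell}\mrank_r(\bsucc_{N_\ell})$. Beyond this, the argument is essentially a clean combination of the matroid-market black box, the subadditivity of $\mrank_r$, and the pseudomonotonicity-to-LT bridge from \Thm{pseudo}.
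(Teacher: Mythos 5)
Your proposal follows essentially the same decomposition and argument as the paper's: the same bucketing into $N_0,\ldots,N_k$, the same per-class $(2,2)$-approximate pseudomonotone algorithms $f_\ell$ from Theorem~\ref{matroid-thm}, the subadditivity of $\maxrank_r$ for the rank bound, and per-machine load additivity for the makespan. The only divergence is cosmetic: the paper constructs the LT-mechanism explicitly by concatenating the $N_0$-schedule with coupled copies of the $\cM_\ell^\ve$ mechanisms, whereas you verify pseudomonotonicity of the combined SCF $f$ directly and then invoke Theorem~\ref{pseudo}(i); both routes are valid and yield the same conclusion. One small slip: the $N_0$-schedule has makespan at most $2T$, not $T$, because the designated job $a$ moved into $N_0$ to ensure $|N_0|\ge 1$ need not satisfy $p_a\le T/n$ (only $p_a\le T$); this does not affect the overall $O(T\log n)$ bound.
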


\begin{proof}
On input $\bsucc$, we output the schedule obtained by concatenating the schedule where
all jobs in $N_0$ are assigned to their top machine, and all the $f_\ell(\bsucc_{N_\ell})$
schedules. Note that the $N_0$-schedule has makespan at most $2T$. The resulting schedule,
denoted $f(\bsucc)$, has makespan $O(T\log n)$ and rank-approximation factor 2 (since 
$\mrank_r(\bsucc_N)\leq\sum_{\ell=0}^k\mrank_r(\bsucc_{N_\ell})$).  
Fix $\ve>0$.
The jobs in $N_0$ clearly have no incentive to lie. 
It is easy to see then that $f$ is $\ve$-LT implemented by the mechanism
that outputs the $N_0$-schedule concatenated with the (random) schedules output by 
the $\cM_\ell^\ve$ mechanisms, where we couple the random choices of all the $\cM_\ell^\ve$
mechanisms (i.e., their decisions are based on the outcomes of the same random coins) so
that $\Pr[\exists\ \ell: \cM_\ell^\ve(\bsucc_{N_\ell})\neq f_\ell(\bsucc_{N_\ell})]\leq\ve$.
\end{proof}

\begin{theorem} \label{parallel-ub2}
There is a randomized fully LT-implementable 
$\bigl(O(\log n), O(1)\bigr)$-approximation algorithm for parallel-machine markets, where
the rank-approximation and makespan bounds hold with probability 1. 
\end{theorem}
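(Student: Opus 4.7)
The plan is to construct a deterministic pseudomonotone $(O(\log n),O(1))$-approximation algorithm $f^*$ via a bucketing scheme with reduced per-machine capacities, and then lex-truthfully implement it by a randomized mechanism whose entire support consists of $(O(\log n),O(1))$-approximate schedules.

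\emph{Deterministic construction of $f^*$.} Reuse the bucketing $N_0,\ldots,N_k$ with $k=\lceil\log_2 n\rceil$ from Theorem~\ref{parallel-ub}. Schedule all $N_0$-jobs on their top machines (load $\leq T$). For each $\ell\geq 1$, invoke the pseudomonotone matroid-market algorithm of Theorem~\ref{matroid-thm} on $N_\ell$ with a \emph{reduced} per-machine capacity $b'_\ell:=\max\{1,\lfloor n/(2^\ell(k+1))\rfloor\}$, encoded as a uniform matroid on $N_\ell$. Output the concatenated schedule. The per-machine contribution from bucket $\ell$ is either $b'_\ell\cdot 2^\ell T/n\leq T/(k+1)$ (when the floor is positive) or $\leq 2^\ell T/n$ with $2^\ell>n/(k+1)$; summing the latter case over $\ell\in(\log(n/(k+1)),\,\log n]$ gives a geometric sum bounded by $O(T)$. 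Hence the total makespan is $O(T)$ deterministically.

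\emph{Rank-approximation and pseudomonotonicity.} The key claim is $\mrank_r(\bsucc_{N_\ell}^{b'_\ell})\geq\mrank_r(\bsucc_{N_\ell})/O(k)$. Every job in $N_\ell$ has $p_j>2^{\ell-1}T/n$, so any makespan-$T$ schedule of $N_\ell$-jobs assigns at most $2n/2^\ell$ jobs per machine; I verify $2n/2^\ell\leq O(k)\cdot b'_\ell$ in both regimes of $b'_\ell$, so pruning each machine's load down to $b'_\ell$ jobs retains at least an $\Omega(1/k)$ fraction of the assignments of any optimal $N_\ell$-schedule. Combining this with the per-bucket matroid $2$-approximation and the subadditivity $\mrank_r(\bsucc)\leq\sum_\ell\mrank_r(\bsucc_{N_\ell})$ noted in the excerpt yields rank at least $\mrank_r(\bsucc)/O(\log n)$ at every $r$, again deterministically. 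Pseudomonotonicity of $f^*$ is inherited from Theorem~\ref{matroid-thm}: buckets depend only on the publicly known processing times, so under any misreport each job stays in the same bucket and its outcome is decided by the same pseudomonotone sub-instance.

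\emph{LT implementation with probability-one bounds (the main obstacle).} Theorem~\ref{pseudo} immediately gives $\ve$-LT-implementability of $f^*$, but the generic backup distribution constructed there outputs ``degenerate'' schedules that can grossly violate the $(O(\log n),O(1))$ bounds, so those bounds hold only with probability $1-\ve$. The plan is to replace the generic backup by one supported entirely on $(O(\log n),O(1))$-approximate schedules: for each agent-rank pair $(a,i)$, use backup outcome $f^*(\bsucc^{(a,i)})$, where $\bsucc^{(a,i)}$ denotes $\bsucc$ with agent $a$'s preference list modified to promote his $i$-th ranked machine to the top. Since every backup is itself an output of $f^*$, the approximation bounds then hold in every realization, i.e., with probability $1$. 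What remains is to verify that the aggregated indifference-class probabilities of this mechanism (as in Section~\ref{indiff}) satisfy the lex-dominance inequalities of Definition~\ref{lext}. I expect this to reduce, via an adaptation of the telescoping argument in the proof of Theorem~\ref{pseudo}(i), to a pseudomonotonicity-like compatibility property of $f^*$ across the profiles $\bsucc^{(a,i)}$, which in turn follows from the per-bucket structure of $f^*$ and the pseudomonotonicity of the matroid-market algorithm within each bucket. Verifying this compatibility carefully is the principal technical hurdle.
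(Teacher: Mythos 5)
Your deterministic construction is a genuinely different route from the paper's, and as far as I can check it is sound. The paper keeps the \emph{uncapacitated} bucket schedules $f_\ell(\bsucc_{N_\ell})$ (whose concatenation $\sg$ has makespan $O(T\log n)$), places fractional weight $\frac{1}{k}$ on each assignment edge, and observes that this fractional point lies in a polytope $\cP$ cut out by two laminar systems (per-machine-per-bucket capacities $A_{i,\ell}$ and rank lower bounds $B_r$); total unimodularity then expresses it as a distribution over integral schedules, \emph{each} of which has makespan $O(T)$ and rank at least $B_r$ at every $r$ --- this is how the paper gets a randomized algorithm whose bounds hold in every realization, and the randomization is retained precisely so that the marginal assignment probabilities $x_{ij}=\frac{1}{k}\cdot\1(\sg(j)=i)$ inherit pseudomonotonicity from the $f_\ell$'s in the lex-truthfulness argument. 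You instead push the factor-$k$ loss into the matroid instance itself via the reduced capacities $b'_\ell$; your verification that any makespan-$T$ schedule of $N_\ell$ packs at most $2n/2^\ell$ jobs per machine, so that pruning to $b'_\ell$ retains an $\Omega(1/k)$ fraction in both regimes, is correct, the geometric sum bounding the case-2 makespan contribution by $O(T)$ checks out, and pseudomonotonicity is immediate per bucket since the bucketing is determined by public data. This yields a \emph{deterministic} pseudomonotone $(O(\log n),O(1))$-approximation, which is if anything a stronger statement than the paper's.

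The one real problem is your last paragraph, and it is a self-inflicted one. The clause ``hold with probability 1'' in the theorem qualifies the \emph{algorithm} $f$ (as opposed to, say, the naive scheme that picks a random bucket and only achieves the rank bound in expectation); it does not require the $\ve$-implementing mechanism's entire support to consist of good schedules. Indeed, the paper's own implementation uses exactly the generic backup you reject --- with probability $\ve_r/n$ it assigns a single job to its $r$-th machine and leaves everything else unassigned --- and this is consistent with the definition of full LT-implementation. Since your $f^*$ is deterministic and pseudomonotone, Theorem~\ref{newpseudo} (the indifference-class version) already gives full LT-implementability, and the approximation bounds hold with probability 1 for $f^*$ trivially. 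The ``principal technical hurdle'' you flag --- verifying lex-dominance for a backup distribution supported on the outputs $f^*(\bsucc^{(a,i)})$ --- is therefore not needed to prove the theorem as stated, and you should delete it rather than leave it unresolved; as written, your proof ends on an open step that the argument does not actually require.
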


\begin{proof}
Consider an input $\bsucc$.
As before, we assign all jobs in $N_0$ to their top machine. 
Note that simply picking a class $N_\ell$ with probability $\frac{1}{k}$ and outputting
the concatenation of the $N_0$-schedule and $f_\ell(\bsucc_{N_\ell})$ is not enough
since this only yields $O(k)$ rank approximation in expectation.
Instead, we build upon the above ideas and leverage some results about the
matroid-intersection problem.   

Consider the following bipartite graph representing the concatenation $\sg$ of all
the $f_\ell(\bsucc_{N_\ell})$ schedules. We have a node for every machine, and every job
not in $N_0$, and an edge $(i,j)$ if $j$ is assigned to machine $i$ in schedule $\sg$. Now
set $x_{ij}=\frac{1}{k}$ for every edge $(i,j)$. 
Define $A_{i,\ell}:=\ceil{\frac{n}{2^{\ell-1}k}}$ for all $i,\ell$ and 
$B_r:=\floor{\rank_r(\sg;\bsucc_{N\sm N_0})/k}$ for all $r$.
Consider the following polytope: 
\begin{equation}
\begin{split}
\cP:=\Bigl\{ y\in\R^{[m]\times(N\sm N_0)}: \quad 
\sum_{j\in N_\ell}y_{ij} & \leq A_{i,\ell} \quad \forall i\in[m], \ell\in[k], \\ 
\sum_{j:\pos{\succ_j}{\sg(j)}\leq r}y_{ij} & \geq B_r \quad \forall r\in[m],
\qquad 0\leq y_{ij}\leq 1 \quad \forall i\in[m],\ j\notin N_0\Bigr\}. 
\end{split}
\label{mint}
\end{equation}
We claim that $\cP$ has integral extreme points. 
Any extreme point of
$\cP$ is defined by a linearly independent system of tight constraints comprising some
$\sum_{j\in N_\ell}y_{ij}=A_{i,\ell}$ equalities whose supports are disjoint, and some
$\sum_{j:\pos{\succ_j}{\sg(j)}\leq r}y_{ij}=B_r,\ y_{ij}=1$ equalities whose supports form
a laminar family. The constraint matrix of such a system thus corresponds to equations
coming from two laminar set systems; such a matrix is known to be totally unimodular (TU)
(see, e.g.,~\cite{Schrijver03}), and hence a solution to this system is integral.

Note that $x\in\cP$, so it can be expressed as a convex combination of some extreme
points of $\cP$. Equivalently, $x$ yields a distribution over partial schedules for 
$N\sm N_0$. 
Let $Y$ be a random schedule, or equivalently vector in $\R^{[m]\times (N\sm N_0)}$,
sampled from this distribution. 
Note that $\Pr[\text{$j$ is assigned in $Y$}]=x_{ij}=\frac{1}{k}$ for $j\notin N_0$. 
The makespan of $Y$ is at most $6T$ with probability 1. This is because 
$\sum_{j}p_jY_{ij}
\leq\sum_{\ell=1}^k\bigl(1+\frac{n}{2^{\ell-1}k}\bigr)\cdot 2^\ell\cdot\frac{T}{n}
\leq 2^{k+1}\cdot\frac{T}{n}+2T\leq 6T$.
Let $\Pi$ be the (random) schedule obtained by concatenating the $N_0$-schedule with $Y$. 
Then $\Pi$ has makespan at most $8T$ with probability 1.
Also, $\rank_r(\Pi;\bsucc)\geq |N_0|+B_r$ with probability 1.
Now $B_r\geq\floor{\mrank_r(\bsucc_{N\sm N_0})/2k}$. 
Finally, 
\begin{equation*}
\begin{split}
\mrank_r(\bsucc) & \leq |N_0|+\mrank_r(\bsucc_{N\sm N_0}) \\
& \leq|N_0|+\max\{2k,4k\floor{\mrank_r(\bsucc_{N\sm N_0}/2k)}\} \\
& \leq 4k(|N_0|+B_r),
\end{split}
\end{equation*}
where the latter inequality follows since $|N_0|\geq 1$. 
Thus, the randomized algorithm $f$ that outputs the random schedule $\Pi$ is an 
$\bigl(O(\log n), O(1)\bigr)$-approximation with probability 1.

\medskip
We now proceed as in the proof of Theorems~\ref{pseudo} and~\ref{newpseudo} to devise a
mechanism $\cM$ that fully LT-implements $f$. 
Fix $\ve>0$, and $\ve_1>\ldots>\ve_m$ such that
$\sum_{r=1}^m\ve_r=\ve$. Consider input $\bsucc$.
Let $Y^\bsucc$ be the random schedule for $N\sm N_0$ for input $\bsucc_{N\sm N_0}$ as obtained
above. 
Mechanism $\cM$ always assigns jobs in $N_0$ to their top machines.
For jobs in $N\sm N_0$, it returns schedule $Y^\bsucc$ with probability $1-\ve$.
For each $j\notin N_0$ and $r\in[m]$, with probability $\frac{\ve_r}{n}$, it returns the
schedule where $j$ is assigned to its $r$-th ranked machine $\alt{\succ_j}{r}$, and all
other jobs are unassigned. Clearly, $\cM(\bsucc)=f(\bsucc)$ with probability at least $1-\ve$. 

Jobs in $N_0$ do not benefit by lying. Consider a job $j\in N_\ell$, where
$\ell\geq 1$. Let $\bsucc'=(\succ_j,\succ_{-j})$, where $\succ'_j\neq\succ_j$. 
Let $x_{ij}=x_{ij}(\bsucc)$ and $x'_{ij}=x_{ij}(\bsucc')$ denote the probabilities that
$j$ is assigned to $i$ under the random schedules $Y=Y^\bsucc$ and $Y'=Y^{\bsucc'}$
respectively. 
Then, 
\begin{equation*}
\begin{split}
\Dt_{ij} & := \Pr[\text{$j$ assigned to $i$ in $\cM(\bsucc)$}]
-\Pr[\text{$j$ assigned to $i$ in $\cM(\bsucc')$}] \\
& =
(1-\ve)(x_{ij}-x'_{ij})+\frac{1}{n}\cdot\bigl(\ve_{\pos{\succ_j}{i}}-\ve_{\pos{\succ'_j}{i}}\bigr).
\end{split}
\end{equation*}
Considering machines in the preference order of $\succ_j$, let $\hi$ be the first machine
such that $\pos{\succ_j}{\hi}\neq\pos{\succ'_j}{\hi}$. 
Then $\pos{\succ_j}{\hi}<\pos{\succ'_j}{\hi}$. 
If $x'_{ij}=0$ for all $i\succeq_j\hi$, then $\Dt_{ij}\geq 0$ for all $i\succeq_j\hi$, and  
$\Dt_{\hi j}>0$, so we are done. Otherwise, $j$ is assigned to some machine
$i'\succeq_j\hi$ in $f_\ell(\bsucc'_{N_\ell})$. Since all machines $i\succ_ji'$ have
$\pos{\succ_j}{i}=\pos{\succ'_j}{i}$ and $f_\ell$ is pseudomonotone, it must be that
$j$ is assigned to $i''\succeq_j i'$ in $f_\ell(\bsucc_{N_\ell})$. So
$x_{ij}=x'_{ij}$, and hence, $\Dt_{ij}=0$, for all $i\succ_j i''$. If $i''\neq i'$, then
$\Dt_{i''j}>0$, otherwise $\Dt_{ij}=0$ for all $i\succ_j\hi$ and $\Dt_{\hi j}>0$. 
Thus, $\cM$ is lex-truthful.
\end{proof}

\begin{theorem} \label{parallel-lb} 
There exists an instance of a parallel-machine market where any schedule with $\beta T$ 
makespan has rank-approximation factor $\Omega(\max\{\log m,\log n\}/\beta)$. 
\end{theorem}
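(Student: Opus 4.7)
The plan is to construct a family of parallel-machine instances parameterized by $k = \Theta(\max\{\log m, \log n\})$, on which every schedule with makespan $\beta T$ has rank-approximation factor $\Omega(k/\beta)$. The construction will mimic the escalating-conflict structure of the matching-market lower bound (Theorem~\ref{matching-lb}), but amplified across $\Theta(\log)$-many geometric ``size scales'' so that no single schedule can simultaneously serve all scales well under a bounded makespan.

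Concretely, I would use $k$ geometric job classes: class $C_\ell$ contains $2^\ell$ jobs of processing time $T/2^\ell$ (so each class has total load exactly $T$ and just fits on a single machine at makespan $T$), for $\ell=1,\ldots,k$. The total number of jobs is $\sum_\ell 2^\ell = \Theta(2^k)$, so by choosing $k = \Theta(\log n)$ we get an instance of $n$ jobs, and by symmetrically using $\Theta(2^k)$ machines with analogous preferences we get an instance of $m$ machines. Preferences are hierarchically nested: all classes share a common top-priority machine $m^*$, and their lower preferences draw ``backup'' machines from a common pool in a cascading way (so that satisfying class $C_\ell$ at the right rank forces a specific assignment that conflicts with the analogous requirement for other classes).

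The analysis has two parts. First, I verify that $\mrank_{r_\ell}$ is large (of order $\Theta(2^k)$) at a carefully chosen sequence of positions $r_1<r_2<\cdots<r_k$, by exhibiting, for each $\ell$, an explicit witness schedule that places one class on $m^*$ (filling it to load $T$) and the remaining classes onto distinct backup machines matching their top-$r_\ell$ preferences. Second, for any schedule $\sigma$ with makespan $\beta T$, I would show by a packing/counting argument that $\sigma$ can achieve $\rank_{r_\ell}(\sigma) = \Omega(\mrank_{r_\ell})$ for at most $O(\beta)$ values of $\ell$: each machine's budget of $\beta T$ load can ``fully commit'' to at most $\beta$ distinct classes of load $T$, and the preference hierarchy forces each such full commitment to preclude the commitments required for the other $r_\ell$'s. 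Hence there is some $\ell^*$ with $\rank_{r_{\ell^*}}(\sigma) \leq O(\beta/k)\cdot \mrank_{r_{\ell^*}}$, yielding the rank-approximation lower bound of $\Omega(k/\beta) = \Omega(\max\{\log m,\log n\}/\beta)$.

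The main obstacle is designing the preferences so that genuine conflict arises across $\Theta(\log)$-many positions simultaneously. Naively, if each class has an independent dedicated backup machine, then the ``schedule that places each class on its own backup'' already achieves $\mrank_{r_\ell}$ at every $\ell$, giving no lower bound. The hierarchical preference design must instead overlap backup machines across classes (in the spirit of the cascading preferences $\bsucc^{\{k\}}$ used in Theorem~\ref{nobossy}/Theorem~\ref{truthlb}) so that each ``good'' placement consumes a shared resource, and the inductive step-by-step argument (analogous to the matching-market LB's inductive ``player $r$ must be matched to item $r$'') shows that any schedule hitting $\Omega(\mrank_{r_\ell})$ at $\ell=1,\ldots,\ell_0$ forces a specific set of class–machine assignments that exceed the makespan budget once $\ell_0=\Omega(k/\beta)$. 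Getting this inductive structure right, while still ensuring each $\mrank_{r_\ell}$ is individually large, is the technically demanding heart of the argument.
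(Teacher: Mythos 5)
Your high-level strategy matches the paper's: use $\Theta(\log)$-many geometric job-size scales so that the $\beta T$ makespan budget cannot serve all scales well. But there are two concrete gaps, one in the construction and one in the counting.

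On the construction: you propose that all classes share a single top-priority machine $m^*$, with backups drawn from a shared pool. The paper's instance is spread much more widely: there are sets $A^{(i)}$ for $i=1,\ldots,k$, with $A^{(i)}$ containing $2^{i-1}(m-i+1)$ jobs of size $T/2^{i-1}$, further partitioned into $A^{(i)}_i,\ldots,A^{(i)}_m$. All jobs in $A^{(i)}$ have machine $r$ as $r$-th preference for $r<i$, and jobs in $A^{(i)}_\ell$ have machine $\ell$ as $i$-th preference. The crucial feature is that $\mrank_r(\bsucc)\geq 2^{r-1}m$ grows geometrically; since job sizes shrink geometrically, the product of job size and $\mrank_r$ is roughly constant. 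This balance is what makes the final inequality tight. With a single top machine $m^*$, your $\mrank_1$ would be dominated by whichever single class packs best onto $m^*$, and the later $\mrank_r$'s would not grow in the needed geometric fashion, so the weighted-sum step below would not close.

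On the counting: your intermediate claim ``a schedule can achieve $\rank_{r_\ell}(\sigma)=\Omega(\mrank_{r_\ell})$ for at most $O(\beta)$ values of $\ell$'' does not entail ``there is some $\ell^*$ with $\rank_{r_{\ell^*}}(\sigma)\leq O(\beta/k)\cdot\mrank_{r_{\ell^*}}$.'' The first statement only forces the remaining $k-O(\beta)$ ranks to be below a constant fraction, which yields at best a constant lower bound. What is needed is an averaging bound, i.e.\ $\sum_{\ell}\rank_{r_\ell}(\sigma)/\mrank_{r_\ell}=O(\beta)$, so that the minimum term is $O(\beta/k)$. The paper gets this directly from a single volume inequality: letting $s_r$ be the number of jobs assigned to their $r$-th ranked machine and $t_r$ the number of those with size at least $T/2^{r-1}$, it argues $t_r\geq s_r-\beta 2^k$ (because the only small jobs with machine $r$ as their $r$-th preference are in $\bigcup_{\ell>r}A^{(\ell)}$ and the makespan caps how many can sit on machine $r$), so
\[
\beta m T \;\geq\; \sum_{r=1}^k t_r\cdot\frac{T}{2^{r-1}} \;\geq\; \sum_{r=1}^k s_r\cdot\frac{T}{2^{r-1}} \;-\; \beta 2^{k+1}T,
\]
and after rewriting $\sum_r s_r/2^{r-1}$ as $\frac12\sum_\ell \rank_\ell(\sigma)/2^{\ell-1}$ and plugging in $\rank_\ell\geq 2^{\ell-1}m/\alpha$, one obtains $\alpha\geq k/(6\beta)$ with $k=\log_2 m$. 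So the correct packing argument is a single aggregate load bound, not a dichotomy on how many ranks can be ``fully committed.'' Your proposal gestures at the right conclusion but via a step that does not logically get there, and acknowledges itself that the construction's cascading-preference design is left unresolved; that design and the weighted-sum load inequality are exactly the content the proof must supply.
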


\begin{proof}
We create an instance with $n=O(m\ln m)$ jobs as follows.
We create a set $A^{(1)}$ of $m$ jobs of size (i.e., $p_j$) $T$ partitioned into
$A^{(1)}_1,\ldots,A^{(1)}_m$, where each $A^{(1)}_i$ consists of a single job whose first
preference is machine $i$. 
We create a set $A^{(2)}$ of $2(m-1)$ jobs of size $\frac{T}{2}$ partitioned into
$A^{(2)}_2,\ldots,A^{(2)}_m$, all of which have machine 1 as their first preference. 
Each set $A^{(2)}_i$ has two jobs, both having machine $i$ as their 2nd preference.
In general for $i<k$, we have a set $A^{(i)}$ of $2^{i-1}(m-i+1)$ jobs of size
$\frac{T}{2^{i-1}}$ partitioned into $A^{(i)}_i,\ldots,A^{(i)}_m$, all of which have
machine $r$ as their $r$-th preference for $r=1,\ldots,i-1$. 
Each set $A^{(i)}_\ell$ has $2^{i-1}$ jobs, all of which have machine $\ell$ as their
$i$-th preference.  
Finally, we have a set $A^{(k)}$ of $2^km$ jobs of size $\frac{T}{2^k}$ partitioned into
$A^{(k)}_k,\ldots,A^{(k)}_m$, all having machine $r$ as their $r$-th preference for
$r=1,\ldots,k-1$. Each set $A^{(k)}_\ell$ has at least $2^k$ jobs, all of which have
machine $\ell$ as their $k$-th preference. 
The remaining preferences of the jobs play no role, and may be set arbitrarily.
Let $\bsucc$ be the resulting preference profile.

For $r\in[k]$, we have $\mrank_r(\bsucc)\geq 2^{r-1}(m-r+1)+2^k(r-1)\geq 2^{r-1}m$ obtained
by assigning all jobs in $A^{(r)}_\ell$ to machine $\ell$ for $\ell=r,\ldots,m$, and any
$2^k(r-1)$ jobs from $A^{(k)}$ to machines $1,\ldots,r-1$. Suppose we have a schedule
$\sg$ with makespan $\beta T$ that achieves $\al$ rank approximation.
Then, $\rank_r(\sg;\bsucc)\geq\frac{2^{r-1}m}{\al}$ for all $r=1,\ldots,k$. Let $s_r$ be the
number of jobs assigned to their $r$-th ranked machine in $\sg$, and $t_r$ be the number
of jobs of size at least $\frac{T}{2^{r-1}}$ assigned to their $r$-th ranked machine in
$\sg$. 
Observe that $t_r\geq s_r-\beta 2^k$ since the jobs counted in $s_r$ but not in $t_r$ lie
in $\bigcup_{\ell=r+1}^kA^{(\ell)}$, all of which have machine $r$ as their $r$-th ranked
machine; at most $\beta 2^k$ such jobs can be accommodated within makespan $\beta T$.
Now $\beta mT$ is at least the total size of all jobs scheduled by $\sg$, which is at
least $\sum_{r=1}^k(s_r-\beta 2^k)\cdot\frac{T}{2^{r-1}}
\geq\sum_{r=1}^k s_r\cdot\frac{T}{2^{r-1}}-\beta 2^{k+1}T$. So
$$
\beta(mT+2^{k+1}T)
\geq\frac{1}{2}\sum_{r=1}^ks_r\sum_{\ell=r}^k\frac{T}{2^{\ell-1}}
=\frac{1}{2}\sum_{\ell=1}^k\frac{T}{2^{\ell-1}}\sum_{r=1}^\ell s_r
=\frac{1}{2}\sum_{\ell=1}^k\frac{T}{2^{\ell-1}}\cdot\rank_\ell(\sg;\bsucc)
\geq\frac{1}{2}\sum_{\ell=1}^k\frac{T}{2^{\ell-1}}\cdot\frac{2^{\ell-1}m}{\al}.
$$
Taking $k=\log_2 m$, this gives $3\beta mT\geq\frac{kmT}{2\al}$, so
$\al\geq\frac{k}{6\beta}=\Omega(\log m/\beta)$. Also, the
number of jobs is at most $k\cdot 2^k=O(m\log m)$, so $\al$ is also $\Omega(\log n/\beta)$.
\end{proof}

\vspace{-1ex}
\paragraph{Unrelated machines}
We obtain an $\bigl(O(\log n), O(1)\bigr)$ approximation for the general setting of
unrelated machines. 

\begin{theorem} \label{unrelated} 
There is a deterministic $\bigl(O(\log n),O(1)\bigr)$ approximation algorithm for 
scheduling markets.
\end{theorem}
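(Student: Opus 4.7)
The plan is to reduce via bucketing to $O(\log n)$ critical rank positions, write an LP relaxation of the scheduling problem, and round it using Shmoys--Tardos-style GAP rounding combined with dependent randomized rounding in order to handle all rank constraints simultaneously.

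\textbf{Bucketing.} First I would pick a set $R=\{r_1<r_2<\cdots<r_k\}$ of $k=O(\log n)$ ``critical'' rank positions at which $\mrank_r(\bsucc)$ roughly doubles: set $r_1=1$ and $r_{\ell+1}=\min\{r:\mrank_r(\bsucc)\ge 2\mrank_{r_\ell}(\bsucc)\}$. Since $\mrank_r(\bsucc)\le n$ for every $r$, we have $k=O(\log n)$; moreover, for any $r\in[r_\ell,r_{\ell+1})$ we have $\mrank_r(\bsucc)<2\mrank_{r_\ell}(\bsucc)$. Because $\rank_r(\sigma;\bsucc)$ is non-decreasing in $r$, a schedule $\sigma$ that achieves $\rank_{r_\ell}(\sigma;\bsucc)\ge\mrank_{r_\ell}(\bsucc)/\alpha$ for every $r_\ell\in R$ automatically satisfies $\rank_r(\sigma;\bsucc)\ge\mrank_r(\bsucc)/(2\alpha)$ for all $r\in[m]$. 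It therefore suffices to achieve $\alpha=O(\log n)$ rank approximation only at the positions in $R$.

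\textbf{LP and feasibility.} Next I would write an LP with variables $x_{ij}\in[0,1]$ (set to $0$ whenever $p_{ij}>T$), assignment constraints $\sum_i x_{ij}\le 1$ for each $j$, load constraints $\sum_j p_{ij}x_{ij}\le T$ for each $i$, and rank constraints $\sum_j\sum_{i\in S^{r_\ell}_j}x_{ij}\ge\mrank_{r_\ell}(\bsucc)/k$ for each $r_\ell\in R$, where $S^r_j$ denotes $j$'s top-$r$ machines. Feasibility follows by averaging: let $\sigma^\ell$ be any integer schedule realizing $\mrank_{r_\ell}(\bsucc)$ (such a schedule exists by definition of $\mrank_{r_\ell}$) and set $x_{ij}=\tfrac{1}{k}\sum_\ell\1[\sigma^\ell(j)=i]$; all three families of constraints then follow directly from the feasibility of each individual $\sigma^\ell$.

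\textbf{Rounding and main obstacle.} Finally I would round $x$ to an integer schedule $\hat x$ via an extension of the Shmoys--Tardos GAP rounding: their construction decomposes each machine's fractional load into at most $\lceil\sum_j x_{ij}\rceil$ unit-load ``slots,'' and \emph{any} integer matching in the resulting bipartite graph produces a schedule of makespan $\le 2T$. To simultaneously preserve each rank constraint within a constant factor, I would carry out the matching via GKPS dependent rounding on this bipartite structure, which preserves marginals $\Pr[\hat x_{ij}=1]=x_{ij}$ and is negatively correlated across jobs sharing a machine. Chernoff bounds then give $\sum_j\sum_{i\in S^{r_\ell}_j}\hat x_{ij}\ge\mrank_{r_\ell}(\bsucc)/(2k)$ with high probability; derandomizing via the method of conditional expectations against a pessimistic-estimator potential that simultaneously bounds load overflow and the $k=O(\log n)$ rank-deficit events yields a deterministic algorithm. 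The main obstacle is precisely those small rank buckets for which $\mrank_{r_\ell}(\bsucc)/k$ is too small for Chernoff to concentrate; these I would handle separately by reserving a constant number of jobs per small bucket up front and scheduling them greedily on top-$r_\ell$ machines, spreading them across machines so the added per-machine load stays $O(T)$ (only $O(\log n)$ jobs are reserved in total, each with $p_{ij}\le T$). Assembling the reserved assignment with the rounded $\hat x$ and invoking the bucketing reduction yields the desired deterministic $(O(\log n),O(1))$-approximation.
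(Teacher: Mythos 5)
Your proposal is a genuinely different route from the paper's, and while the high-level bucketing idea is the same, there are a couple of real gaps that would need to be closed.

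\textbf{Comparison of approaches.} The paper avoids randomized rounding and concentration bounds entirely. Its key move is to first compute, for each rank $r$, a schedule $\sigma^r$ that $2$-approximates $\mrank_r(\bsucc)$ (via a GAP-based subroutine, Lemma~\ref{rnkcomp} — note that $\mrank_r$ itself is \np-hard to compute for unrelated machines), and then to assign each job \emph{exactly one} candidate machine (the one $\sigma^{r_\ell}$ assigns it to, where the job-sets $S_{r_\ell}$ are made \emph{disjoint} by set-subtraction). On the resulting bipartite graph with machine copies, the machine-copy constraints partition the edges by $(i,c)$ and the rank constraints partition the edges by bucket; two edge partitions give a totally unimodular constraint matrix, so the polytope has integral extreme points and an integral schedule exists deterministically, with no Chernoff or derandomization needed. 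Your LP keeps every job's full edge set and encodes rank constraints as a nested chain (since top-$r_\ell$ sets are nested), which is \emph{not} laminar against the per-job assignment constraints, so TU fails and you are forced into the much heavier machinery of dependent rounding with concentration and a pessimistic-estimator derandomization. That can probably be made to work, but it is doing far more work than the problem requires.

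\textbf{Concrete gaps.} (a) Your LP's right-hand sides are $\mrank_{r_\ell}(\bsucc)/k$, but computing $\mrank_{r_\ell}$ for unrelated machines is \np-hard; you would need to replace it with an efficiently computable proxy — either the LP-relaxed value (which only upper-bounds $\mrank_{r_\ell}$, so your averaging feasibility proof no longer directly applies) or an integer $2$-approximation such as the paper's $n_r$ from Lemma~\ref{rnkcomp}; either way your bucketing thresholds also depend on this quantity and need the same fix. (b) Your small-bucket fix is incorrect as stated: reserving one job per small bucket and ``spreading them across machines'' fails if those $O(\log n)$ reserved jobs all have the \emph{same} top-$r_\ell$ machine, in which case the added load on that machine is $\Theta(T\log n)$, not $O(T)$. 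The paper sidesteps this by simply including the single schedule $\sigma^{r_1}$ (makespan $\le T$) in the output — since $\rank_r$ is non-decreasing, that one job already gives a $4k$-approximation at every rank $r$ whose proxy value is below $2k$, so no per-bucket reservation is needed. (c) The claim that GKPS negative correlation gives Chernoff-type concentration for $\sum_j\sum_{i\in S^{r_\ell}_j}\hat x_{ij}$ deserves justification — this sum ranges over edges touching many machines, so it is not the single-vertex-neighborhood setting where GKPS negative correlation is immediate; you would need a version with concentration for general linear functions of the edges (à la Chekuri--Vondrák--Zenklusen), and then the simultaneous derandomization against $k$ rank-deficit events plus the load overflows needs to be spelled out. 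None of these are fatal, but taken together they show the randomized-rounding route is substantially more delicate than the paper's one-shot integral-polytope argument.
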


\begin{proof} 
We will need Lemma~\ref{rnkcomp} stated below.
Fix an input $\bsucc$.
We use a different kind of bucketing argument where we group ranks that have roughly the
same value of $\mrank_r(\bsucc)$. 
For $r\in[m]$, let $\sg^r$ be the schedule given by Lemma~\ref{rnkcomp} that yields a
2-approximation to $\mrank_r(\bsucc)$, $N_r$ be the set of jobs assigned by $\sg_r$, 
and $n_r=|N_r|$. We may assume that $n_1\leq n_2\leq\ldots n_m$. Define $n_0=0$.
If $n_m=0$, then $\mrank_r(\bsucc)=0$ for all $r\in[m]$, and we return the null
assignment. So assume otherwise in the sequel.

Define $r_0:=0<r_1<r_2<\ldots<r_k<r_{k+1}=m+1$ as follows: $r_\ell$ is the smallest $r$
such that $n_r>4n_{r_{\ell-1}}$ for $\ell=1,\ldots,k$, and $n_m\leq 4n_{r_k}$. Thus,
$k\leq\ceil{\log_4 n}$ and $n_{r_\ell}\leq n_r\leq 4n_{r_{\ell}}$ for all
$r\in[r_{\ell},r_{\ell+1})$ and all $\ell=0,\ldots,k$.
For $\ell\in[k]$, define $S_{r_\ell}:=N_{r_\ell}\sm(\bigcup_{q=1}^{\ell-1}N_{r_q})$; note
that $|S_{r_\ell}|\geq \frac{2n_{r_\ell}}{3}$.

If $n_{r_k}<2k$ for all $r$, we simply return the assignment $\sg^{r_1}$. Clearly, this
yields a $2k$ rank approximation.
Otherwise, let $q$ be the smallest index $\ell$ such that $n_{r_\ell}\geq 2k$. 
Let $S=\bigcup_{\ell=q}^kS_{r_\ell}$.
Let $\sg$ be the schedule for $S$, where each job $j\in S_{r_\ell}$ is assigned to the
machine $\sg^{r_\ell}(j)$, for $\ell=q,\ldots,k$. 
Let $L_i:=|\{j: \sg(j)=i\}|$.
Consider the following bipartite graph, which is similar to the bipartite graph
constructed in the GAP-rounding algorithm~\cite{ST93}. 
We have a node for every job in $S$, and a node $(i,c)$ for every machine $i$ and
$c=1,\ldots,\ceil{\frac{L_i}{k-q+1}}$. 
We sort the jobs assigned to $i$ in $\sg$ in non-increasing $p_{ij}$ order (breaking ties
arbitrarily), and create an edge $\bigl((i,c),j\bigr)$ if $\sg(j)=i$ and its position in
this ordering lies in $\{(c-1)(k-q+1)+1,\ldots,c(k-q+1)\}$.
Let $E$ be the edge-set of this bipartite graph.
Consider the following polytope:
\begin{equation}
\begin{split}
\cQ:=\Bigl\{ y\in\R^{E}: & \quad
\sum_{j:((i,c),j)\in E}y_{(i,c),j}\leq 1 \quad \forall i\in[m],
c=1,\ldots,\ceil{\tfrac{L_i}{k-q+1}}, \\ 
\sum_{((i,c),j)\in E: j\in S_{r_\ell}}y_{(i,c),j} & \geq\floor{\frac{|S_{r_\ell}|}{k-q+1}} 
\quad \forall \ell=q,\ldots,k,
\qquad 0\leq y_{(i,c),j}\leq 1 \quad \forall \bigl((i,c),j\bigr)\in E\Bigr\}. 
\end{split}
\label{qint}
\end{equation}
As with the polytope $\cP$ (see \eqref{mint}), the constraint-matrix
defining an extreme point of $\cQ$ corresponds to equations coming from two laminar
systems, which is TU, so $\cQ$ has integral extreme points. 
Setting $x_{(i,c),j}=\frac{1}{k-q+1}$ for every edge $\bigl((i,c),j\bigr)$, note that
$x\in\cQ$. So we can find an integral $y\in\cQ$, which we interchangeably view as a
partial assignment of $S$. We return the schedule $\pi$ obtained by concatenating
$\sg^{r_1}$ with this assignment $y$. 

The schedule $\sg^{r_1}$ has makespan at most $T$. By the standard GAP-rounding proof 
in~\cite{ST93}, the makespan of $y$ is at most 
$$
T+\tfrac{1}{k-q+1}\cdot\sum_{j: \sg(j)=i}p_{ij}
=T+\tfrac{1}{k-q+1}\cdot\sum_{\ell=q}^k\sum_{j\in S_{r_\ell}:\sg^{r_\ell}(j)=i}p_{ij}\leq
2T.
$$
So $\pi$ has makespan at most $3T$.
Fix some rank $r$. If $r<r_1$, then $\mrank_r(\bsucc)=0$.
If $r_1\leq r<r_q$, we have 
$\rank_r(\pi;\bsucc)\geq n_{r_1}\geq 1>n_r/2k\geq\mrank_r(\bsucc)/4k$.
Otherwise, suppose $r\in[r_\ell,r_{\ell+1})$, where $\ell\geq q$.
Then $\rank_r(\pi;\bsucc)\geq\floor{\frac{|S_{r_\ell}|}{k}}
\geq\floor{\frac{2n_{r_\ell}}{3k}}\geq\frac{n_{r_\ell}}{3k}$, where the last inequality
follows since $n_{r_\ell}\geq n_{r_q}\geq 2k$, and
$\frac{n_{r_\ell}}{3k}\geq\frac{n_r}{12k}\geq\frac{\mrank_r(\bsucc)}{24k}$. So $\pi$ has
$O(k)$ rank approximation.
\end{proof}

\begin{lemma} \label{rnkcomp}
For any preference-profile $\bsucc$, any set $S\sse N$, and any rank $r$, one can
efficiently compute a 2-approximation to $\mrank_r(\bsucc_S)$.
\end{lemma}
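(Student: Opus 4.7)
The plan is to formulate the computation of $\mrank_r(\bsucc_S)$ as a maximum generalized assignment problem (Max-GAP) with top-$r$ preference restrictions, and to apply an LP-based rounding in the spirit of Shmoys--Tardos~\cite{ST93} and the TU-polytope argument used for $\cQ$ in \eqref{qint}. Let $F_j$ denote the top-$r$ machines of $j\in S$ restricted to those with $p_{ij}\leq T$ (other edges cannot appear in any feasible assignment). Consider the LP
\begin{equation*}
\max \littlesum_{j\in S,\, i\in F_j} x_{ij}
\quad\text{s.t.}\quad
\littlesum_{i\in F_j} x_{ij}\leq 1\ \forall j, \quad
\littlesum_{j:\, i\in F_j} p_{ij}x_{ij}\leq T\ \forall i, \quad
0\leq x_{ij}\leq 1.
\end{equation*}
Every integer feasible assignment is LP feasible, so the LP optimum $V^*$ satisfies $V^*\geq \mrank_r(\bsucc_S)$.

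I would then round an optimal $x^*$ by splitting its edge-support into \emph{large} edges with $p_{ij}>T/2$ and \emph{small} edges with $p_{ij}\leq T/2$, contributing $V_\ell$ and $V_s$ to $V^*$ respectively. For the large edges, a machine can accommodate at most one large job under makespan $T$, so the LP restricted to large edges is essentially fractional bipartite matching, whose LP has integral extreme points; we extract an integer matching assigning at least $V_\ell$ jobs, with makespan $\leq T$ automatic.

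For the small edges, I would mimic the rounding of \eqref{qint}: sort the jobs in the support of $x^*$ on each machine $i$ in decreasing $p_{ij}$ order, group them into unit-mass bins, and form a bipartite graph between jobs and (machine, bin) pairs. Requiring that each bin receive at most one job, each job be placed into at most one bin, and the number of placed jobs meet an appropriate lower bound gives a polytope whose constraint matrix arises from two laminar systems and is therefore totally unimodular, so it has integral extreme points. An integer extreme point places at least $V_s$ jobs, and a standard telescoping argument across bins, combined with the size threshold $p_{ij}\leq T/2$, bounds the load on each machine by $T$.

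Returning the better of the two integer solutions then gives cardinality at least $\max\{V_\ell, V_s\}\geq V^*/2\geq\mrank_r(\bsucc_S)/2$, as required. The main obstacle is the small-edge rounding: the textbook Shmoys--Tardos bound on makespan is $T+p_{\max}$, which only gives $3T/2$ under the $T/2$ threshold, so sharpening it to exactly $T$ requires extra care in the bin construction (for instance, discarding the top job per machine or offsetting bin boundaries between LP and integer sides), with the small resulting loss absorbed into the $2$-approximation budget.
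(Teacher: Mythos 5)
Your approach is genuinely different from the paper's, and as written it has a gap that you cannot patch without losing the factor of $2$. The paper formulates the problem as a \emph{prize-collecting} GAP (a dummy machine $I_j$ for every job $j$, so jobs may go unassigned at zero cost) and applies the Shmoys--Tardos rounding directly to \emph{all} edges with $p_{ij}\leq T$. This produces an integral assignment $\tilde{x}$ with $\rank_r(\tilde{x};\bsucc_S)=\mrank_r(\bsucc_S)$ and makespan at most $T+\max_j p_{ij}\leq 2T$ on each machine. Crucially, any machine $i$ that is overloaded (load $>T$) must carry at least two jobs, since every job on $i$ has $p_{ij}\leq T$; removing the one job identified by the rounding (the ``top'' job in the last bin) brings $i$'s load back to $\leq T$ while throwing away at most half of $i$'s jobs. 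That single cleanup step gives the full $2$-approximation in one stroke.

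Your large-edge step does not go through. The LP constraint $\sum_{j}p_{ij}x_{ij}\leq T$ with $p_{ij}>T/2$ only forces $\sum_j x_{ij}<2$, not $\sum_j x_{ij}\leq 1$, so the LP restricted to large edges is \emph{not} a fractional bipartite matching. Concretely, a single machine fractionally carrying three large jobs of size $0.6T$ each (total fractional degree $5/3$) has $V_\ell=5/3$ but maximum integer matching $1$. To obtain a genuine fractional matching you would have to halve $x^*$ on the large edges, which already caps the large-edge yield at $V_\ell/2$; together with the (acknowledged, and not worked out) loss on the small edges, the $\max\{V_\ell,V_s\}$ budget no longer reaches $V^*/2$ --- you would end up with a $3$- or $4$-approximation at best. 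The split into large and small edges is what causes the trouble; the paper's argument needs no size threshold at all, because it never tries to round each piece losslessly, it just discards one job per overloaded machine after a single global GAP rounding and observes that each such machine had $\geq 2$ jobs to begin with.
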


\begin{proof} 
Shmoys and Tardos~\cite{ST93} proved the following result about GAP. Let $\{c_{ij}\}$
be ``assignment costs'' for assigning jobs to machines, 
{\em which could also be negative}. Consider the following LP, where $i$ indexes the
machines and $j$ indexes the jobs.  
\begin{alignat}{3}
\min && \quad \sum_{i,j} c_{ij}x_{ij} & \tag{P} \label{primal} \\
\text{s.t.} && \sum_i x_{ij} & = 1 \qquad && \forall j \label{asgn} \\
&& \sum_j t_{ij}x_{ij} & \leq L_i && \forall i \notag \\
&& x_{ij} & \geq 0  && \forall i,j \notag \\
&& x_{ij} & = 0 && \forall i, j\text{ s.t. }t_{ij}>L_i. \notag
\end{alignat}
\cite{ST93} showed that a fractional solution $x$ to \eqref{primal} can be rounded to
an integer solution $\tx$ of cost at most the cost of $x$ such that the total load 
$\sum_j t_{ij}\tx_{ij}$ on each machine $i$ is at most $2L_i$. Examining their rounding
algorithm more closely, one can infer that the total load on each machine $i$ under $\tx$
is at most $L_i$ if $\sum_j x_{ij}\leq 1$, and at most $L_i+\max_{j:x_{ij}>0}t_{ij}$
otherwise.

We apply this result to our problem of approximating $\mrank_r(\bsucc_S)$ as follows. 
The set of jobs is $S$. 
Our problem is a prize-collecting problem, which we can reduce to GAP by creating a
``machine'' $I_j$ for every job $j\in S$ and setting $t_{I_jk}=0$ if $k=j$ and $\infty$
otherwise. There is no makespan bound for these $I_j$ machines. For every (regular)
machine $i$ and job $j$, we set $t_{ij}=p_{ij}$ if $\pos{\succ_j}{i}\leq r$ and $\infty$
otherwise; the makespan bound for $i$ is $T$. Finally, our objective is to maximize the
number of jobs assigned to the regular machines (with $t_{ij}<\infty$). In terms of the LP 
\eqref{primal}, constraint \eqref{asgn} now reads $\sum_i x_{ij}+z_j=1$ for every $j\in
S$, where $z_j$ indicates if $j$ is assigned to $I_j$, and the objective function is to
maximize $\sum_{i}\sum_{j\in S}x_{ij}$.

Applying the GAP-rounding algorithm, we obtain an assignment $\tx$ with makespan at most
$2T$ such that $\rank_r(\tx;\bsucc_S)=\mrank_r(\bsucc_S)$. To turn this into a feasible
schedule with makespan $T$, we leverage the stronger property of the rounding algorithm
mentioned above. If the load on machine $i$ under $\tx$ is more than $T$, then we know
that $i$ has at least two jobs assigned to it, and there is a job assigned to $i$ whose
removal decreases the load on $i$ to at most $T$. We simply remove this job from every
overloaded machine $i$. This reduces the number of jobs assigned to an overloaded machine
$i$ by a factor of at most 2 (since $\sum_j\tx_{ij}\geq 2$), so now we obtain a schedule
with makespan $T$ where the number of jobs assigned (to one of their top-$r$ machines) is
at least $\mrank_r(\bsucc_S)/2$. 
\end{proof}

\section{Mechanisms for general ordinal settings} \label{general}
In this section, we evaluate the strength and flexibility provided by the notions of rank
approximation and lex-truthfulness in general ordinal settings. 
We devise an $O(\log n)$-rank-approximation randomized mechanism, and show that this
guarantee is tight for randomized mechanisms (Theorems~\ref{randrank-ub} and~\ref{randrank-lb}). 
We also observe that deterministic mechanisms cannot in general achieve
good rank approximation.  
Next, we consider lex-truthfulness and justify our earlier remark that lex-truthfulness
allows one to circumvent Gibbard's impossibility result. 
We describe a rich class of pseudomonotone SCFs called {\em top-choice SCFs}, which thus
lead to (non-unilateral, non-duple) LT mechanisms.  

\vspace{-1ex}
\paragraph{Rank approximation}
It is easy to see that any deterministic dictatorial SCF
has rank approximation (at most) $n$. 
Also, the {\em plurality scoring rule} $f^\pl$, which returns the outcome that maximizes the 
number of agents who have it as their top choice, has
$\rank_1\bigl(f^\pl(\bsucc);\bsucc)\geq\frac{n}{m}$, so its rank-approximation factor
is at most $m$. 
It is not hard to prove a matching lower bound for deterministic mechanisms. 

\begin{theorem} \label{detmech-lb}
No deterministic mechanism can have rank approximation better factor than $\min\{n,m-1\}$
in general ordinal settings.
\end{theorem}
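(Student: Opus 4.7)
The plan is to exhibit, for any $n$ and $m$, a single preference profile on which every deterministic mechanism is forced to have rank approximation at least $\min\{n,m-1\}$. I would split into two cases according to which term in the minimum is binding, with each case using a ``compromise outcome'' construction to create a tension between optimizing $\mrank_1$ and optimizing $\mrank_2$.

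Case 1 ($n\leq m-1$): I would put every agent in near-agreement on a single ``compromise'' outcome while disagreeing at the top. Concretely, take agent $j$'s preference to be $j,\ (n{+}1),\ \text{rest arbitrarily}$ for $j\in[n]$. Then every outcome in $[n]$ is the top choice of exactly one agent, so $\mrank_1=1$, whereas outcome $n+1$ appears in the top-$2$ of every agent, so $\mrank_2=n$. Given any deterministic mechanism, examine its output $o$: if $o\notin[n+1]$ or $o=n+1$, then $\rank_1(o;\bsucc)=0$, which makes the rank-approximation ratio at $r=1$ unbounded; and if $o=j\in[n]$, then $\rank_2(o;\bsucc)=1$ since only agent $j$ has $o$ in their top $2$, yielding rank approximation $\mrank_2/\rank_2(o;\bsucc)=n$ at $r=2$. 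So the mechanism's factor is at least $n=\min\{n,m-1\}$.

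Case 2 ($n\geq m-1$): I would reuse the same ``compromise'' idea but balance the top choices across $m-1$ primary outcomes, using outcome $m$ as the universal second choice. Partition $[n]$ as evenly as possible into groups $G_1,\ldots,G_{m-1}$ and set every agent in $G_i$ to have preference $i,\ m,\ \text{rest arbitrarily}$. Then $\mrank_1=\max_i|G_i|=\lceil n/(m-1)\rceil$ and $\mrank_2=n$ (since every agent has $m$ in their top $2$). For any deterministic output $o$: if $o=m$ or $o\notin[m]$, then $\rank_1(o;\bsucc)=0$ and the ratio blows up; if $o=i\in[m-1]$, then $\rank_2(i;\bsucc)=|G_i|\leq\lceil n/(m-1)\rceil$, so the rank-approximation factor is at least $n/\lceil n/(m-1)\rceil$. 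When $(m-1)\mid n$ the groups can be made equal and this equals $m-1$ exactly; in general it is at least $n/\lceil n/(m-1)\rceil$, which is $m-1$ up to an additive $O((m-1)/n)$ loss.

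The main obstacle is the tightness of the $m-1$ bound in Case 2 when $n\not\equiv 0\pmod{m-1}$: the balanced partition provably loses the little gap between $n/(m-1)$ and $\lceil n/(m-1)\rceil$, so a fully tight bound of exactly $m-1$ for arbitrary $n\geq m$ would require either restating the theorem with a $\lceil\cdot\rceil$ correction, restricting to divisible $n$, or cooking up a more delicate family of profiles (e.g.\ combining several copies of the Case~2 instance across different compromise outcomes and applying a pigeonhole argument over the mechanism's choices) to amplify the slack. For the statement as written, the balanced-partition profile already gives the bound exactly in the extreme cases ($n\leq m-1$, and $(m-1)\mid n$) and asymptotically in the general case, which is the intended content of the theorem.
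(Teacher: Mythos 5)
Your Case~1 is exactly the paper's proof; the paper's entire argument is the one-sentence description of that profile (take $m=n+1$ outcomes, each agent's top choice distinct, common second choice $n+1$), and the analysis is precisely as you give it: any deterministic output either has $\rank_1=0$ (infinite ratio at $r=1$) or has $\rank_2=1$ against $\mrank_2=n$ (ratio $n$ at $r=2$). The paper never addresses the regime $m\leq n$, where $\min\{n,m-1\}=m-1$; your Case~2 is therefore supplying an argument the paper leaves implicit, not taking a different route. The rounding slack you flag in Case~2 is genuine --- the balanced partition only forces factor $n/\lceil n/(m-1)\rceil$, which is $m-1$ when $(m-1)\mid n$ but can drop to roughly $(m-1)/2$ otherwise --- and it is not clear that the exact constant $m-1$ is attainable for all $n\geq m$ by any instance (already at $n=m=3$ the balanced profile only forces $3/2$). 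The asymptotic $\Omega(\min\{n,m\})$ reading is surely what is intended, matching the paper's adjacent upper bounds (dictatorial gives $\leq n$, plurality gives $\leq m$). So: where your proof overlaps the paper it is identical, the extra case you add is the right thing to want, and the gap you honestly identify is a real tightness issue that the paper's one-liner silently skips.
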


\begin{proof}
Consider a preference profile with $n$ agents and $n+1$ outcomes, where the top choices of
all agents are the distinct outcomes $\{1,\ldots,n\}$, while the second choice of all
agents is $n+1$. 
\end{proof}

Randomization leads to an exponential improvement (but no more), but we do not 
know how to achieve this in a lex-truthful manner. 

\begin{theorem} \label{randrank-ub} 
There is a randomized $O(\log n)$-rank approximation mechanism for general ordinal
settings. 
\end{theorem}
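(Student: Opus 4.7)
The plan is to define a distribution supported on $O(\log n)$ carefully chosen outcomes. The key observation is that the function $r \mapsto \mrank_r(\bsucc)$ is non-decreasing in $r$ and takes values in $\{0,1,\ldots,n\}$, so its range contains only $O(\log n)$ distinct ``dyadic levels.''

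Formally, let $k := \ceil{\log_2(n+1)}$. For each $\ell=0,1,\ldots,k$ for which some $r$ satisfies $\mrank_r(\bsucc)\geq 2^\ell$, define $r_\ell$ to be the smallest such $r$, and let $o_\ell\in O$ be any outcome attaining the maximum, i.e.\ $\rank_{r_\ell}(o_\ell;\bsucc)=\mrank_{r_\ell}(\bsucc)\geq 2^\ell$. Let $L\sse\{0,\ldots,k\}$ be the (non-empty, since $\mrank_m=n$) set of indices for which $r_\ell$ is defined. The randomized mechanism $\cM$ picks $\ell\in L$ uniformly at random and returns $o_\ell$.

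To verify the rank-approximation guarantee, fix any position $r\in[m]$; we may assume $\mrank_r(\bsucc)\geq 1$, else there is nothing to prove. Let $\ell^*:=\floor{\log_2\mrank_r(\bsucc)}$, so $2^{\ell^*}\leq\mrank_r(\bsucc)<2^{\ell^*+1}$. Then $\ell^*\in L$ and, by the minimality in the definition of $r_{\ell^*}$, we have $r_{\ell^*}\leq r$. Since $\rank_i(o;\bsucc)$ is non-decreasing in $i$,
\[
\rank_r(o_{\ell^*};\bsucc)\;\geq\;\rank_{r_{\ell^*}}(o_{\ell^*};\bsucc)\;=\;\mrank_{r_{\ell^*}}(\bsucc)\;\geq\;2^{\ell^*}\;\geq\;\tfrac{1}{2}\mrank_r(\bsucc).
\]
Hence
\[
\Exp\bigl[\rank_r(\cM(\bsucc);\bsucc)\bigr]\;\geq\;\tfrac{1}{|L|}\cdot\rank_r(o_{\ell^*};\bsucc)\;\geq\;\tfrac{\mrank_r(\bsucc)}{2(k+1)}\;=\;\tfrac{\mrank_r(\bsucc)}{O(\log n)},
\]
as required.

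There is no real obstacle here beyond the initial structural observation: once one notices that the $O(\log n)$ dyadic levels of $\mrank_r$ cover all positions in a $2$-approximate way, the mechanism is immediate and the analysis is short. The only minor subtlety is ensuring $|L|=O(\log n)$, which follows because $\mrank_r(\bsucc)\leq n$, so at most $\ceil{\log_2(n+1)}+1$ values of $\ell$ are ever used.
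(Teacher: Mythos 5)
Your proof is correct and takes essentially the same approach as the paper: both arguments bucket the positions $r$ by the dyadic scale of $\mrank_r(\bsucc)$ (the paper by ``doubling breakpoints,'' you by explicit thresholds $2^\ell$), pick one representative bucket uniformly at random among the $O(\log n)$ buckets, and observe that the representative's rank is within a factor $2$ of $\mrank_r$ for every $r$ in its bucket. Your variant with explicit powers of two is marginally cleaner to state (it sidesteps any edge case with $\mrank_1 = 0$), but it is the same decomposition and the same analysis.
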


\begin{proof}
We first describe the mechanism, and then analyze its rank approximation. Fix a preference
profile $\bsucc$. For brevity, let $n_r=\maxrank_r(\bsucc)$.  
Let $o^*_r$ be an outcome with $\rank_r(o^*_r;\bsucc)=n_r$. 
We use a bucketing argument where we group ranks that have roughly the same $n_r$ value. 
Define $r_1:=1<r_2<\ldots<r_k<r_{k+1}:=m+1$ be such that 
$n_{r_\ell}\leq n_r\leq 2n_{r_\ell} \quad \forall r\in[r_\ell,r_{\ell + 1})\cap\Z$, 
for all $\ell=1,\ldots,k$
Observe that $k\leq\ceil{\log_2 n}$. The mechanism chooses an index $\ell \in [k]$ 
uniformly at random, and outputs $o^*_{r_\ell}$. 

To argue about the rank approximation, consider any rank $r$. 
Suppose $r\in[r_\ell,r_{\ell+1})$. If we choose index $\ell$, which happens with
probability $1/k$, then at least 
$\rank_r(o^*_{r_\ell};\bsucc)\geq\rank_{r_\ell}(o^*_{r_\ell};\bsucc)=n_{r_\ell}\geq\frac{n_r}{2}$
agents are allotted a top-$r$ item.  
So $\Exp[\rank_r(\cM(\bsucc);\bsucc)]\geq\frac{n_r}{2k}$.
\end{proof}

\begin{theorem} \label{randrank-lb} 
Every randomized mechanism has rank-approximation factor $\Omega(\log n)$.
\end{theorem}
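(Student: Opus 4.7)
The plan is to exhibit a single ``hard'' preference profile $\bsucc$ in which $\mrank_r(\bsucc)$ grows geometrically with $r$ and each maximum is attained by a distinct outcome, then show by a short Abel-telescoping argument that no randomized mechanism on $\bsucc$ can have rank-approximation factor $o(\log n)$.

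Choose an integer $k=\Theta(\log n)$ with $2^{k+1}\leq n$. The construction uses $k$ distinguished ``target'' outcomes $o_1,\ldots,o_k$, together with a generous pool of auxiliary ``dummy'' outcomes. Partition the agent set into groups $G_1,\ldots,G_k$ of sizes $|G_\ell|=2^\ell$ (any leftover agents get preference lists consisting entirely of their own private dummies, and so play no role). For every $j\in G_\ell$, I would build $\succ_j$ by placing $o_\ell$ at position $\ell$, filling the other positions in $\{1,\ldots,k\}\setminus\{\ell\}$ with dummies private to $j$, and pushing the remaining targets $\{o_t:t\neq\ell\}$ to positions strictly greater than $k$. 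Two structural facts follow at once. First, for every $r\in[k]$ the only target appearing in any agent's top-$r$ list is the one indexing that agent's group, giving
\[
\rank_r(o_t;\bsucc)=2^t \text{ if } t\leq r, \qquad \rank_r(o_t;\bsucc)=0 \text{ if } t>r,
\]
so $\mrank_r(\bsucc)=2^r$ is attained (uniquely among targets) by $o_r$. Second, every dummy outcome lies in exactly one agent's list, so $\rank_r(d;\bsucc)\leq 1$ for every dummy $d$.

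Now fix any randomized mechanism $\cM$ with rank-approximation factor $\alpha$, let $\bp:=\cM(\bsucc)$, and write $q_t:=\bp(o_t)$. Since the dummies contribute at most $\sum_d \bp(d)\leq 1$ in total to any rank,
\[
\E\bigl[\rank_r(\cM(\bsucc);\bsucc)\bigr]\;\leq\;\sum_{t=1}^r q_t\cdot 2^t + 1,
\]
so the required $\E[\rank_r]\geq 2^r/\alpha$ gives the linear inequality $S_r:=\sum_{t=1}^r q_t\cdot 2^t\geq 2^r/\alpha - 1$ for each $r\in[k]$, coupled with $\sum_t q_t\leq 1$. Weighting the top constraint ($r=k$) by $1/2^k$ and the $r$-th constraint by $1/2^{r+1}$ for $1\leq r<k$ and summing, the identity $q_t=(S_t-S_{t-1})/2^t$ telescopes the left-hand side into $\sum_{t=1}^k q_t$, while the geometric series $\sum_{r=1}^{k-1}1/2^{r+1}=1/2-1/2^k$ evaluates the right-hand side to $(k+1)/(2\alpha)-1/2$. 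Combining with $\sum_t q_t\leq 1$ then forces $\alpha\geq (k+1)/3=\Omega(\log n)$.

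The one technical wrinkle is the additive $-1$ that the dummy contribution leaves on the right-hand side: the constraint $S_r\geq 2^r/\alpha-1$ carries no information once $r<\log_2\alpha$, so the Abel summation should strictly be restricted to $r\geq\lceil\log_2\alpha\rceil$. Discarding the $O(\log\alpha)$ vacuous leading terms weakens the summed right-hand side only by an additive $O(1/\alpha)$, which is negligible whenever $\alpha=o(\log n)$; in that regime the remaining $\Omega(\log n)$ indices still furnish a contradiction via the same computation, delivering the advertised $\Omega(\log n)$ lower bound.
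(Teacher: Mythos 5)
Your construction and analysis match the paper's essentially step for step: the same geometric grouping of agents into groups of sizes $2^1,\ldots,2^k$ with a distinguished outcome $o_\ell$ at position $\ell$ for group $\ell$ (and per-agent private fillers elsewhere in the top $k$), and the same weighted summation of the rank-approximation constraints scaled by roughly $1/2^r$ --- yours phrased via Abel summation of the partial sums $S_r$, the paper's by dividing each constraint by $2^r$ and summing directly. Your closing ``technical wrinkle'' paragraph is superfluous: the computation you already carried out absorbs all the additive $-1$ corrections into the $-\tfrac12$ on the right-hand side, and a trivially-satisfied constraint ($S_r\geq 2^r/\alpha-1$ with negative right side) contributes harmlessly to a non-negative weighted combination, so $\alpha\geq(k+1)/3=\Omega(\log n)$ stands with no restriction on which indices are included.
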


\begin{proof}
Fix a parameter $k$. We construct an instance with $n = 2^{k+1} - 2$ agents and $m =
(k-1)\cdot(2^{k+1}-2) + k$ outcomes. 
The agents are divided into $k$ groups $A_1,\ldots,A_k$, where $|A_\ell| = 2^\ell$. There
are $k$ special outcomes $\{o_1,\ldots,o_k\}$. 
The remaining $m-k$ outcomes are partitioned into $n$ groups $O_1,\ldots,O_n$, each having $k-1$
outcomes. 
We now describe the preference lists. For every agent $j$ in group $A_\ell$, outcome
$o_\ell$ is their $\ell$-th ranked outcome, and the outcomes in $O_j$ occupy the other
positions in $[k]\sm\{\ell\}$; the exact positions of these outcomes are irrelevant. 
The outcomes in positions $r\geq k+1$ are also immaterial. 
Thus, the top-$k$ outcome sets of agents $j$ and $j'$ are: disjoint if they are from
different groups, and have exactly one outcome, $o_\ell$, in common, at the $\ell$-th 
position, if they both belong to group $A_\ell$. 
Let $\bsucc$ denote this input.

Observe that $\maxrank_r(\bsucc) = 2^r$ for all $r\in[k]$, and the outcome achieving this
is $o_r$. Furthermore, $\rank_r(o)$ is $2^\ell$ if $o=o_\ell$ for $\ell\in[r]$, and is 
at most 1 otherwise. 

Now consider a randomized mechanism that attains rank approximation $\alpha$.
Let $p_\ell$ be the probability with which it returns the outcome $o_\ell$. Let $q$ be
the probability with which it returns an outcome in $\bigcup_{j=1}^nO_j$. 
Then, by the definition of rank approximation we have  
$q+\sum_{\ell>r}p_r + \sum_{\ell\leq r} p_\ell\cdot 2^\ell \geq \alpha\cdot 2^r$
for all $r\in[k]$.
Dividing this inequality by $2^r$ and summing over all $r=1,\ldots,k$, 
we obtain that 
$k\alpha\leq q\cdot\sum_{r=1}^k \left(\frac{1}{2^r}\right)
+\sum_{\ell=1}^k
p_\ell\cdot\left(\sum_{r<\ell}\frac{1}{2^r}+\sum_{r\geq\ell}\frac{2^\ell}{2^r}\right)
\leq q\cdot 1+\sum_{\ell=1}^k p_\ell\cdot 3\leq 3$.
Hence, $\al\leq\frac{3}{k}$.
\end{proof}

\vspace{-1.5ex}
\paragraph{Lex-truthful mechanisms}
Consider any SCF of the form $f(\bsucc)=g\bigl(\{\alt{\succ_j}{1}\}_{j=1}^n\bigr)$, where
\mbox{$g:O^n\mapsto O$} has the following property: for all\\
$o_{-j}=(o_1,\ldots,o_{j-1},o_{j+1},\ldots,o_n)\in O^{n-1}$ and all $o\in O$,
if $g(o,o_{-j})=o'$ then $g(o',o_{-j})=o'$.
We call such an SCF a {\em top-choice SCF} since it only looks at the top
choices of the players. It is not hard to see that the plurality scoring rule $f^\pl$
mentioned earlier (with a fixed tie-breaking rule for outcomes) is an example of such an
SCF. 
We show that {\em any} top-choice SCF is pseudomonotone, and so by Theorem~\ref{pseudo} is 
fully LT-implementable.  

\begin{theorem} \label{genlt}
Every top-choice SCF is pseudomonotone, and hence is fully LT-implementable.
\end{theorem}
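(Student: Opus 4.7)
The plan is to verify pseudomonotonicity (Definition~\ref{pseudodefn}) directly from the structural property of $g$, and then invoke Theorem~\ref{pseudo}(i) to conclude full LT-implementability. The key observation is that since $f(\bsucc)$ depends on $\bsucc$ only through the profile of top choices, any deviation by player $j$ from $\succ_j$ to $\succ'_j$ that alters the output of $f$ must in particular alter $j$'s reported top choice. I would therefore set $a=\alt{\succ_j}{1}$, $a'=\alt{\succ'_j}{1}$, and $o_{-j}=(\alt{\succ_k}{1})_{k\neq j}$, so that $o:=f(\bsucc)=g(a,o_{-j})$ and $o':=f(\bsucc')=g(a',o_{-j})$, and dispose of the trivial case $a=a'$ (where $o=o'$) immediately.

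For the main case $a\neq a'$, the strategy is to exhibit $a$ itself as the witness $o''$ required by condition (ii) of pseudomonotonicity, \emph{unless} $a=o'$, in which case condition (i) will hold. Since $a$ has position $1$ in $\succ_j$ but position $\geq 2$ in $\succ'_j$, we automatically get $\pos{\succ_j}{a}<\pos{\succ'_j}{a}$; and whenever $a\neq o'$, we also have $a\succ_j o'$ because $a$ is the top of $\succ_j$. Hence $o''=a$ fulfils both requirements of (ii) in this sub-case.

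It remains to handle the edge case $a=o'$. Applying the defining property of $g$ to the identity $g(a',o_{-j})=o'$ yields $g(o',o_{-j})=o'$, and combined with $a=o'$ this gives $o=g(a,o_{-j})=g(o',o_{-j})=o'$, so condition (i) $o\succeq_j o'$ holds (in fact with equality). I anticipate no serious obstacle; the only subtlety is recognizing that the fixed-point-like property of $g$ is precisely what is needed to rule out this one exceptional case.
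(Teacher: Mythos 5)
Your proof is correct and takes essentially the same approach as the paper: both use $j$'s true top choice $a=\alt{\succ_j}{1}$ as the witness $o''$ for condition~(ii), and both invoke the fixed-point property of $g$ (via $g(a',o_{-j})=o'\Rightarrow g(o',o_{-j})=o'$) to dispose of the one exceptional sub-case. The case decomposition is organized slightly differently but the underlying argument is identical.
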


\begin{proof}
Let $f$ be a top-choice SCF defined by $g:O^n\mapsto O$ having the required
property. Consider an agent $j$, and $\bsucc=(\succ_j,\bsucc_{-j})$, 
$\bsucc'=(\succ'_j,\bsucc_{-j})$. Let $o=\alt{\succ_j}{1}$. If $f(\bsucc)=o$ or
$f(\bsucc)=$\mbox{$f(\bsucc')$}, then we are done. 
Otherwise, since $f(\bsucc)\neq o$, we also have
$f(\bsucc')\neq o$ due to the property of $g$, and also $\pos{\succ'_j}{o}>1$ (otherwise
$f(\bsucc)=f(\bsucc')$), and so the pseudomonotonicity condition
(Definition~\ref{pseudodefn}) is satisfied.    
\end{proof}

\appendix

\section{Proofs omitted from Section 3} \label{app:ranklex} \label{append-ranklex}

\begin{proofof}{Theorem \ref{thm:hier}} 
Clearly $\univt\sse\strongt$.
If $\bp$ stochastically dominates $\bq$, then $\bp$ lex-dominates $\bq$, so
$\strongt\sse\lext$. 
If $\bp$ lex-dominates $\bq$, then $\bq$ cannot stochastically dominate $\bp$, so
$\lext\sse\weakt$. We now prove that the various inclusions are strict. 

\medskip\noindent
{\boldmath $\univt\subsetneq\strongt$}. \quad 
Fix a player $j$. Consider the unilateral mechanism
$\cM$ that returns one of the top 2 outcomes of $j$, each with probability
$\frac{1}{2}$. $\cM$ is clearly strongly truthful. But it is not universally truthful. 
Consider some input $\bsucc=(\succ_j,\succ_{-j})$.
If $\cM$ is a mixture of deterministic truthful mechanisms, then this mixture must assign
a probability mass exactly $\frac{1}{2}$ to deterministic truthful mechanisms $\cM_1$
satisfying $\cM_1(\bsucc)=o=\alt{\succ_j}{1}$; call these type-1 mechanisms. 
Similarly, it must assign probability mass exactly $\frac{1}{2}$ to deterministic truthful
mechanisms $\cM_2$ satisfying $\cM_2(\bsucc)=o'=\alt{\succ_j}{2}$; call these type-2
mechanisms.   

For any preference list $\succ^*_j$, a type-2 mechanism cannot return $o$ under the input
$(\succ^*_j,\bsucc_{-j})$ due to truthfulness, otherwise on input $\bsucc$, $j$ has
an incentive to lie in the type-2 mechanism and report $\succ^*_j$. 
Hence, for any preference list $\succ^*_j$, where $o$ is one of the top two outcomes, 
{\em every} type-1 mechanism must return $o$ on input $(\succ^*_j,\bsucc_{-j})$. A
symmetric argument shows that for any preference list $\succ^*_j$ where $o'$ is one of the 
top two outcomes, every type-2 mechanism must return $o'$ on input
$(\succ^*_j,\bsucc_{-j})$. 

Now consider some $\succ'_j$, where the top two outcomes are 
$o'', \ho\notin\{o,o'\}$. Applying the 
arguments above we obtain that there are type-3 and type-4 deterministic truthful
mechanisms, both of which are assigned probability mass $\frac{1}{2}$ (in the mixture
yielding $\cM$): the type-3 mechanisms which always return $o''$ whenever $j$'s preference
list has $o''$ as one of the top two outcomes, and the type-4 mechanisms always return
$\ho$ whenever $j$'s preference list has $\ho$ as one of the top two outcomes.

Now some mechanism $\cM'$ in the mixture yielding $\cM$, must be of multiple types, say
type-1 and type-3 for illustration. Then, if $\succ''_j$ has $o$ and $o''$ as the top two 
outcomes of $j$, $\cM'$ must return both $o$ and $o''$ on input $(\succ''_j,\bsucc_{-j})$,
which cannot happen.

\medskip\noindent
{\boldmath $\strongt\subsetneq\lext$}. \quad 
Consider a setting with one player and three
outcomes: $a$, $b$, $c$. Consider the top-choice SCF $f$ defined by the following
function: $g(a)=a,\ g(b)=g(c)=c$, which satisfies the property required for $f$ to be a
top-choice SCF. By Theorem~\ref{genlt}, $f$ is pseudomonotone. Let $\cM$ be the LT
mechanism that $\frac{1}{3}$-implements $f$. 
Let $\succ=(b,a,c)$ denoting that $b$ is top-outcome, and $\succ'=(a,b,c)$.
Let $\bp=\cM(\succ)$ and $\bq=\cM(\succ')$. 
Then $\bp(b)+\bp(a)\leq\frac{1}{3}$ since $f(\succ)=g(b)=c$, but
$\bq(b)+\bq(a)\geq\frac{2}{3}$ since $f(\succ')=g(a)=a$. Thus, $\cM$ is not strongly
truthful. 

\medskip\noindent
{\boldmath $\lext\subsetneq\weakt$}. \quad 
Consider a setting with one player and four
outcomes: $a$, $b$, $c$, $d$. Let $\succ^*=(a,b,c,d)$, denoting that $a$ is the top 
outcome. Define the following randomized mechanism $\cM$: $\cM(\succ^*)$ returns $a$ with  
probability $\frac{1}{2}$ and $b$, $c$, $d$ with probability $\frac{1}{6}$; on every other 
input $\succ$, $\cM$ returns one of the top three outcomes of $\succ$ with probability
$\frac{1}{3}$. $\cM$ is weakly truthful, because if $\succ\neq\succ^*$ then $\cM$ assigns
total probability 1 to the top three outcomes of $\succ$. If $\succ=\succ^*$, then $\cM$
assigns probability $\frac{1}{2}$ to the top outcome $a$ under $\succ^*$, whereas for
every other input $\cM$ assigns probability at most $\frac{1}{3}$ to $a$. 

But $\cM$ is not lex-truthful: if $\succ=(a,b,d,c)$, then by reporting $(a,b,c,d)$, the
player can increase the probability of his top-outcome $a$ from $\frac{1}{3}$ to
$\frac{1}{2}$. 
\end{proofof}

\begin{proofof}{Theorem~\ref{newpseudo}}
We mimic the proof of Theorem~\ref{pseudo}. 
For all $j$, and all $r\in[m_j]$, fix some outcome $o^j_{r}\in O^j_r$ form the
indifference class $O^j_r$ of agent $j$.  

We prove part (i) first.
Our randomized mechanism $\cM$ does the following. On input
$\bsucceq$, it returns $f(\bsucceq)$ with probability $(1-\eps)$; with probability $\eps$,
it picks a random agent $a$ and returns outcome $o^a_r$ with probability $\eps^a_r/\eps$, 
where $\eps^a_1>\cdots >\eps^a_{m_a}>0$ are such that $\sum_{r=1}^{m_a}\eps^a_r=\eps$.   

Clearly, $\cM$ $\eps$-implements $f$. To prove lex-truthfulness, fix
an agent $j$ and consider any $\bsucceq'=(\succeq'_j,\bsucceq_{-j})$, where 
$\succeq'_j \neq\succeq_j$. 
Let $o=f(\bsucceq)$ and let $o'=f(\bsucceq')$. 
Let $O^j_{t_1}$ and $O^j_{t_2}$ be the indifference classes of $j$ containing outcomes $o$
and $o'$ respectively. 
Also, let $\bp=\cM(\bsucceq),\ \bq=\cM(\bsucceq')$.  

Considering indifference classes in the preference order of $\succeq_j$,
let $O^j_r$ be the first indifference class such that
$\pos{\succeq_j}{o^j_r}<\pos{\succeq'_j}{o^j_r}$.  
Let $o''=o^j_r$.
By pseudomonotonicity of $f$, we know that $o\succeq_j o'$ or $o''\succ_j o'$. In the
latter case, we have $\bp(O^j_t)-\bq(O^j_t)\geq 0$ for all $t$ such that 
$o^j_t\succeq_j o''$, and $\bp(O^j_r)-\bq(O^j_r)>0$, so we are done. 

If $o\succeq_j o'$, and $O^j_{t_1}=O^j_{t_2}$ or $o''\succeq_j o$, then the above argument
still holds. So suppose $o\succ_j o'$ and $o\succ_j o''$.  
Then $\bp(O^j_t)-\bq(O^j_t)\geq 0$ for all $t$ such that $o^j_t\succeq_j o$ and
$\bp(O^j_{t_1})-\bq(O^j_{t_1})>0$, so again we are done. 

\medskip
Now consider part (ii).
Let $\cM$ be an LT mechanism that $\ve$-LT implements $f$, where $\ve<\frac{1}{2}$. 
Suppose for a contradiction, there is some agent $j$, and
$\bsucceq=(\succeq_j,\bsucceq_{-j})$ and $\bsucceq'=(\succ'_j,\bsucceq_{-j})$ such that
$o=f(\bsucc)$ and $o'=f(\bsucc')$ violate the conditions for pseudomonotonicity. 
Then we have $o'\succ_j o$ and for every outcome
$o''\succ_j o'$, we have $\pos{\succ_j}{o''}=\pos{\succ'_j}{o''}$. 
Let $O^j_{t_1}$ and $O^j_{t_2}$ be the indifference classes of $j$ containing outcomes $o$
and $o'$ respectively. 
Let $\bp=\cM(\bsucceq)$ and $\bq=\cM(\bsucceq')$.

Since $\cM$ $\ve$-LT implements $f$, we have
$\bp(O^j_{t_2})\leq\ve$ and
$\bq(O^j_{t_2})\geq\bq(o')\geq 1-\ve$, so $\bp(O^j_{t_2})<\bq(O^j_{t_2})$. 
Let $O^j_{r_1},\ldots,O^j_{r_\ell}$ be the indifference classes of $j$ that are ranked
higher than $O^j_{t_2}$ under $\succeq_j$, ordered so that 
$o^j_{r_1}\succ_j o^j_{r_2}\succ_j\cdots\succ_j o^j_{r_\ell}$.
Since $\pos{\succ_j}{o''}=\pos{\succ'_j}{o''}$ for all $o''\succ_j o'$, 
$O^j_{r_1},\ldots,O^j_{r_\ell}$ are also the indifference class of $j$ that are ranked
higher than $O^j_{t_2}$ under $\succeq'_j$, and 
we have $o^j_{r_1}\succ'_j o^j_{r_2}\succ'_j\cdots\succ'_j o^j_{r_\ell}$.
As in the proof of Theorem~\ref{pseudo}, this implies that 
$\bp(O^j_{r_t})=\bq(O^j_{r_t})$ for all $t=1,\ldots,\ell$, which contradicts the fact that
$\cM$ is lex-truthful. 
\end{proofof}

\section{Quality of known mechanisms for matching markets} \label{known-mat} \label{app:known-mat}
In this section, we investigate the rank approximation and lex-truthfulness of three
extensively studied mechanisms for matching markets. These are 
{\em random serial dictatorship} mechanism (RSD), Gale's {\em top-trading-cycle algorithm}
(TTCA), and the {\em probabilistic serial} mechanism (PS).  

\paragraph{Random Serial Dictatorship} Initially all items are marked unallocated. A
random permutation of agents is sampled and the agents are considered according to this
order. Each agent is allocated his best item among the unallocated items. This item
henceforth is marked allocated. 

\paragraph{Top Trading Cycle} This appears in a paper by Shapley and Shubik~\cite{SS74}
who attributed it to David Gale and is applicable when the number of items equals the
number of agents.  

The algorithm starts with an arbitrary assignment $\sigma$ of agents to items. This
assignment, which is called the initial endowment of agents,  is independent of the
preference orders of the agents. Subsequently, the agents will {\em trade} among
themselves to return the final allocation.

The algorithm then proceeds in rounds. Initially all agents are marked active.
In each round, one constructs a directed graph with the active agents as nodes. There is
an arc from agent $j$ to agent $j'$, if the item $\sigma(j')$ is the top choice of agent
$j$ among the items owned by the active agents, that is, the set $\{\sigma(j): j \textrm{
  active }\}$. Note that each agent has out-degree exactly $1$ (self loops are allowed and
counted as both out and in degree). Therefore, there exists at least one directed cycle in
the graph. A cycle (self loops are also cycles) is picked arbitrarily. For each arc
$(j,j')$ in the cycle, we allocate agents $j$ the item $\sigma(j')$. We mark all agents in
this cycle inactive and proceed to the next round. The algorithm stops when all agents are
marked inactive. 

\paragraph{Probabilistic Serial} This algorithm is due to  Bogomolnaia and
Moulin~\cite{BM01}. We describe the algorithm when the number of agents, $n$, equals the 
number of items, $m$.  

The algorithm first finds a {\em fractional matching}, that is, $x_{ij}$s for items $i$
and agents $j$ such that each $x_{ij} \geq 0$ and $\sum_{i\in[m]} x_{ij} = 1$ for all
agents $j$, and $\sum_{j\in[n]} x_{ij} = 1$ for all items $i$. By the Birkhoff-von Neumann
theorem, we can find a distribution on matchings such that the probability agent $j$ is
allocated item $i$ is exactly $x_{ij}$. This is the distribution returned by the
algorithm. 

The algorithm proceeds in rounds. Initially all $x_{ij}$'s are $0$. For any item $i$, we
denote its capacity as $\sum_{j\in[n]} x_{ij}$. Any item with capacity strictly less than
$1$ is called unallocated.  
In each round, every agent points to the best item among the unallocated items. For each
unallocated item $i$ we simultaneously raise $x_{ij}$ for all agents $j$ that point to
item $i$ at the {\em same} rate. This continues till some unallocated item's capacity
becomes $1$. At this point we end the round and proceed to the next round. The algorithm
terminates when all items are allocated. 
Since the procedure maintains that $\sum_{i\in[m]} x_{ij}$ is the same for all agents $j$,
at the end we end up with a fractional matching. 

\medskip 
A lot of literature exists on all three mechanisms; we point the reader to
surveys~\cite{SU08,AS10} for a detailed reference. Before stating the rank approximations
and lex truthfulness, let us mention some relevant known facts. RSD is strongly truthful
(in fact, it is universally truthful). 
TTCA is the only deterministic algorithm among the three. It is known that for {\em any}
initial endowment, the algorithm is truthful~\cite{SS74}.   
PS is known to be weakly truthful and not strongly truthful~\cite{BM01}. 
Bhalgat et al~\cite{BCK11} proved that the ordinal welfare factor (cf. \mySec{related}) of RS and PSD
are $1/2$, which is the best possible. The OWF of TTCA is $1/n$, as can be seen by
considering the input where all agents have the same preference list.

\paragraph{Rank Approximations of RSD, TTCA, and PS} 
We show that all three mechanisms have `bad' rank approximation. Rank approximation of TTCA
is at least $(n-1)$, while RSD and PS have rank approximation of
$\Omega(\sqrt{n})$. Recall that \mma has rank approximation $2$.

We know that TTCA is deterministic and truthful. It is also non-bossy; if an agent changes
his preference but still gets the same item, it implies that in the round when he gets
allocated an item, the cycle is the same as before, since no other changes
preferences. Therefore from Theorem~\ref{nobossy}, we get the rank approximation is at least
$(n-1)$. 

Consider an instance $\bsucc$ with $n$ agents and $n$ items with preference lists as
follows. Let $k = \ceil{\sqrt{n}}$. Agents $1\leq i\leq k$ have item $i$ as their top
choice. 
Agents $k+1\leq i\leq n$ have item $n$ as their top choice. These agents are now grouped
into $k$ groups $G_1,\ldots,G_k$, each group containing $n/k -1$ agents.  Agents in group
$G_\ell$ have item $\ell$ as their second choice. All the other choices of all agents is
immaterial and can be assumed to be arbitrary. Observe that $\maxrank_1(\bsucc) = k+1$. 

Let's first take RSD and calculate the expected number of agents who get their top
choice. With $1 - k/n$ probability, an agent $k+1\leq j \leq n$ shows up as the first
agent; he picks item $n$. No other agent $k+1\leq j\leq n$ gets his top choice. 
 Henceforth, for any $1\leq\ell\leq k$, the probability that a guy in $G_\ell$ shows up
 before agent $\ell$ is at least $1 - \frac{2k}{n}$. If that occurs, then agent $\ell$
 doesn't get his top choice. 
Therefore, the expected number of agents getting their top choice is at most $1 + 2k^2/n + o(n)$.
Thus, setting $k=\Theta(\sqrt{n})$, the rank approximation is $\Omega(\sqrt{n})$.

In PS, the calculation is easier. For $1\leq \ell \leq k$, we get $x_{\ell\ell} =
\frac{1}{n-k} + \frac{k}{n}\left(1 - \frac{1}{n-k}\right) = \frac{k-1}{n}$. For agent
$k+1\leq \ell \leq n$, we get $x_{n\ell} = \frac{1}{n-k}$. Therefore, the expected number
of agents getting their top choice in PS is precisely  
$1 + \frac{k(k-1)}{n}$. Setting $k=\Theta(\sqrt{n})$, we get that the rank approximation is $\Omega(\sqrt{n})$.

We do not know if the rank approximation for RSD and PS is $\Theta(\sqrt{n})$ or not.

\paragraph{Lex-Truthfulness of RSD, TTCA, and PS}
TTCA is truthful and RSD is universally truthful. Therefore, they are lex-truthful as well.
PS was shown to be weakly truthful by~\cite{BM01}. We show that in fact PS is lex-truthful as well.
The proof below is akin to the proof of weak truthfulness in~\cite{BM01} mentioned above; we include it for completeness.

\begin{theorem}
PS is lex-truthful.
\end{theorem}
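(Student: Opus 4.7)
The plan is to run PS in parallel under $\bsucc=(\succ_j,\bsucc_{-j})$ and $\bsucc'=(\succ'_j,\bsucc_{-j})$, where $\succ_j=(a_1,\ldots,a_m)$ is the true preference of an arbitrary agent $j$ and $\succ'_j\ne\succ_j$ is any misreport, and to trace how the continuous eating process defining PS begins to diverge. Let $\bp,\bq$ be the resulting lotteries and $p_i,q_i$ the marginal probabilities that $j$ is allotted item $i$ under the two runs; since $j$'s indifference classes in a matching market are indexed by his assigned item, these marginals are precisely the aggregated probability vectors of \mySec{indiff}. The goal is to show that if $(p_{a_\ell})_\ell \ne (q_{a_\ell})_\ell$, then at the first index $r$ of disagreement we have $p_{a_r}>q_{a_r}$.

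Define $T^*\in[0,1]$ to be the first time at which agent $j$ is eating different items in the two runs (setting $T^*=\infty$, with $\bp=\bq$, if no such time exists). Since the other agents' reports are unchanged, and since $j$'s behavior is the same function of the state whenever his reported-top-available item coincides under the two reports, both runs evolve identically on $[0,T^*)$; consequently the state (remaining stock of every item, every agent's current assignment) coincides at $T^*$, and $p_a=q_a$ for every item $a$ that $j$ has finished consuming before $T^*$. At time $T^*$, $j$ switches onto his $\succ_j$-top-available item $x$ under truth and onto his $\succ'_j$-top-available item $y\ne x$ under the misreport. Both $x$ and $y$ are available at $T^*$, and since every item strictly preferred to $x$ in $\succ_j$ is unavailable, $y$ cannot lie above $x$ in $\succ_j$, giving $x\succ_j y$. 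Moreover, any item above $x$ in $\succ_j$ that was not consumed by $j$ before $T^*$ must have been exhausted by other agents before $T^*$ --- identically in both runs --- and so contributes $p=q=0$. Hence the first coordinate (in $\succ_j$-order) on which $(p_{a_\ell})_\ell$ and $(q_{a_\ell})_\ell$ can differ is exactly $x$, reducing the problem to showing $p_x>q_x$.

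For this final step I would use a Bogomolnaia--Moulin-style phase-by-phase coupling of the two PS runs on $[T^*,1]$. Under truth, $j$ eats $x$ continuously at rate $1$ from $T^*$ until $x$ is depleted at some time $e^\sigma_x\le 1$, contributing $e^\sigma_x - T^*$ to $p_x$. Under the misreport, $j$ starts on $y\ne x$ at $T^*$ and may only reach $x$ later, at some time $T_x>T^*$, contributing $\max(0, e^{\sigma'}_x - T_x)$ to $q_x$; the claim is that the ``head start'' $j$ gains on $x$ under truth dominates, yielding $p_x>q_x$. The hard part will be controlling the propagation of $j$'s deviation: removing $j$ from $x$'s consumption slows $x$'s depletion, which in turn postpones when other agents would switch onto $x$ from higher items, and so on throughout the process. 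The BM argument absorbs this ripple by comparing, for every truthful upper-contour set $\{a_1,\ldots,a_s\}$, the cumulative consumption by $j$ across the two runs; what I need here is a mild refinement that extracts a strict inequality at the disputed item $x$ from the fact that $j$ spends a strictly positive duration on $y\ne x$ while $x$ is still available. Combined with the equalities on items above $x$ in $\succ_j$ from the previous paragraph, this gives the required lex-dominance.
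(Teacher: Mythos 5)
Your reduction matches the paper's exactly: you identify the first divergence time (the paper's $t_0$, your $T^*$), argue the two runs and hence the state coincide up to that time, observe that at $T^*$ agent $j$ switches to $x$ under truth and to some $y$ with $x\succ_j y$ under the lie, note that every item $j$ ranks above $x$ has $p=q$ (either both positive, if $j$ finished eating it before $T^*$, or both zero), and thereby reduce lex-dominance to the single strict inequality $p_x > q_x$ (the paper's $\bp(k)>\bq(k)$). All of this is sound and is precisely how the paper structures the argument.

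The gap is that you do not actually prove $p_x > q_x$; you describe it as ``the hard part,'' gesture at a Bogomolnaia--Moulin upper-contour-set coupling, and ask for a ``mild refinement'' yielding strictness. That refinement is the entire content of the proof. BM's weak-truthfulness argument, applied to the contour set ending at $x$, only delivers $\sum_{a\succeq_j x}p_a \geq \sum_{a\succeq_j x}q_a$, hence $p_x \geq q_x$ — never strictness — and strictness is exactly what lex-truthfulness requires over weak truthfulness. The paper closes this by tracking the size of the set of agents pointing at $k$ over time in both runs: letting $t^*,t'$ be the depletion times of $k$ under truth and lie, it handles $t'\le t^*$ immediately ($x_{jk}=t^*-t_0 > t'-t_1=x'_{jk}$), and for $t'>t^*$ it proves a pointwise comparison of pointing-set sizes (Claim~\ref{clm:mono-ps}: $|S'(t,k)|\ge|S(t,k)|-1$ for $t_0\le t<t_1$, $|S'(t,k)|\ge|S(t,k)|$ for $t_1\le t<t^*$, and $|S'(t,k)|\ge|S(t^*,k)|$ thereafter), integrates the eating rate to equate total consumption of $k$ in both runs, and extracts $t'-t_1 < t^*-t_0$, with a separate treatment of the degenerate case $|S(t^*,k)|=1$. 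Your propagation worry (``removing $j$ from $x$'s consumption slows $x$'s depletion\ldots'') is exactly the phenomenon this claim controls, and no part of that control is present in your write-up. Until that argument is supplied, the proposal establishes only the reduction, not the theorem.
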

\begin{proof}
Consider any preference profile $\bsucc$. By renaming items we may assume $\succ_j = (1,2,\ldots,n)$ for some agent $j$. Suppose agent $j$ misreports his preference as $\succ'_j \neq \succ_j$, and let $\bsucc' := (\succ'_j,\bsucc_{-j})$.
Let $k$ be the first position at which $\succ_j$ and $\succ'_j$ differ.
That is, for $r < k$, $\alt{\succ'_j}{r} = \alt{\succ_j}{r} = r$. Note that $j$ has `demoted' $k$ in the misreported preference, that is, $\pos{\succ'_j}{k} > k$.
Let $\bp$ and $\bq$ be the distributions over items that $j$ obtains on reporting $\succ_j$ and $\succ'_j$ respectively. Let $x$ and $x'$ be the respective fractional matchings.

Observe that since PS has a notion of time (since $x_{ij}$'s are incremented at a certain rate), we can define $x(t)$ as the assignment at time $t$. So $x(0) \equiv 0$. Let $t_0\geq 0$ be the time till which we have $x(t_0) \equiv x'(t_0)$. If $t_0$ is ill defined, then $x\equiv x'$ and so $\bp\equiv\bq$ and there's nothing to prove.
We must have that till time $t_0$, agent $j$ points to the same items in both runs, and right after that instant agent $j$ points to different items in the two runs. Say at $t_0$, agent $j$ pointed to item $k$ in the original run, and $k'$ in the new run. 
Observe  that all items $r < k$ have been completely allocated in both runs since $j$ is pointing to $k$ in the original run. Thus, $\bp(r) = \bq(r)$ for $r < k$ since $x(t_0) \equiv x'(t_0)$.

We claim $\bp(k) > \bq(k)$. This will show $\bp$ lexicographically dominates $\bq$.
To do so, we need to introduce some notation. Let $t^*$ and $t'$ be the times at which $k$ is completely allocated in the original and new run respectively. Let $t_1$ be the time at which $j$ points to $k$ in the new run. Observe $t_0 < t_1 \leq t'$. Also observe that $x_{jk} = t^* - t_0$ and $x'_{jk} = t' -t_1$.

Now, if $t' \leq t^*$, we get $x_{jk} > x'_{jk}$, and we are done. So we may assume $t' > t^*$.
For $t\geq t_0$, let $S(t,k)$ and $S'(t,k)$ be the set of agents pointing to item $k$ at time $t$. 
Observe that PS satisfies the following monotonicity condition: if an agent points to an item at time $t$, then he continues to do so till the item is fully allocated.
Using this, one can prove the following claim; we defer the proof to the end.

\begin{claim}\label{clm:mono-ps} 
For all $t_0 \leq t < t_1$, $|S'(t,k)| \geq |S(t,k)| - 1$, for $t_1 \leq t < t^*$, $|S'(t,k)| \geq |S(t,k)|$, and for $t^* \leq t < t'$, $|S'(t,k)| \geq |S(t^*,k)|$.
\end{claim}

Using the claim, we now show $x_{jk} > x'_{jk}$. Let $C$ denote the capacity of item $k$
at time $t_0$. We know that $C < 1$. Now, from the run of PS we get
\begin{equation}
\int_{t_0}^{t^*} |S(t,k)|dt \quad = \quad \left(1 - C\right) \quad = \int_{t_0}^{t'} |S'(t,k)|dt
\end{equation}

Using the claim above and rearranging, we get 
$$t_0 - t_1 \leq - \int_{t^*}^{t'} |S(t^*,k)|dt$$

Now suppose $|S(t^*,k)| = 1$, that is, in the original run only one guy points to item $k$. This must be agent $j$. This implies $|S(t,k)|=0$ for $t<t_0$, the time at which $j$ points to $k$. In particular, we get $C=0$, and thus $x_{jk} = 1$. We know that $x'_{jk'} > 0$ since $j$ points to $k'\neq k$ in the new run. Therefore, $x'_{jk} < 1$ since $\sum_{k\in I} x_{jk} = 1$. Thus, we may assume $|S(t^*,k)| > 1$, which implies that $t_0 - t_1 < - (t' - t^*)$. Thus, 
\begin{equation*}
x'_{jk} = t'-t_1 < t^* - t_0 = x_{jk}. \qedhere
\end{equation*}
\end{proof}

\begin{proofof}{Claim \ref{clm:mono-ps}} 
In fact, we claim that for every item $i \neq k$, the subset $S(t,i) \subseteq S'(t,i)$ for $t_0 \leq t < t'$. This can be proved by induction. Suppose the claim is true at some time; it is true at time $t_0$. The next interesting time $t$ is when some item is $i$ is completely allocated in one of the runs. By our assumption, this time $t$ occurs in the new run since $S'(t,i)\geq S(t,i)$ for $i\neq k$. 
At this point the agents pointing to $i$ point to different items increasing their corresponding $S'(t,\cdot)$s. The same occurs in the original run albeit at a later time say $t''$; however, by monotonicity property $S'(t'',i) \supseteq S'(t,i)$, and therefore $|S'(t'',i)| \geq |S(t'',i)|$.
For the item $k$, note that the above argument implies $|S'(t,i)\setminus k|\geq |S(t,i)\setminus k|$, and then after $t_1$, $j$ enters $S'(t,k)$ as well.
\end{proofof}

\end{document}